\documentclass[a4paper,UKenglish,cleveref, autoref, thm-restate]{lipics-v2021}
\usepackage{algorithm,algorithmic}
\usepackage{thm-restate}
\usepackage{todonotes}
\hideLIPIcs
\title{Approximating Pareto Sum via Bounded Monotone Min-Plus Convolution} %TODO Please add

\author{Geri Gokaj}{Karlsruhe Institute of Technology }{geri.gokaj@kit.edu}{https://orcid.org/0009-0002-7500-6848}{Partially supported by the Deutsche Forschungsgemeinschaft (DFG, German Research Foundation) - 462679611.}
\author{Marvin Künnemann}{Karlsruhe Institute of Technology }{marvin.kuennemann@kit.edu}{}{Partially supported by the Deutsche Forschungsgemeinschaft (DFG, German Research Foundation) - 462679611.}
\author{Sabine Storandt}{University of Konstanz}{sabine.storandt@uni-konstanz.de}{https://orcid.org/0000-0001-5411-3834}{}
\author{Carina Truschel}{University of Konstanz}{carina.truschel@uni-konstanz.de}{https://orcid.org/0009-0009-7582-7209}{Funded by the Deutsche Forschungsgemeinschaft (DFG, German Research Foundation) – Project-ID 251654672 – TRR 161.}
\authorrunning{Gokaj, Künnemann, Storandt, Truschel} %TODO mandatory. First: Use abbreviated first/middle names. Second (only in severe cases): Use first author plus 'et al.'

\Copyright{Jane Open Access and Joan R. Public} %TODO mandatory, please use full first names. LIPIcs license is "CC-BY";  http://creativecommons.org/licenses/by/3.0/

\ccsdesc[500]{Theory of computation~Problems, reductions and completeness}
\ccsdesc[500]{Theory of computation~Computational geometry} %TODO mandatory: Please choose ACM 2012 classifications from https://dl.acm.org/ccs/ccs_flat.cfm 

\keywords{computational geometry, fine-grained complexity, algorithm engineering} %TODO mandatory; please add comma-separated list of keywords

\category{} %optional, e.g. invited paper

\relatedversion{} %optional, e.g. full version hosted on arXiv, HAL, or other respository/website
\supplement{}

\supplementdetails[subcategory={Source Code}, cite={}, swhid={swh:1:dir:53715b7d04ec81c190b784371b4675bc28183aee}]{Software}{https://github.com/ggokaj/SoCG26-ParetoSum}

\acknowledgements{}%optional

\nolinenumbers %uncomment to disable line numbering

\newcommand{\rep}{\mathrm{rep}}

\EventEditors{John Q. Open and Joan R. Access}
\EventNoEds{2}
\EventLongTitle{42nd Conference on Very Important Topics (CVIT 2016)}
\EventShortTitle{}
\EventAcronym{CVIT}
\EventYear{2016}
\EventDate{December 24--27, 2016}
\EventLocation{Little Whinging, United Kingdom}
\EventLogo{}
\SeriesVolume{42}
\ArticleNo{23}
\begin{document}

\maketitle

\begin{abstract}
The Pareto sum of two-dimensional point sets $P$ and $Q$ in $\mathbb{R}^2$ is defined as the skyline of the points in their Minkowski sum. The problem of efficiently computing the Pareto sum arises frequently in bi-criteria optimization algorithms. Prior work establishes that computing the Pareto sum of sets $P$ and $Q$ of size $n$ suffers from conditional lower bounds that rule out strongly subquadratic $O(n^{2-\epsilon})$-time algorithms, even when the output size is $\Theta(n)$. Naturally, we ask: How efficiently can we \emph{approximate} Pareto sums, both in theory and practice? Can we beat the near-quadratic-time state of the art for exact algorithms?

On the theoretical side, we formulate a notion of additively approximate Pareto sets and show that computing an approximate Pareto set is \emph{fine-grained equivalent} to Bounded Monotone Min-Plus Convolution. Leveraging a remarkable $\tilde{O}(n^{1.5})$-time algorithm for the latter problem (Chi, Duan, Xie, Zhang; STOC '22), we thus obtain a strongly subquadratic (and conditionally optimal) approximation algorithm for computing Pareto sums.

On the practical side, we engineer different algorithmic approaches for approximating Pareto sets on realistic instances. Our implementations enable a granular trade-off between  approximation quality and running time/output size  compared to the state of the art for exact algorithms established in (Funke, Hespe, Sanders, Storandt, Truschel; Algorithmica '25). Perhaps surprisingly, the (theoretical) connection to Bounded Monotone Min-Plus Convolution remains beneficial even for our implementations: in particular, we implement a simplified, yet still subquadratic version of an algorithm due to Chi, Duan, Xie and Zhang, which on some sufficiently large instances outperforms the competing quadratic-time approaches.	
\end{abstract}

\section{Introduction}In multi-criteria decision-making processes and in algorithms for multi-objective optimization, a central computational task is to obtain and filter non-dominated solutions. This is essential in order to keep intermediate solution sets interpretable and computationally tractable. Often, partial solutions must be combined in the course of the algorithm, and non-dominated filtering must be applied to the resulting set. This core problem is known as \emph{Pareto sum}: Given two sets $P, Q \subset \mathbb{R}^2$ of non-dominated points\footnote{We call a set $S$ of points non-dominated, if there do not exist distinct vectors $s,s' \in S$ such that $s\leq s'$.} (also called skylines or Pareto sets), the Pareto sum $PS(P,Q)$ is defined as the skyline of their Minkowski sum $P + Q := \{p+q \mid ~p \in P, q \in Q\}$.

\begin{figure}
    \centering
    \includegraphics[width=0.48\linewidth]{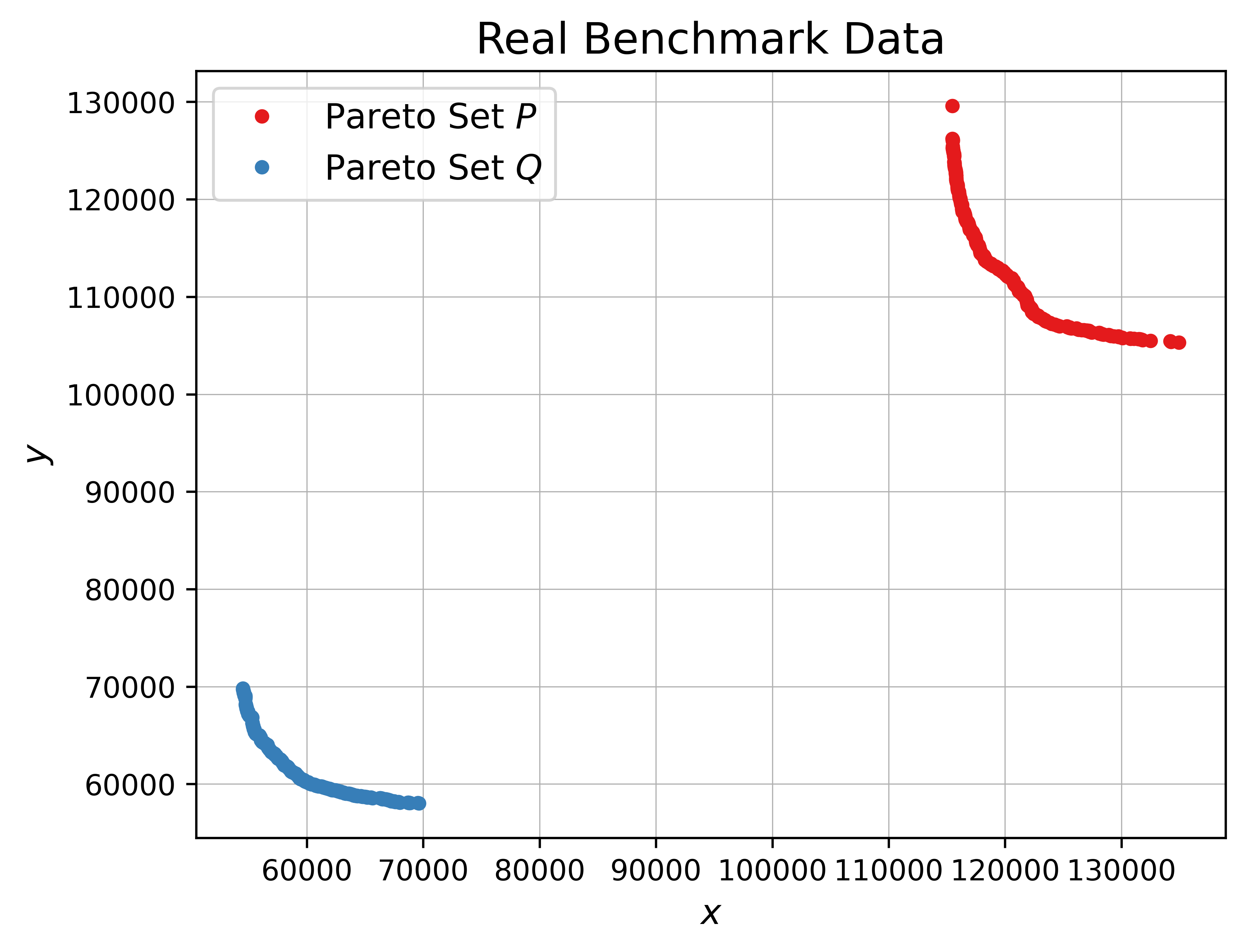} \hfill
    \includegraphics[width=0.48\linewidth]{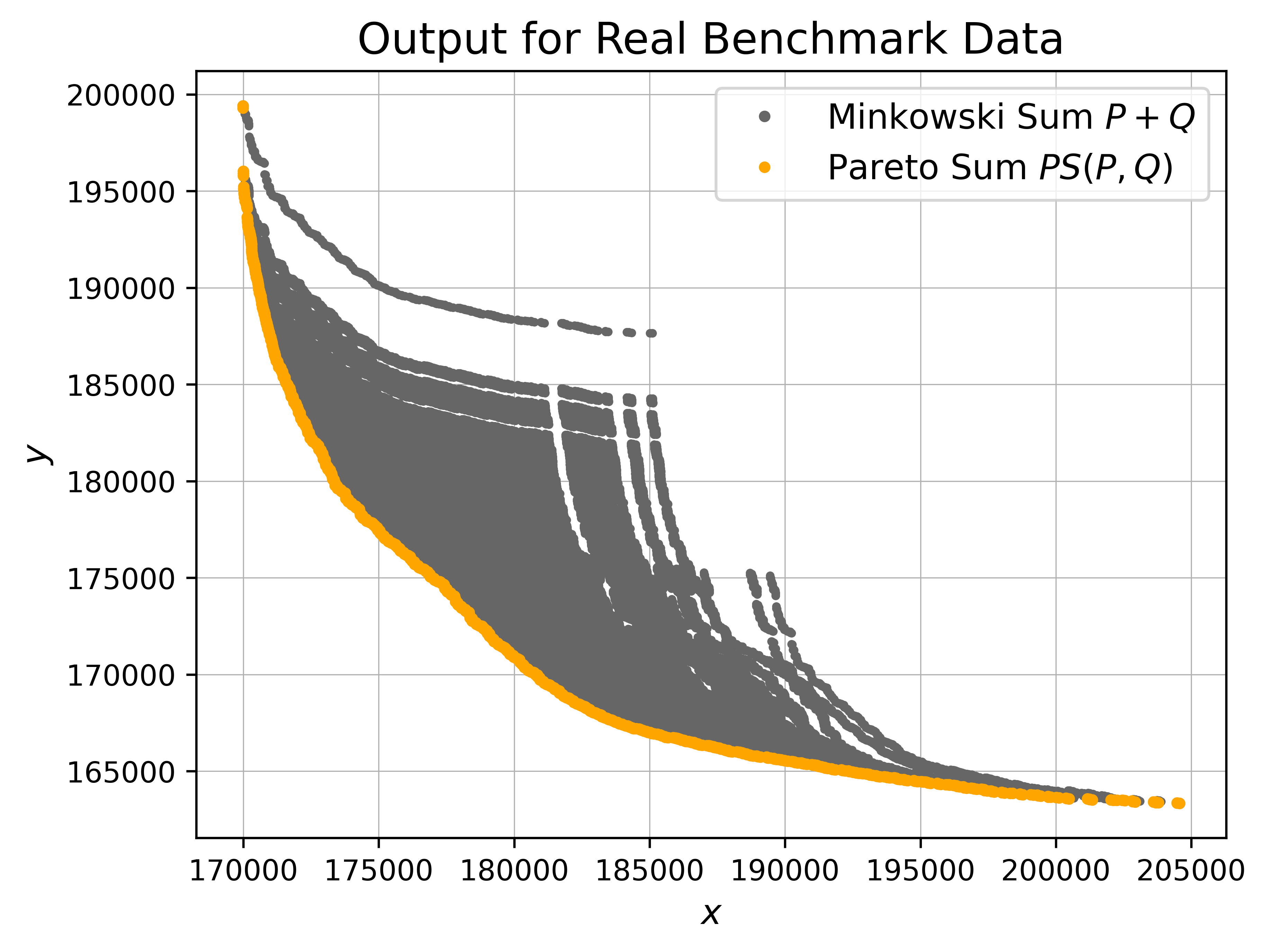}
    \caption{Two Pareto sets $P,Q$ on the left arising from bi-criteria route planning. Depicted on the right is the Minkowski sum $P+Q$ with the Pareto sum $PS(P,Q)$.}
    \label{fig:exampleIntro}
\end{figure}

Recently, exact algorithms have been developed for Pareto sum computation with a running time in $\tilde{O}(n^2)$ \cite{FunkeHSST24LB}. Naturally, the question arises whether this algorithm admits the best-possible time complexity for this problem or whether there is hope for designing asymptotically faster ones. The study of fine-grained complexity has emerged as a powerful framework for understanding the algorithmic complexity of problems within the class P. Using fine-grained reductions, the conjectured intractability of one key problem can be transferred to another problem, yielding a conditional lower bound on how fast the other problem can be solved. For the Pareto sum problem, it was recently shown that an algorithm with a running time in $\mathcal{O}(\min\{n^2,nk \}^{1-\delta}) $ for $\delta > 0$,  where $k$ denotes the output size, would disprove the 3-SUM hypothesis \cite{GokajKST25}. The conditional lower bound matches the complexity of known algorithms (up to subpolynomial factors), which renders the existence of subquadratic algorithms for Pareto sum (even with linear output sizes) unlikely. 

However, in practical applications, Pareto sum instances of substantial input  size $n$  and with output sizes $k \in \Omega(n)$   frequently arise.  On such instances, exact computation takes at least quadratic time,  which is typically impractical. This motivates our study of approximation algorithms for Pareto sum. We define a point set $\tilde{S}$ to be an approximation of $S:=PS(P,Q)$ if for each element $s$ in $S$ there is a point $\tilde{s}$ in $\tilde{S}$  of ``similar quality'' (specifically, $\tilde{s} \le s + (\Delta,\Delta)$) and each $\tilde{s}\in \tilde{S}$ is indeed close to a point in $P+Q$ (specifically $p+q\le \tilde{s} \le p+q+(\Delta,\Delta)$ for some $p\in P, q\in Q$). This provides an additive approximation guarantee. This definition caters well to many practical applications of Pareto sum computation.  For example, in bi-criteria route planning, the goal is to find the set of Pareto-optimal paths between the source and the target node with respect to two metrics. Route planning techniques for this setup heavily rely on Pareto sum computations and the number of resulting skyline points is oftentimes huge \cite{FunkeHSST24LB,goldin2021approximate, truschel2025multi}. Thus, the goal is to reduce the number of Pareto-optimal paths to be considered within the algorithm and also as an output for the user. Multiplicative approximation algorithms tend to not work well in this scenario, because for paths  with a value of zero or close to zero in one metric, even slightly suboptimal paths induce huge approximation factors. Additive approximations do not suffer from the same issue and are also much easier to interpret for a user (for example, if it is guaranteed that the reported route has at most 2 minutes of extra driving time). 
In this paper, we propose an efficient additive-approximation algorithm that allows one to trade running time against solution quality. We rigorously analyze its theoretical properties and also provide an extensive empirical study of its performance on synthetic and real data sets.

\subsection{Related Work}
The Pareto sum problem has already been extensively studied from a theoretical and practical perspective.
A simple approach follows by computing the Minkowski sum explicitly and then applying general algorithms for skyline computation~\cite{kirkpatrick1985output,chen2012maxima}. In \cite{klamroth2024efficient}, the practical relevance of the problem was emphasized, and novel algorithms were proposed that  achieve a subquadratic space consumption for certain input set structures.  Algorithms with guaranteed output-sensitive space consumption were proposed in \cite{FunkeHSST24LB}. A simple Sort \& Compare algorithm was proven to run in $\mathcal{O}(n^2\log n)$ on input sets of size $n$, and a sweep search algorithm to run in $\mathcal{O}(nk + n \log n)$ where $k$ denotes the output size. Experiments on synthetic instances and instances stemming from bi-criteria route planning in road networks confirm that these algorithms outperform those that fully compute the Minkowski sum and also the ones proposed in \cite{klamroth2024efficient}. In \cite{konen2025parameterized}, Pareto sum computation is used as a subroutine for multiobjective spanning tree or TSP computation which uses a tree decomposition as basis.  For join nodes, an algorithm similar to Sort \& Compare is used to obtain the desired combined result. In \cite{GokajK25}, several additional fine-grained complexity results for Pareto sum are provided. This includes a quadratic lower bound for Pareto sum computation even for $k = \Theta(n)$ under the 3-SUM hypothesis (which is even more believable than the previous min-plus convolution quadratic lower bound by \cite{FunkeHSST24LB}, as the min-plus convolution hypothesis implies the $3$-SUM hypothesis). Moreover, they established Pareto sum as a complete problem for a wide class of logical and geometric problems, further adding interest to the problem from the side of complexity theory.
%Moreover, they  devise an algorithm for Pareto sum computation that runs in $\tilde{O}(n^2)$ also for the higher-dimensional variant of the problem where $P, Q \subset \mathbb{R}^d$ for $d \geq 2$.
%\marvin{I wouldn't say that the higher-dimensional algorithms are particularly interesting. but we should mention the role for understanding a general class of additive problems.}

%\geri{need to move this}
%The first breakthrough for the monotone bounded min-plus convolution problem was by Chan and Lewenstein by making use of additive combinatorics techniques \cite{ChanL15}.

%The runtime has then been improved by Chi, Duan, Xie and Zhang by using FFT and hashing as a main technique \cite{ChiDX022}.
%The min-plus convolution problem has seen applications for the Tree sparsity problem \cite{MuchaW019,BackursIS17}, and the monotone bounded case has recently seen applications for the knapsack problem \cite{BringmannC22,BringmannD024}.
%Connections between bounded range and approximation algorithms have turned out useful before \cite{BringmannGK24, MuchaW019,BringmannN21}.

\subsection{Our Contributions}
In this paper, we explore the degree to which the quadratic-time bottleneck for the Pareto sum problem -- which is observable both in practice~\cite{FunkeHSST24LB} and substantiated by conditional lower bounds~\cite{FunkeHSST24LB,GokajK25, GokajKST25} -- can be circumvented by additively \emph{approximating} Pareto sums. To this end, in Section~\ref{sec:approximation}, we formally introduce a natural relaxation of Pareto sums to $\Delta$-approximate Pareto sums: intuitively, $\tilde{S}$ is an approximation of the true Pareto set $S$ of $P,Q$ if for any $s\in S$, we have an approximate point $\tilde{s}$ of ``similar quality'' (specifically, $\tilde{s} \le s + (\Delta,\Delta)$). Conversely, any such $\tilde{s}\in \tilde{S}$ is of similar quality to some actual point in $P+Q$ ("i.e.", there are $p\in P, q\in Q$ with $p+q\le \tilde{s} \le p+q+(\Delta,\Delta)$).\footnote{In fact, we often even achieve the stronger guarantee that $\tilde{S}\subseteq P+Q$, see Definition~\ref{def:approx}.}

Crucially, our results rely on establishing a certain fine-grained \emph{equivalence} between approximating Pareto sums and Bounded Monotone Min-Plus Convolution, the latter of which has received increasing interest in recent years~\cite{ChanL15,ChiDX022,BringmannC22,BringmannD024, FengJ25}. This leads to the following theoretical results:
\begin{itemize}
    \item For any sets $P,Q$ of at most $n$ points in $[0, W]^2$, we can compute an $\epsilon W$-approximate Pareto sum $\tilde{S}$ in time $\tilde{O}(n+(1/\epsilon)^{1.5})$. This leverages a remarkable algorithm due to Chi, Duan, Xie and Zhang~\cite{ChiDX022} solving Bounded Monotone Min-Plus Convolution in strongly subquadratic time $\tilde{O}(n^{1.5})$.
    \item Conversely, we show that any $\tilde{O}(n+(1/\epsilon)^{1.5-\delta})$-time algorithm for computing an $\epsilon W$-approximate Pareto sum would give a polynomial-factor improvement for Bounded Monotone Min-Plus Convolution. This establishes a fine-grained equivalence of the approximate Pareto sum and Bounded Min-Plus Convolution. Some approximation problems \cite{BringmannN21,BringmannGK24} are known to be fine-grained equivalent to min-plus convolution, but this appears to be the first such connection for bounded monotone min-plus convolution. 
    \item We observe that slightly stronger approximation guarantees can be obtained if a \emph{witness-reporting} version of Bounded Monotone Min-Plus Convolution can be achieved. Indeed, we show that such witnesses can be reported in strongly subquadratic time, by adapting a $\tilde{O}(n^{1.6})$-time algorithm for Bounded Monotone Min-Plus Convolution~\cite{ChiDX022} using techniques due to Alon and Naor~\cite{AlonN96}.
\end{itemize}

Subsequently, we observe how beneficial our relaxation to approximate Pareto sums can be on realistic input sets. To this end, we empirically fine-tune and evaluate a number of different approaches, leading to the following results:
\begin{itemize}
\item On synthetic and realistic inputs, we can approximate Pareto sums significantly more efficiently than computing them exactly. E.g., on generated data sets with $10^6$ points and coordinates in $[0,2\cdot 10^6]$, computing an additive $(\Delta=20)$-approximation saves a computation time of up to three orders of magnitude. Furthermore, the size of the returned sets reduces significantly with coarser approximation parameters. Thus, we enable an effective choice of running time/solution size vs. approximation quality.
\item We focus on efficiently solving the \emph{Bounded} Pareto sum problem, which denotes the special case in which all coordinates are integer values in $\{0,\dots, W-1\}$ -- such instances occur naturally in our approximation algorithms (after a scaling step), but are of independent interest in bi-criteria optimizations with integral cost functions (e.g., in routing). Here, we empirically evaluate and compare several algorithmic approaches: (1) a natural adaptation of the Sort \& Compare method of~\cite{FunkeHSST24LB} to BucketSort \& Compare, (2) a convex-pruning algorithm inspired by a fast algorithm for Near-convex Min-Plus Convolution~\cite{BringmannC23}, and (3) an algorithm based on a $\tilde{O}(n^{1.6})$-time algorithm for Bounded Monotone Min-Plus Convolution~\cite{ChiDX022} (which we will refer to as CDXZ algorithm). Perhaps surprisingly, we observe that depending on the class of inputs, either one of these approaches may outperform the others. This indicates that the (theoretical) connection between Pareto sums and Bounded Monotone Min-Plus convolution can also be exploited in practice to obtain practical algorithms (depending on the instances). 
\end{itemize}

We feel that our results add practical evidence (and extend theoretical evidence, e.g. \cite{MuchaW019,BringmannN21,Bringmann24}) that studying the connections of geometric (and other) problems with certain formulations of convolution-type problems is beneficial for algorithmic research.

%Changes to reduction section

\section{Bounded Pareto Sum}
In this section, we prove our core reduction which is the basis for developing new algorithms for (approximate)  Pareto sum computation. Throughout the paper, we abbreviate $[n]:=\{0,1,\dots,n-1,n\}$ and use the interval notation $[a,b]:=\{a,a+1, \dots,b-1,b\}$ for some $a,b \in \mathbb{Z}$, with $a\leq b$.
We say that a point $(x,y)$ dominates another point $(x',y')$, if $x \leq x'$ and $y' \leq y$ with at least one coordinate being strictly smaller. A point set is called a Pareto set if it contains no dominated elements. For a point $p$, we denote by $p.x$ and $p.y$ the $x$ and $y$-coordinate of $p$ respectively. The Pareto front of a set $S$, consists of the non-dominated points in $S$. We are now ready to define the problems of interest.
\begin{definition}[Bounded Pareto Sum]
Given Pareto sets  $P, Q$ of size $n$ with integer coordinates in $[W]^2$, compute the set $S$ of non-dominated points in $P + Q := \{p+q \mid ~p \in P, q \in Q\}$.
\end{definition}
We remark that for bounded Pareto sum, we always have $W \geq n-1$, as in a Pareto set the coordinates per dimension all need to be different. 
\begin{definition}[Bounded Monotone Min-Plus Convolution]
Given two monotone decreasing arrays $A$ and $B$ of size $n$ with positive entries bounded by $W=O(n)$. Compute an array $C$, where for each $k \in [2n-2]$, we have $C[k]:=\min_{i \in [n-1]} \left( A[i]+B[k-i] \right)$.
\end{definition}
The following reduction is also illustrated in Figure \ref{fig:ps_example} on a small Pareto sum instance\footnote{A similar construction has been noted by Chan and Lewenstein for the special case that $P$ and $Q$ are connected sets "i.e." all points in $P$ and $Q$ have a neighboring point with $L_1$ distance 1; see \cite[Remark~3.4]{ChanL15}}. 
\begin{restatable}{theorem}{reduce}
\label{thm:reduce}
    An instance of bounded Pareto Sum with  sets  $P, Q$ of size  $n$ and entries $(x,y) \in [W]^2$  can be reduced  in time $\mathcal{O}(W)$ to  an instance of bounded monotone min-plus convolution of two sequences $A, B$ of size $W$ with entries in $[W]$. \label{thm:reduction}
\end{restatable}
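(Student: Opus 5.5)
We will encode each Pareto set as a monotone decreasing ``staircase'' function of the $x$-coordinate, and read the Pareto sum off the min-plus convolution of the two staircases. For $P$, define the array $A$ on indices $i\in[W]$ by
\[
A[i] := \min\{\, p.y \mid p\in P,\ p.x \le i \,\},
\]
with the convention that $A[i]:=W$ whenever no point of $P$ has $x$-coordinate at most $i$. Define $B$ from $Q$ in the same way. Both arrays are monotone non-increasing, have $W+1=O(W)$ entries, and take values in $[W]$. If strictly decreasing or positive entries are needed, we add the shift $W-i$ and a constant offset; this keeps the bound at $O(W)$ and can be undone in the output. Since $P$ and $Q$ are Pareto sets, we can bucket their points by $x$-coordinate and compute both arrays with a single prefix-minimum sweep, in $O(n+W)=O(W)$ time, using $W\ge n-1$.

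The key step is the identity
\[
C[k]=\min_{i+j=k}\bigl(A[i]+B[j]\bigr)=\min\{\, s.y \mid s\in P+Q,\ s.x\le k \,\}.
\]
For ``$\le$'': take $p+q$ with $p.x+q.x\le k$. Set $i=p.x$ and $j=k-i\ge q.x$. Then $A[i]\le p.y$ and $B[j]\le q.y$. For ``$\ge$'': the value $A[i]$ is attained by some $p$ with $p.x\le i$, and $B[j]$ by some $q$ with $q.x\le j$. Hence $p+q$ has $x$-coordinate at most $k$ and $y$-coordinate equal to $A[i]+B[j]$. The sentinel value $W$ has to be handled with care: we make it large enough, for instance $2W+1$ after rescaling, so that sentinel sums are never smaller than any genuine sum, and the range bound stays $O(W)$.

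Finally, the Pareto sum is recovered from $C$ as
\[
S=\{\,(k,C[k]) \mid k\in[2W],\ C[k]<C[k-1]\,\},
\]
with $C[-1]:=\infty$. A point of $P+Q$ is non-dominated exactly when its $y$-value strictly drops the prefix minimum at its own $x$-coordinate. This extraction takes linear time, so it belongs to the post-processing side of the equivalence, while the reduction itself consists only of building $A$ and $B$ in $O(W)$ time.

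I expect the main obstacle to be reconciling the exact format required by the definition of Bounded Monotone Min-Plus Convolution — positive entries, possibly strict monotonicity, and arrays of equal length — with the sentinel values for empty prefixes. The plan is to fix an offset and a linear tilt by $-i$ that preserve the argmin structure, since the tilt contributes the same amount $-k$ to every term $A[i]+B[k-i]$, and then to check that the recovered points are unaffected.
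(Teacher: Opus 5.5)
Your construction is the same as the paper's. The paper also builds $A[i]=\min\{p.y \mid p\in P,\ p.x\le i\}$ from default values followed by a prefix-minimum sweep, and it extracts the Pareto sum as the strict drops $C[k]<C[k-1]$. Your correctness argument differs in packaging. The identity $C[k]=\min\{s.y \mid s\in P+Q,\ s.x\le k\}$ is a cleaner way to say what the paper proves by repeatedly moving a pair $(i,j)$ back to the original points that determined $A[i]$ and $B[j]$. Characterizing Pareto points as the strict drops of this prefix minimum then gives both inclusions at once.

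The one step that fails as written is the sentinel handling. With sentinel $2W+1$, every index $k<\min_{p\in P}p.x+\min_{q\in Q}q.x$ has $C[k]\ge 2W+1$. So whenever that threshold is positive, your extraction with $C[-1]:=\infty$ outputs the spurious point $(0,C[0])$. It also outputs further spurious points wherever these sentinel sums drop, e.g.\ $C[k]=2W+1+B[k]$ when only $P$ is offset. You would additionally have to discard every $k$ with $C[k]>2W$.

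The paper avoids the issue entirely. It first translates the $x$-coordinates of $P$ and of $Q$ so that each set has a point with $x=0$, and undoes this after extraction. Then every prefix is nonempty, so the prefix minimum always overwrites the default value $W$ and no sentinel survives. The entries also stay in $[W]$, as the statement literally claims, whereas your sentinel and tilt only give $O(W)$.

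There is also a small index issue in your ``$\le$'' direction: $j=k-p.x$ can exceed $W$ when $k>W$. Choose $i=\max\{p.x,\,k-W\}$ and $j=k-i$ instead; this keeps $i,j\in[W]$ while still giving $i\ge p.x$ and $j\ge q.x$.
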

\begin{proof}
    We first perform a transformation of the $x$-coordinates of $P$ and $Q$ such that the first points in $P$and $Q$ have an $x$-coordinate of $0$. This can be achieved by simply subtracting the minimum $x$-coordinate value from each entry in the respective Pareto set. Note that this transformation does not affect the dominance relationship of any pair of points in the Minkowski sum and also the transformation can be easily remedied after having computed the Pareto sum by performing the transformations backwards.
    
    Next, we construct a min-plus convolution instance as follows. We initialize two sequences $A, B$ of size $W$ with a default value of $W$ for each entry. Then, for each $p \in P$ with $i = p.x$, we set $A[i] = p.y$. Similarly, for each $q \in Q$ with $i = q.x$, we set $B[i]= q.y$. To make $A$ and $B$ monotone, we iterate through each of them from left to right and  replace an entry at position $i$ with that of the entry at $i-1$  if the latter is smaller. Clearly, this construction  takes $\mathcal{O}(W)$ time and results in a min-plus  convolution instance with monotonically decreasing input sequences whose entries are in $[W]$.
    
    Now, let $C$ denote the min-plus convolution of $A, B$. We construct the Pareto Sum $S^*$ of $P, Q$ as a set of pairs $(k, C[k])$ for all $k \in [2W]$ where $k=0$ or  $C[k] < C[{k-1}]$. This step again requires $\mathcal{O}(W)$ time.
    To prove correctness, we first show that the set $S^*:=\{(k, C[k])|k \in [2W]\}$ is a superset of the correct Pareto Sum $S$. By definition, $C[k]$ equals $\min_{i+j = k} A[i]+B[j]$. Let now $s = p+q$ be an element of the Pareto Sum of $P, Q$ with $s.x = k$. We want to show that $C[k] = s.y$.
    By definition of the Pareto sum, $p.y+q.y$ is the smallest $y$-coordinate among all points in $P+Q$ with an $x$-coordinate of $k$. We need to prove that there are no pairs $A[i], B[j]$ with $i+j = k$ and $A[i]+B[j] < s.y$. If for all relevant $A[i]$ and $B[j]$ pairs, we have $(i, A[i]) \in P$ and $(j, B[j]) \in Q$ this follows from the definition of the Pareto Sum. Otherwise, assume w.l.o.g. that $(i, A[i]) \notin P$. Then we have  $B_i = p'.y$ where $p'$ has the maximum $x$-coordinate smaller than $i$ in $P$ (as a result of the monotonicity enforcing step).  If $A_i = p'.y$, we know that $A[{p'.x}] = A[i]$ with $p'.x < i$. As $B$ is monotonically decreasing, we thus have $B[{p'.x}] + B[{k-p'.x}] \leq A[i] + B[j]$. This argument can be iterated as long as either the current element of $A$ or that of $B$ got their final values only in the monotonicity enforcing steps. Therefore, we always end up with $A[i^*], B[{j^*}]$ where $i^* + j^* = k$, $A[{i^*}] + B[{j^*}] \leq A[i] + B[j]$ and $(i^*, A[{i^*}]) \in P$ as well as $(j^*, B[{j^*}]) \in Q$. It follows that $C[k] = p.y$ and thus also $S^* \supseteq S$.
    
    It remains to show that we prune exactly the elements from $C$ that are not in $S$.  For a pair $(k, C[k])$, we observe the following:  If $C[k] = A[i]+B[j]$ and w.l.o.g. $(i, A[i]) \notin P$ then there is a point $p \in P$ that dominates $A[i]$. Thus, $C[{p.x+j}] \leq  C[k]$ with $p.x + j < i + j$. As $C$ is monotonically decreasing, we know that $(k, C[k])$ gets pruned. The only remaining pairs are those with $C[k] = A[i]+B[j]$ where $(i, A[i]) \in P$ and $(j, B[j]) \in Q$. If for such a point we have $C[k] \geq C[{k-1}]$, we know that $(k-1, C[{k-1}])$ dominates $(k,C[k])$ and therefore $(k,C[k])$ is rightfully removed. Thus, after the pruning, only points $(k, C[k]) = p+q$ remain in the set $S$ which are non-dominated by any pair $p'+q'$ with $p' \in P,~q' \in Q$. $S^*=S$ follows. 
\end{proof}

\begin{figure}
    \centering
    \includegraphics[width=0.5\columnwidth]{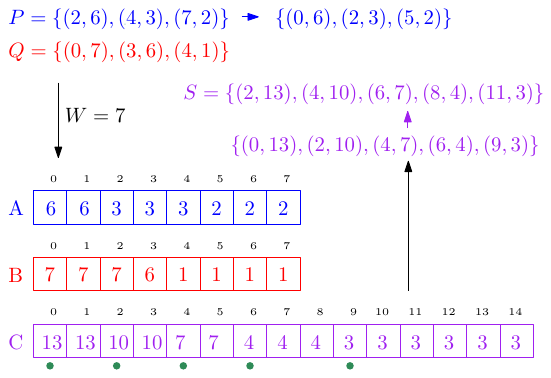}
    \caption{Reduction from Pareto sum to bounded monotone min-plus convolution and back-transformation of the solution on a small example.}
    \label{fig:ps_example}
\end{figure}
According to our theorem, if bounded monotone min-plus convolution can be solved in time $T(n,W)$ for arrays of size $n$ with entries in $[W]$, then we can solve bounded Pareto Sum  in time $O(W+T(W,W))$. This makes algorithms for bounded monotone min-plus convolution applicable to Pareto Sum with little overhead to carry out the reduction and the back-transformation of the solution. By applying the algorithm of \cite{ChiDX022}, we get:
\begin{corollary}
Given two Pareto sets $P,Q$ of size $n$ with integer entries $(x,y) \in [W]^2$, we can compute the Pareto sum of $P$ and $Q$ in time $\tilde{O}(W^{1.5}).$
\end{corollary}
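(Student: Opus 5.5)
The corollary follows by composing Theorem~\ref{thm:reduce} with the Chi--Duan--Xie--Zhang algorithm~\cite{ChiDX022}, which solves bounded monotone min-plus convolution of two length-$m$ arrays with entries in $[O(m)]$ in time $\tilde{O}(m^{1.5})$. The plan has four steps.

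\textbf{Step 1 (normalize).} Since $P$ and $Q$ are Pareto sets with integer coordinates in $[W]^2$, we have $n \le W+1$. Reading and sorting the input therefore costs $O(W)$ time using bucket sort on the $x$-coordinates, which absorbs any $O(n\log n)$ overhead.

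\textbf{Step 2 (reduce).} Apply Theorem~\ref{thm:reduce} to obtain, in $O(W)$ time, two monotone decreasing arrays $A,B$ of length $W$ with entries in $[W]$. Their convolution $C$ determines $PS(P,Q)$: keep the pairs $(k,C[k])$ with $k=0$ or $C[k]<C[k-1]$, and undo the $x$-shift. This post-processing is again $O(W)$.

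\textbf{Step 3 (match the CDXZ input format).} The arrays have length $m=W$ and entries bounded by $W=O(m)$, so the hypothesis ``entries bounded by $O(n)$'' in our definition of Bounded Monotone Min-Plus Convolution holds. Two minor mismatches may need handling:
\begin{itemize}
\item If the algorithm expects monotone increasing arrays, replace each entry $A[i]$ by $W-A[i]$, and likewise for $B$. Reversing the index order instead would break the convolution structure, so we use this complementation: it turns min-plus into a max-plus problem, and CDXZ handles max-plus symmetrically. Alternatively, $C$ can be obtained as a min-plus convolution of the reversed sequences followed by reindexing.
\item If the algorithm expects strictly positive entries, add $1$ to every entry of $A$ and of $B$, and subtract $2$ from every entry of $C$ afterwards.
\end{itemize}
Both adjustments cost $O(W)$.

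\textbf{Step 4 (total time).} The overall running time is $O(W)+\tilde{O}(W^{1.5})=\tilde{O}(W^{1.5})$.

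The only real point of care is Step 3: checking that the orientation and range conventions of~\cite{ChiDX022} match our definition, or can be reached by these linear-time transformations. I expect this to be routine.
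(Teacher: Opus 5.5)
Your proposal is correct and follows the paper's route: the $O(W)$ reduction of Theorem~\ref{thm:reduce} composed with the $\tilde{O}(n^{1.5})$ algorithm of~\cite{ChiDX022} gives $O(W+T(W,W))=\tilde{O}(W^{1.5})$. One correction in Step~3: your claim that reversal "would break the convolution structure" is wrong, and it contradicts your own alternative. Reversing \emph{both} arrays preserves the structure, since the min-plus convolution of the reversed length-$m$ arrays is $C$ reversed via $k\mapsto 2m-2-k$, so reversal is the right orientation fix; complementing to $W-A[i]$ only produces max-plus convolution of increasing arrays, which after negation is the original problem again and gains nothing.
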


\section{Additive Approximation of Pareto sum} 
\label{sec:approximation}
In this section, we study additive approximation algorithms for Pareto sum with arbitrary real-valued inputs. 

To formalize our notion of approximation, 
observe that the Pareto sum of $P$ and $Q$ can be defined as the set $S$ satisfying the following three conditions: (1) $S$ dominates $P+Q$, "i.e."., for every $p\in P$ and $q\in Q$, there exists some $s\in S$ with $s \le p+q$, (2) $S \subseteq P+Q$ and (3) $S$ is a Pareto set, "i.e.", pairwise non-dominating. To obtain an \emph{approximate} Pareto set, it is only natural to relax the first condition by allowing an additive slack of $\Delta>0$ in the domination inequality. Formally, we arrive a the following notion of approximation:

\begin{definition}[$\Delta$-Approximative Pareto sum] \label{def:approx}
Let $P,Q ,\tilde{S} \subseteq \mathbb{R}^2$. We call $\tilde{S}$ a $\Delta$-approximation of the Pareto sum of $P,Q$  if the following conditions hold:
\begin{enumerate}
 \item for all $p\in P, q\in Q$, there exists some $\tilde{s} \in \tilde{S}$ with $\tilde{s} \le p+q+(\Delta,\Delta)$.
 \item $\tilde{S} \subseteq P+Q$,
 \item $\tilde{S}$ is a Pareto set.
\end{enumerate}
\end{definition}
Let us remark that we could have arrived at this definition via slightly adapting the notion of an approximative Pareto front by the authors in~\cite{LegrielGCM10}, which leans on the notion of Papadimitriou and Yannakakis \cite{PapadimitriouY00}:
Let $d_H(S,\tilde{S})$ denote the directed Hausdorff distance between $S$ and $\tilde{S}$ under some ``distance'' $d$, "i.e.", 
$d_H(S,\tilde{S}):= \max_{s \in S} \min_{\tilde{s} \in \tilde{S}}d(s,\tilde{s}).$

Following~\cite{LegrielGCM10}, we call a set $\tilde{S}$ a \emph{$\Delta$-approximation for the Pareto set $S$} if $d_H(S,\tilde{S}) \leq \Delta$. As a natural (asymmetric) ``distance'' measure $d(s,\tilde{s})$ that quantifies the disadvantage of taking $\tilde{s}$ over $s$ (w.r.t domination of other points), we define:
$d(s,\tilde{s}) = \max \{0,\tilde{s}.x - s.x , \tilde{s}.y -s.y\}.$
Note that this measure is a more suitable definition than using, e.g., $\|\tilde{s}-s\|_\infty$, which would unnaturally punish a point that is worse by $\le \Delta$ in $x$-direction,  but \emph{much better} (by $\gg \Delta$) in $y$-direction. 
With this notion, $\tilde{S}$ is a $\Delta$-approximative Pareto sum of $P,Q$, if $\tilde{S}$ $\Delta$-approximates $S$, (2) $\tilde{S}\subseteq P+Q$, and (3) $\tilde{S}$ is a Pareto set.
\begin{figure}[h]
    \centering
    \includegraphics[width=0.4\linewidth]{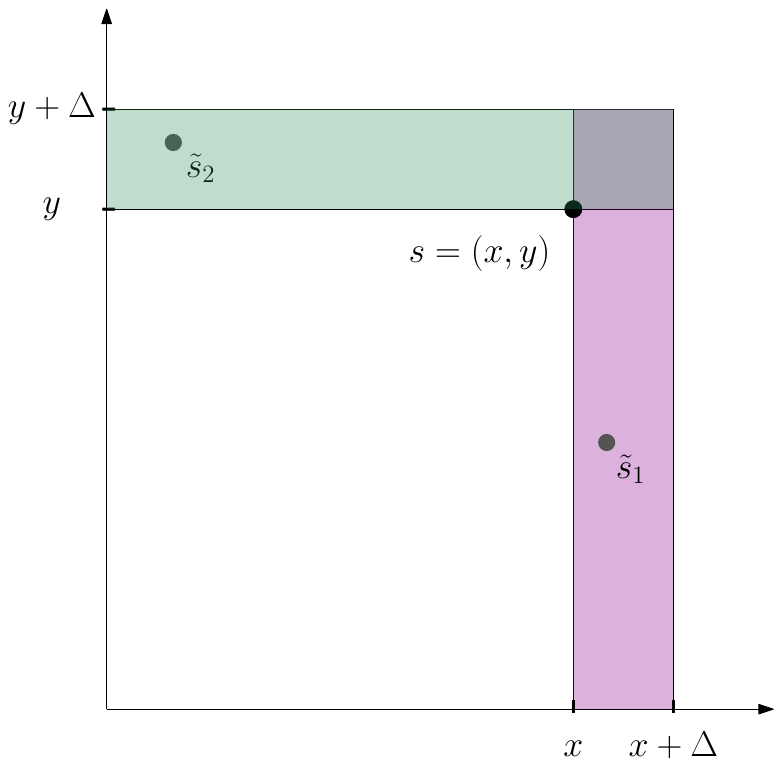}
    \caption{The points $\tilde{s}_1$ and  $\tilde{s}_2$ are examples of  points with $d(\tilde{s}, s) \leq \Delta$ for the shown $s \in S$. Note that if $s\in S$, then no point in $P+Q$ dominates $s$ ("i.e.", no point lies to the lower left of $s$).}
    \label{fig:approx}
\end{figure}

To obtain such an approximation, we proceed as follows: we first scale down the points of $P$ and $Q$ by dividing the $x$ and $y$-coordinates by a scaling parameter~$t$, creating the sets $P'$ and $Q'$. We additionally save a \emph{representative} $\rep(p')\in P,\rep(q')\in Q$  for each (rounded) point $p'\in P'$ and $q'\in 
Q'$ in the original sets (we choose the one with minimal $L_1$ norm but any representative works). Then we compute the Pareto sum $S'$ for the sets $P'$ and $Q'$ and determine, from each $s'=p'+q'\in S'$, an approximate point $\tilde{s}$ using the representatives of $p'$ and $q'$, which yields an approximate set $\tilde{S}$. As we shall prove, this simple idea turns out to be conditionally optimal, by devising a fine-grained equivalence with bounded $(\min,+)$-convolution. This algorithm is formalized in Algorithm \ref{alg:add_approx}, which takes as input $P, Q$ and a scaling factor~$t$, and returns a $2t$-approximation of the Pareto sum of $P$ and $Q$.
\begin{algorithm}
\caption{ApproximateParetoSum($P,Q,t$)}
\label{alg:add_approx}
\begin{algorithmic}
\STATE{$P' \gets \left \{ \left(\lfloor \frac{p.x}{t} \rfloor , \lfloor \frac{p.y}{t} \rfloor \right)\mid p\in P \right\}$}
\STATE{$Q' \gets \left \{ \left(\lfloor \frac{q.x}{t} \rfloor , \lfloor \frac{q.y}{t} \rfloor \right)\mid q\in Q \right\}$}
\FOR{$p' \in P'$}
\STATE{$\rep(p') \gets  \arg \min_{p \in P} \left \{\lVert p \rVert_{1} \mid p'= \left( \left  \lfloor   \frac{p.x}{t}  \right \rfloor, \left \lfloor   \frac{p.y}{t}  \right \rfloor \right)  \right \}$}
\ENDFOR
\FOR{$q' \in Q'$}
\STATE{$\rep(q') \gets  \arg \min_{q \in Q} \left \{\lVert q \rVert_{1} \mid q'=\left( \left  \lfloor   \frac{q.x}{t}  \right \rfloor, \left \lfloor   \frac{q.y}{t}  \right \rfloor \right)   \right \} $}
\ENDFOR
\STATE{$P'\gets $ Pareto front of $P'$, $Q' \gets$ Pareto front of $Q'$ }
\COMMENT{\emph{if necessary}}
\STATE{$S' \gets PS(P',Q')$, where $s'\in S'$ is given as $(p',q')$ with $s'=p'+q'$ and $p'\in P', q'\in Q'$}
\STATE{$Z \gets\{ \rep(p')+\rep(q') \mid (p',q') \in S'\}$}
\RETURN $\tilde{S} \gets $ Pareto front of $Z$
\end{algorithmic}
\end{algorithm}
\begin{lemma}
    For any $P,Q$ and $t>0$, let $S$ denote $PS(P,Q)$ and let $\tilde{S}$ denote the result of ApproximateParetoSum($P,Q,t$). Then $\tilde{S}$ is a $2t$-approximation of the Pareto sum of $P,Q$. %For any $c\in C$, there is $\tilde{c}\in \tilde{C}_t$ such that $c\leq \tilde{c} \le c+2\mathbf{t}$, where $\mathbf{t} = (t,t)$. 
    \label{lem:bounding_approx}
\end{lemma}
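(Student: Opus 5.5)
We check the three conditions of Definition~\ref{def:approx} for $\tilde{S}$. Conditions (2) and (3) are almost immediate from the construction. Every element of $Z$ has the form $\rep(p')+\rep(q')$ with $\rep(p')\in P$ and $\rep(q')\in Q$, so $Z\subseteq P+Q$. Since $\tilde{S}$ is a subset of $Z$, condition (2) follows. Condition (3) holds because $\tilde{S}$ is defined as the Pareto front of $Z$. The real content is condition (1), which we prove via two elementary rounding facts. Write $f(p):=(\lfloor p.x/t\rfloor,\lfloor p.y/t\rfloor)$ and $\mathbf{t}:=(t,t)$.

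\emph{Fact A:} for every real point $p$ we have $t\cdot f(p)\le p< t\cdot f(p)+\mathbf{t}$ componentwise.

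\emph{Fact B:} if $p'=f(p)$ for some $p\in P$, then $\rep(p')$ also satisfies $f(\rep(p'))=p'$. Hence by Fact A, $t p'\le \rep(p') < tp'+\mathbf{t}$, and similarly for $Q$.

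Now fix $p\in P$ and $q\in Q$, and set $p'=f(p)$, $q'=f(q)$. Because the Pareto sum $S'$ dominates $P'+Q'$, there is some $(a',b')\in S'$ with $a'+b'\le p'+q'$. If the optional step replacing $P',Q'$ by their Pareto fronts was performed, we first replace $p'$ by a point of the front of $P'$ dominating it, and likewise for $q'$. This only decreases $p'+q'$, and each front point is still the image of some original point, so it still has a representative. By Fact B,
\[
\rep(a')+\rep(b') < t(a'+b')+2\mathbf{t} \le t(p'+q')+2\mathbf{t} \le p+q+2\mathbf{t},
\]
where the last step uses Fact A. So $z:=\rep(a')+\rep(b')\in Z$ satisfies $z\le p+q+2\mathbf{t}$.

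Finally, $Z$ is finite, so $z$ is either in $\tilde{S}$ or dominated by some $\tilde{s}\in\tilde{S}$ with $\tilde{s}\le z$. In both cases we obtain $\tilde{s}\le p+q+(2t,2t)$, which is condition (1) with $\Delta=2t$.

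The main obstacle I expect is the bookkeeping around representatives. The argument needs that every point of $S'$ decomposes into points $a',b'$ that possess representatives. This relies on $S'$ being given as pairs $(p',q')$ with $p'\in P'$, $q'\in Q'$, and on the Pareto-front pruning of $P',Q'$ keeping only genuine images of input points. Once this is in place, and using that the choice of representative only needs to land in the correct cell (minimal $L_1$ norm is irrelevant), the rest is just the two rounding inequalities.
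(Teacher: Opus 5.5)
Your proof is correct and follows essentially the same route as the paper. The paper also uses the same two rounding facts to bound $\rep(\tilde p)+\rep(\tilde q)\le t(\tilde p+\tilde q)+2(t,t)\le t(p'+q')+2(t,t)\le p+q+2(t,t)$, then notes that taking the Pareto front of $Z$ preserves the domination condition while $\tilde S\subseteq Z\subseteq P+Q$; your explicit handling of the optional Pareto-front step on $P',Q'$, and of why each representative lies in the correct rounding cell, only makes precise what the paper leaves implicit.
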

\begin{proof}
    We first argue that the set $Z$ computed in Algorithm~\ref{alg:add_approx} satisfies the relaxed domination condition with $\Delta = 2t$, specifically, that for all $p+q$ with $p\in P,q\in Q$, there exists some $z=\rep(p')+\rep(q')\in Z$ with $z \le p+q+(2t,2t)$. 
    Indeed, for any $p\in P, q\in Q$, consider their rounded points $p'= \left(\lfloor \frac{p.x}{t} \rfloor , \lfloor \frac{p.y}{t} \rfloor \right) \in P'$, $q'= \left(\lfloor \frac{q.x}{t} \rfloor , \lfloor \frac{q.y}{t} \rfloor \right)  \in Q'$. By definition of the Pareto sum $S'$ of $P',Q'$, there must exist some $(\tilde{p}+\tilde{q})\in S'$ with $\tilde{p}+\tilde{q} \le p'+q'$. We claim that $z := \rep(\tilde{p})+\rep(\tilde{q})$ approximately dominates $p+q$, since
    \begin{align*}
    z &= \rep(\tilde{p}) + \rep(\tilde{q})  \le (t \tilde{p} + (t,t)) + (t\tilde{q} + (t,t))\\
     &= t(\tilde{p}+\tilde{q}) + 2(t,t) \le t(p'+q')+2(t,t) \\
    & \le p+q+2(t,t),
    \end{align*}
    where we used $\rep(x') \le tx' + (t,t)$ for any rounded point $x'$ in the first line, and $tx' \le x$ for any $x$ and rounded point $x'$ in the third line.
    
    We claim that the lemma follows: Filtering out dominated points in $Z$ ensures that $\tilde{S}$ is a Pareto set (establishing Condition 3.), without affecting the domination condition (if $z\in Z$ dominates $z'\in Z$, then whenever $z'\le p+q+(t,t)$, then $z\le z' \le p+q+(t,t)$, so $z'$ is never needed to ensure the condition). Finally, clearly $\tilde{S}\subseteq Z \subseteq P+Q$ as any element of $Z$ is explicitly given as $\rep(p')+\rep(q')$ for corresponding $\rep(p')\in P$ and $\rep(q')\in Q$. 
    \end{proof}
\begin{theorem}\label{thm:delta_approx}
Let $T(n,W)$ denote the runtime of monotone min-plus convolution of arrays of size $n$ and entries bounded by $W$ with witness reporting. Given sets $P,Q$ of $n$ points in $\mathbb{R}^2$, we can compute a $\Delta$-approximate Pareto sum in time $\mathcal{O}(n + (W/\Delta) \log(W/\Delta) + T(W/\Delta,W/\Delta) )$.  
\end{theorem}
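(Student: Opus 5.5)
We plan to run Algorithm~\ref{alg:add_approx} with scaling parameter $t := \Delta/2$, so that Lemma~\ref{lem:bounding_approx} yields a $2t = \Delta$-approximation. We assume, as the statement implicitly requires, that all coordinates of $P$ and $Q$ lie in $[0,W]$; if not, we first translate each set so that its minimum coordinates are $0$ and let $W$ be the coordinate range. Translations commute with the Minkowski sum and preserve domination, so this costs $O(n)$ and is undone at the end. Correctness then follows directly from Lemma~\ref{lem:bounding_approx}. What remains is to implement each step of the algorithm within the claimed time and to account for the running time.

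\textbf{Rounding and representatives.} Computing $P'$ and $Q'$ takes $O(n)$ time. The rounded coordinates are integers in $[W/t] = [2W/\Delta]$. We compute $\rep(\cdot)$ in a single pass by keeping, for each rounded point, the original point of minimum $L_1$ norm seen so far. We use hashing, or bucketing by $x$-coordinate: since $P$ is a Pareto set, two points sharing a rounded $x$-coordinate form a contiguous run in $x$-sorted order. This takes $O(n)$ time. Taking the Pareto fronts of $P'$ and $Q'$ is done by bucket-sorting on the rounded $x$-coordinate, which costs $O(n + W/\Delta)$, followed by a linear scan. The result is an instance of Bounded Pareto Sum with coordinates in $[O(W/\Delta)]$.

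\textbf{Pareto sum via convolution with witnesses.} We apply the reduction of Theorem~\ref{thm:reduce}. This takes $O(W/\Delta)$ time to build monotone arrays $A$ and $B$ of length $O(W/\Delta)$ with entries in $[O(W/\Delta)]$, and then one call to the witness-reporting convolution costs $T(W/\Delta, W/\Delta)$. This step is the main point of care. Algorithm~\ref{alg:add_approx} needs each $s' \in S'$ as a pair $(p',q')$, while the reduction only returns values $C[k]$. A witness $i$ with $C[k] = A[i] + B[k-i]$ may point to array positions that received their value from the monotonicity-filling step rather than from actual points. We fix this as in the proof of Theorem~\ref{thm:reduce}. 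For each position of $A$ (and likewise $B$) we store, during the filling pass, the index $\pi_A(i) \le i$ of the actual point of $P'$ whose value was copied there. We then output the pair $(\pi_A(i), \pi_B(k-i))$. Since $\pi_A(i) + \pi_B(k-i) \le k$ and the $y$-values are unchanged, this pair of actual points achieves $y$-value $C[k]$ at an $x$-coordinate at most $k$. For every surviving $k$ (those with $C[k] < C[k-1]$), the $x$-coordinate must therefore equal $k$, since otherwise $C$ would have dropped earlier. So the pair is a genuine decomposition of $s' = (k, C[k])$. This step adds only $O(W/\Delta)$ time.

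\textbf{Final filtering.} We form $Z$, of size $O(W/\Delta)$, from the representatives in $O(W/\Delta)$ time. Its Pareto front is computed by sorting $Z$ by $x$ and scanning, which takes $O((W/\Delta)\log(W/\Delta))$ time. Summing the steps gives $O(n + (W/\Delta)\log(W/\Delta) + T(W/\Delta, W/\Delta))$.
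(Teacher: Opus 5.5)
Your proposal is correct and follows the paper's proof. Like the paper, you run Algorithm~\ref{alg:add_approx} with $t=\Delta/2$, take correctness from Lemma~\ref{lem:bounding_approx}, obtain $S'$ via Theorem~\ref{thm:reduce} and one witness-reporting convolution call costing $T(W/\Delta,W/\Delta)$, and then take the Pareto front of $Z$ by sorting in $O((W/\Delta)\log(W/\Delta))$ time. Your explicit mapping of convolution witnesses back to actual points of $P'$ and $Q'$ via $\pi_A,\pi_B$ (using $C[k]<C[k-1]$ for surviving $k$) correctly fills in a step the paper only asserts.
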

\begin{proof}
Given Lemma~\ref{lem:bounding_approx}, it remains to show that ApproximateParetoSum($P,Q,t)$ with $t:= \Delta/2$ can be computed in the desired running time. 
After $O(n)$-time preprocessing, the main step in Algorithm~\ref{alg:add_approx} is to compute the Pareto sum $S'$ of $P',Q'$. By our rounding, $P',Q'$ have integer coordinates in $\{0,\dots, \lfloor W/\Delta \rfloor\}$, and Theorem~\ref{thm:reduction} is applicable to obtain $S'$ and thus $Z$ in time $T(\lfloor W/\Delta \rfloor , \lfloor W/\Delta \rfloor )$.
Observe that $Z$ is a set of size at most $O(W/\Delta)$. Thus, we can compute its Pareto front in time $O((W/\Delta) \log (W/\Delta))$, concluding the claim.
\end{proof}
%it remains to compute the Pareto front of $Z$

%The runtime follows by computing the skyline in the last step by sort and compare, and reducing the Pareto sum computation to the bounded monotone min-plus convolution problem. The sets $A',B'$ will be of size at most $2n/\Delta$ as the original sets $A,B$ only contain at most one point for each $x$-coordinates.

%For the correctness, we need to check the conditions given by Definition \ref{def:approx}. 

%For Condition 1, we bound $d_H(PS(A,B), \Tilde{C}_\Delta)$. By Lemma \ref{lem:bounding_approx}, we can choose for every $c\in PS(A,B)$, a point $\Tilde{c}$, with the property that $ c \leq \Tilde{c} \leq c+2\mathbf{\Delta}$, where $\mathbf{\Delta}:=(\Delta/2,\Delta/2)$.
%Then 
%$\left|c-\Tilde{c} \right|_{\infty}  \leq |2\mathbf{\Delta}|_{\infty}=\Delta.$

%Condition 2 holds as for each $a'+b' \in C'$ we have $\rep(a') \in A, \rep(b') \in B$ by definition.

%For Condition 3, we argue why we can safely remove dominated points, and thereby construct a Pareto set. Assume there are points $p,q \in \tilde{C}_t$, with $p \leq q$. Let $c$ be any point, which satisfies $ q \leq c +2t$ then by transitivity also $p \leq c+2t$, so that we can safely remove point $q$ from the set $\tilde{C}_t$.

\begin{corollary}
Given sets $P,Q$ of $n$ points with real coordinates in $[0,W]^2$ and $\epsilon > 0$, we can compute an $\epsilon W$-approximation of the Pareto sum in time $\ \tilde{\mathcal{O}} \left( n+ (1/\epsilon)^{1.6} \right)$. \label{thm: epsW_runtime}
\end{corollary}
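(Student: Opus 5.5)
The plan is to instantiate Theorem~\ref{thm:delta_approx} with $\Delta := \epsilon W$. Then $W/\Delta = 1/\epsilon$, and the bound becomes $\mathcal{O}(n + (1/\epsilon)\log(1/\epsilon) + T(1/\epsilon,1/\epsilon))$. What remains is to bound $T(m,m)$ for $m = \lceil 1/\epsilon\rceil$. This $T$ is the running time of bounded monotone min-plus convolution \emph{with witness reporting}, on arrays of length $m$ with entries in $[m]$.

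Witnesses are really needed. The set $Z$ in Algorithm~\ref{alg:add_approx} is formed from the pairs $(p',q')$ realizing each $s'\in S'$, and only through these pairs can we access $\rep(p')+\rep(q')$. The reduction of Theorem~\ref{thm:reduction} passes through monotonized arrays, so a witness index $i$ with $A[i]+B[k-i]=C[k]$ may be a filled-in position. By the argument in the proof of Theorem~\ref{thm:reduction}, such a witness can be moved to the nearest genuine points of $P'$ and $Q'$ at or to the left. For this I will precompute, for each index of $A$ and $B$, the largest original $x$-coordinate $\le$ that index, in $\mathcal{O}(m)$ time. This keeps the sum at most $C[k]$, so it gives the same point of $S'$ for every retained $k$. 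Hence witness reporting for the convolution yields the pairs in $\mathcal{O}(m)$ extra time.

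The main obstacle is to show $T(m,m)=\tilde{\mathcal{O}}(m^{1.6})$ with witnesses. The $\tilde{\mathcal{O}}(m^{1.5})$ algorithm of \cite{ChiDX022} reports only values. I would instead take their simpler $\tilde{\mathcal{O}}(m^{1.6})$ algorithm. It decomposes the arrays into blocks, handles "far" block pairs by direct or recursive comparisons, whose witnesses are explicit, and handles the remaining work by FFT-based counting of candidate sums, i.e.\ polynomial products over value-shifted indicator vectors. For the FFT parts, I would apply the witness-finding technique of Alon and Naor~\cite{AlonN96}: random (or derandomized) subsampling of index sets at geometric rates, repeated $\mathcal{O}(\log m)$ times. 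This isolates a unique witness, which is recovered by additionally convolving with index-weighted vectors. It costs only polylogarithmic overhead. Each FFT-based Boolean/counting convolution in the algorithm would become witness-reporting this way, giving $T(m,m)=\tilde{\mathcal{O}}(m^{1.6})$.

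Combining these, the total is $\tilde{\mathcal{O}}(n + (1/\epsilon)\log(1/\epsilon) + (1/\epsilon)^{1.6}) = \tilde{\mathcal{O}}(n+(1/\epsilon)^{1.6})$. Correctness ($\epsilon W$-approximation) is inherited from Lemma~\ref{lem:bounding_approx} with $t=\epsilon W/2$. The rounding into $[\lfloor 1/\epsilon\rfloor]^2$ and the representative computation take $\mathcal{O}(n)$ time using hashing or bucketing on the grid of size $\mathcal{O}(1/\epsilon)$.
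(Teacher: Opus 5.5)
Your route is the paper's: Theorem~\ref{thm:delta_approx} with $\Delta=\epsilon W$, plus the $\tilde{O}(n^{1.6})$ algorithm of \cite{ChiDX022} made witness-reporting with the Alon--Naor technique. That is exactly the content of Section~\ref{sec: witness_reporting}. Your step of shifting a witness from a filled-in index to the nearest genuine one is correct but never fires. If $k$ is retained, i.e.\ $C[k]<C[k-1]$, a witness $(i,k-i)$ with $i$ filled in would give $A[i']+B[k-i]=C[k]$ for some $i'<i$. Then $C[k-1]\le C[i'+k-i]\le C[k]$, a contradiction.

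The weak point is the step you yourself flag as the main obstacle, because your description of the $\tilde{O}(n^{1.6})$ algorithm is not accurate. It does not split into ``far'' block pairs with explicit witnesses plus FFT counting. Instead it rounds values by $n^\alpha$ and computes the rounded convolution $C'$ by grouping equal values. It then corrects the error with a single FFT over hashed exponents $y^{A'[i]\bmod p}$, $y^{B'[i]\bmod p}$ for a random prime $p$.

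The coefficients of that product count hash collisions, not witnesses. The number of true witnesses for $(k,b)$ is only the coefficient of $x^{s_b[k]}$ in $C_{p,b}[k]-R_{p,b}[k]$, where $R_{p,b}[k]$ comes from the explicitly enumerated pseudo-witness set $T_b$. Hence ``apply Alon--Naor to each FFT convolution'' fails as stated. After subsampling, the raw coefficient at $x^{s_b[k]}$ can be $1$ because a single pseudo-witness survived, and index weighting would then return a non-witness.

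The paper's fix is to sift the $B$-side index set $S$ in both places, using $Q_S$ and $R_S[k]$ (only pairs with $k-i\in S$). When isolating, it weights by $i$ both $P$ and $R_S[k]$. Then the coefficient of $x^{s_b[k]}$ in the difference is exactly the number, respectively the index sum, of surviving true witnesses. Each round costs one FFT of size $O(n^{1+\alpha+\beta})$ plus expected $O(|T_b|)=O(n^{2-\beta})$ to rebuild $R_S$. So polylogarithmically many rounds keep $T(m,m)=\tilde{O}(m^{1.6})$, and with this correction your argument goes through.
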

\begin{proof}
Follows from the $\tilde{O}(n^{1.6})$ time algorithm of \cite{ChiDX022}, and our witness reporting extension, described in Section \ref{sec: witness_reporting} in the appendix\footnote{We remark that it is very plausible that our witness reporting algorithm extends to the $\tilde{O}(n^{1.5})$ algorithm.}. 
\end{proof}
Let us also now show a lower bound, which will nicely complement Algorithm \ref{alg:add_approx}. 
\begin{theorem}\label{thm: lower_bound}
Let $T_{apx}(n,W, \Delta)$ denote the running time of computing, given Pareto sets $P,Q$ in $\mathbb{R}^2_{\ge 0}$ with maximum coordinate bounded by $W$, a $\Delta$-approximation of their Pareto sum, with witnesses. If there exists $\alpha \ge 0$ and $c$ such that $T_{apx}(n,W,\Delta) \le O(n+(W/\Delta)^{c})$ whenever $\Delta = \Theta(W^\alpha)$, then bounded monotone min-plus convolution can be computed in time $O(n^{c})$, with witnesses. 
\end{theorem}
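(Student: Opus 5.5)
The plan is to reverse the reduction of Theorem~\ref{thm:reduction}: encode a bounded monotone min-plus convolution instance as a Pareto sum instance on a coarse grid. The grid spacing is chosen larger than $\Delta$, so that any $\Delta$-approximation is forced to be exact. Let $A,B$ be monotone decreasing arrays of size $n$ with entries in $[0,O(n)]$. I first pad both arrays to length $2n$ by repeating their last entry. This keeps them monotone and does not change $C[k]$ for $k\le 2n-2$. Padding lets me use the identity
\[
C[k]=\min\{A[i]+B[j] \mid i+j\le k\}:
\]
for $i+j\le k$ I can raise $j$ to $k-i$ (still in range after padding) without increasing $B[j]$.

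Next I fix a scaling factor $M$ and define $P=\{(iM, A[i]M)\}$ and $Q=\{(jM,B[j]M)\}$. Within each run of equal values of $A$ (resp.\ $B$) I keep only the point with the smallest index. By the identity above this does not change $C$, and it makes $P,Q$ Pareto sets; it takes $O(n)$ time. The coordinates are then bounded by $W=\Theta(nM)$. I set $\Delta:=M/3$ and choose $M$ so that $\Delta=\Theta(W^\alpha)$, i.e.\ $M=\Theta((nM)^\alpha)$, which gives $M=\Theta(n^{\alpha/(1-\alpha)})$ for $\alpha<1$. This balancing of parameters is the step I expect to need care. The key consequence is $W/\Delta=\Theta(n)$, so the hypothesized algorithm runs in time $O(n+n^{c})$. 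The degenerate case $\alpha\ge 1$ would need a separate remark, since then $\Delta\gtrsim W$. Also, the numbers are only polynomially large, so arithmetic remains unit cost.

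For correctness, let $\tilde S$ be the returned $\Delta$-approximation, with witnesses. Every point of $P+Q$ lies on the grid $M\mathbb{Z}^2$, and by Condition~2 so does every $\tilde s\in\tilde S$. Now take an exact Pareto sum point $s\in S$. By Condition~1 there is some $\tilde s\in\tilde S$ with $\tilde s\le s+(\Delta,\Delta)$. Since $\Delta<M$ and both points lie on the grid, this forces $\tilde s\le s$. As $s$ is non-dominated in $P+Q$ and $\tilde s\in P+Q$, we get $\tilde s=s$. Hence $S\subseteq\tilde S\subseteq P+Q$. Since $\tilde S$ is a Pareto set (Condition~3) and every point of $P+Q$ is dominated by or equal to a point of $S$, we conclude $\tilde S=S$.

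Finally I recover $C$ from $S$. Dividing by $M$, I sort $S$ by $x$-coordinate (bucket sort, $O(n)$) and sweep $k=0,\dots,2n-2$, setting $C[k]$ to the minimum $y$-value among points with $x\le k$. By the identity above this is exactly $C[k]$. The witness pair for $C[k]$ is read off from the witness $(p,q)$ of that point of $S$, which yields the indices $i,j$. The total time is $O(n)+T_{apx}=O(n^{c})$.
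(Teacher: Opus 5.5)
Your proof is correct and follows the paper's approach: you scale the instance by a grid spacing $M>\Delta$, so that Condition~2 forces every $\Delta$-approximation to equal the exact Pareto sum, and you balance $M=\Theta(n^{\alpha/(1-\alpha)})$ so that $W/\Delta=\Theta(n)$. Your deduplication-plus-prefix-minimum recovery is a more careful version of the paper's final step, which simply assumes strictly decreasing arrays so that every $k$ appears as an $x$-coordinate in $S$; when $C[k]$ is attained at $x$-coordinate $k'<k$, the reported witness $(i,j)$ has $i+j=k'$, and you must shift it to some $(i',k-i')$ with $i'\ge i$ and $k-i'\ge j$, a step your identity already justifies.
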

\begin{proof}
The proof is an adaptation of the reduction from min-plus convolution  to Pareto sum given in~\cite{FunkeHSST24LB}. %, which ensures that a $\Delta$-approximation still allows to decide the given instance correctly.
Consider an arbitrary min-plus convolution instance $A[0,...,n-1],B[0,...,n-1]$ with monotonicity (specifically, $A[i] > A[i+1], B[i] > B[i+1]$ for all $i$) and entries in $\{0,\dots, O(n)\}$. We construct the sets of points 
\begin{align*}
    &P := \{(i\cdot (2n^{\frac{\alpha}{1-\alpha}}),A[i]\cdot (2n^{\frac{\alpha}{1-\alpha}})) \mid i \in [n-1]\}, \\
    &Q  := \{(i\cdot (2n^{\frac{\alpha}{1-\alpha}}), B[i] \cdot (2n^{\frac{\alpha}{1-\alpha}}) )\mid i \in [n-1] \}.
\end{align*}
Consider a $\Delta$-approximation $\tilde{S}$ of the Pareto sum of $P,Q$ with $\Delta =  n^{\frac{\alpha}{1-\alpha}} $ . By construction of the point sets, if a point $\tilde{s}\in \tilde{S}\subseteq P+Q$ $\Delta$-dominates a point $p+q$, "i.e.", $\tilde{s} \le p+q+(\Delta,\Delta)$, then it already dominates $p+q$, "i.e.", $\tilde{s} \le p+q$. Thus, $\tilde{S}$ is the true Pareto sum of $P,Q$, and we can extract the min-plus convolution $C$ of $A,B$: By construction, for every $k\in [2n-2]$, there must be a point in $PS(P,Q)$ with $x$-coordinate $k\cdot (2n^{\frac{\alpha}{1-\alpha}})$ and $y$-coordinate $(\min_{i \in [n-1]}A[i]+B[k-i])\cdot (2n^{\frac{\alpha}{1-\alpha}})$, revealing $C[k]=\min_{i \in [n-1]}(A[i]+B[k-i])$. The existence of a point with $x$-coordinate $k\cdot (2n^{\frac{\alpha}{1-\alpha}})$ for every $k \in [2n-2]$ follows by the decreasing values of $C[k]$. Also, witnesses for such points can be transferred.

Note that the constructed instance of Pareto sum Approximation consists of $n$ points in $P,Q$ with coordinates bounded by $W\le O(n^{1+\frac{\alpha}{1-\alpha}}) =O(n^{\frac{1}{1-\alpha}}) $ and $\Delta = n^{\frac{\alpha}{1-\alpha}}$. Thus, if $T_{apx}(n,W,\Delta) \le O( n+(W/\Delta)^{c})=O(n+(n^{\frac{1}{1-\alpha}-\frac{\alpha}{1-\alpha}})^{c})$, then we can compute the min-plus convolution of $A,B$ in time $O(n^{c})$, with witnesses. 
\end{proof}

\begin{remark}
Even without witness reporting, we still get a fine-grained equivalence (for slightly weaker guarantees), by a minor adaptation of Algorithm \ref{alg:add_approx} and Theorem~\ref{thm: lower_bound}. We relax the condition $\tilde{S} \subseteq P + Q$ to the milder condition that  each $\tilde{s}$ is in the vicinity of a point $p+q \in P+Q$. More formally, for all $ \tilde{s} \in \tilde{S}$ there exist $p \in P$ and $q \in Q$ such that $p+q \leq \tilde{s} \leq p+q+2(t,t).$
The details can be found in Section \ref{sec:without_witnesses}.
\end{remark}

\section{Framework and Implementation} \label{sec:implement_det} 
Our proposed additive approximation algorithm, presented in  Algorithm \ref{alg:add_approx}, consists of scaling down the instance, computing the Pareto sum of the transformed point sets, and then scaling up the solution. The Pareto sum instance after scaling is captured by our definition of bounded Pareto sum. To compute its solution, we can obviously directly apply any Pareto sum algorithm. By virtue of Theorem \ref{thm:reduction}, however, also any algorithm for bounded min-plus convolution translates to an algorithm for bounded Pareto sum when performing the required reduction and back-transformation of the solution. Thus, depending on whether we use an algorithm directly designed for Pareto sum or a convolution algorithm, the computation pipeline needs to be adapted. Figure \ref{fig:pipeline} provides an overview of our framework.
\begin{figure}
    \centering
    \includegraphics[width=1.0\linewidth]{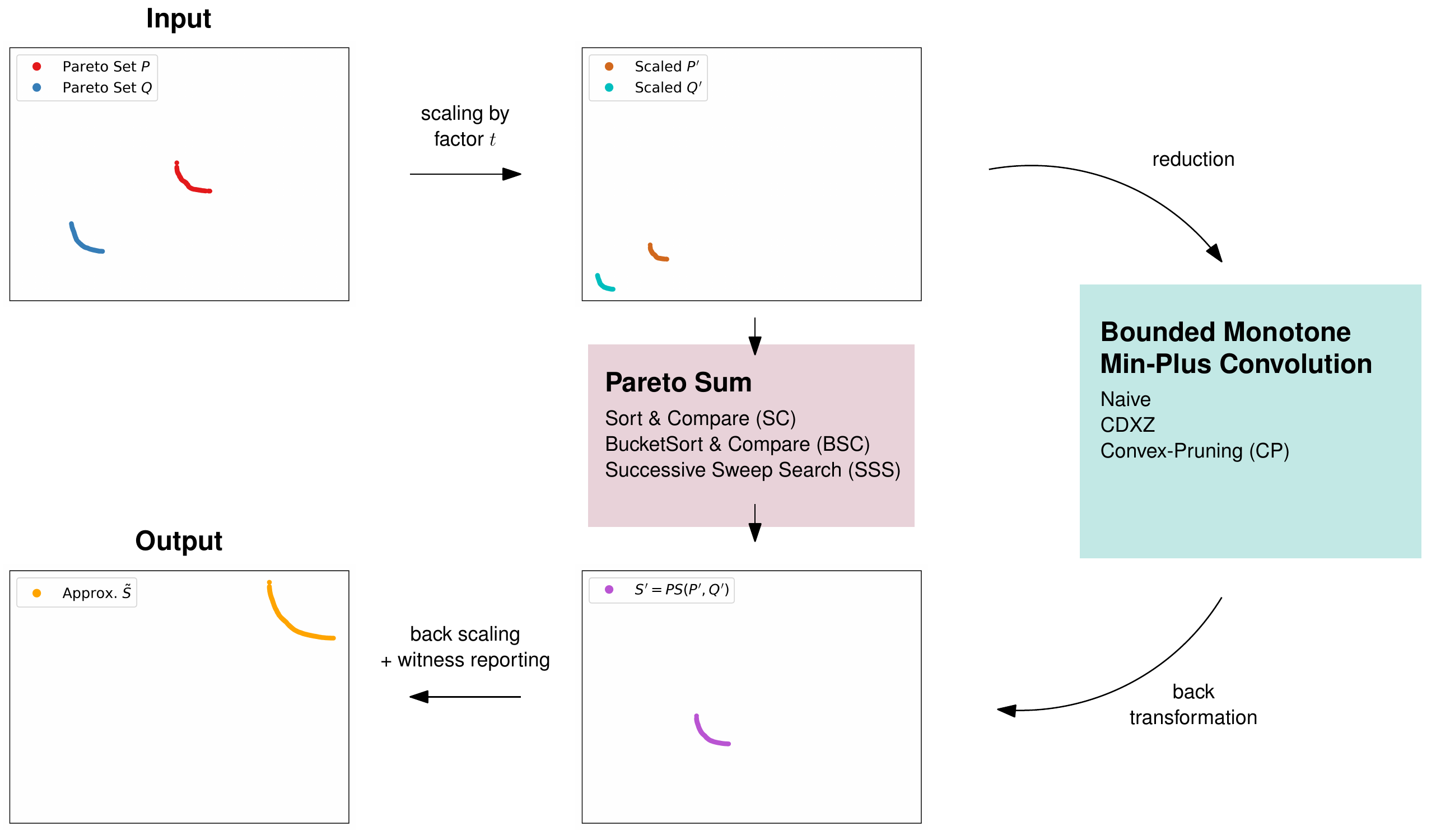}
    \caption{Overview of the main steps of our additive approximation algorithms. An important step is to compute the Pareto sum of the scaled instance. This can either be done directly via a Pareto sum algorithm or by an algorithm for bounded monotone min-plus convolution when applying the proper forward and backward reduction steps.}
    \label{fig:pipeline}
\end{figure}
In the following, we describe the  algorithms of each category in more detail.
\subsection{Algorithms for (bounded) Pareto sum}
We start with the algorithms designed directly for Pareto sum computation. 
\paragraph*{Sort \& Compare (SC) and Successive sweep search (SSS)}
We implemented both state of the art algorithms for (general) Pareto sum computation from  \cite{FunkeHSST24LB} as a baseline. In a nutshell, the Sort \& Compare algorithm sorts the points in the sumset $P+Q$ lexicographically and performs constant time dominance checks per point based on this ordering. Further details on the baselines can be found in Section \ref{sec:direct_algorithms} in the appendix.
\paragraph*{BucketSort \& Compare (BSC)}
For bounded integral Pareto sum, we can improve the Sort \& Compare algorithm, by  using bucketsort to sort the points,  directly updating the minimal $y$-coordinate for a corresponding $x$-coordinate. Further details can be found in section \ref{sec:bucketsort_and_compare} in the appendix. 

\subsection{Algorithms for min-plus convolution}
After the reduction to min-plus convolution, we can of course use the naive quadratic time algorithm for the latter. In the following, we discuss two more sophisticated and (potentially) faster convolution algorithms. 
\paragraph*{CDXZ Algorithm }
We discuss a simplified version of the $\tilde{O}(n^{1.6})$-algorithm \cite[Section 4.1]{ChiDX022} for monotone bounded min-plus convolution in a detailed fashion in Section \ref{sec:Chi_Algorithm} in the appendix. 
Using this algorithm in our Pareto sum reduction framework results in a running time of $\mathcal{O}(W^{1.6} \log W)$ for bounded Pareto sum.
Let us highlight the main improvements and simplifications we achieved, which are also presented in Section \ref{sec:Chi_Algorithm} in the Appendix.
\begin{itemize}
\item We removed the segment tree data structure. Making use of our simpler enhanced min-plus convolution algorithm instead of using a segment tree with interval updates is simpler and improves the runtime theoretically and practically.
\item We improved the bound of $b$, which was previously ranging from $b \in \{0,1,2\}$ to $b \in \{0,1\}.$ The better bound of the constant $b$, is insignificant in the theoretical analysis of the algorithm, but it plays an important role in the error correction part. Besides the initial polynomial multiplication of $P$ and $Q$, every other part of the algorithm in the error correction phase will be repeated $b$-times, thus the better bound on $b$ will result on repeating these costly parts of the algorithm twice instead of thrice.
\item As the set $T_b$ can be of expected size $\mathcal{O}(n^{1.6})$ (for the choice of $\beta=0.4)$, we do not save it explicitly like \cite{ChiDX022} suggests. In order to circumvent storing the set $T_b$, we make an online computation of the polynomials $R_{p,b}[k](x)$ simultaneously, updating the polynomials after having seen a new pseudo-witness $(i,k-i)$ on the fly.
\end{itemize}

We implemented the version without witness reporting, as the witness reporting algorithm would be too inefficient in practice.
This leads to a weak approximation of the Pareto sum; details can be found in Section \ref{sec:without_witnesses} in the appendix.
For the multivariate polynomial multiplication, we perform a standard reduction to the univariate case, by turning $P$ and $Q$ into univariate polynomials. 
Instead of FFT, we make use of the polynomial multiplication offered by the library FLINT \cite{flint}, where the univariate polynomials are kept as lists.
Instead of using binary search trees in the computation of $T_b$, we performed a linear search, after saving for each possible modulo value $ i \in [p-1]$ the indices $k$ with $C'[k] +b\mod p =i$. This turned out to be faster in the experiments.
The parameters $\alpha$ and $\beta$ are not fixed as the optimal theoretical runtime suggests. The choice of the parameters $\alpha$ and $\beta$ are transferred to the choice of the scaling factor $n^\alpha$ and the prime $p$, which are now inputs of the implementation; see Figure \ref{fig:parameterChi} for the influence of these parameters on the runtime.
\paragraph*{Convex Pruning (CP) algorithm}
Bringmann and Cassis developed an algorithm for min-plus convolution, suited towards the case, where the arrays $A$ and $B$ are near-convex \cite{BringmannC23}. We implemented two versions of their algorithm with and without witness reporting. Our implementation is close to the descriptions in \cite{BringmannC23}, with the main difference being that we did not use sparse convolutions, but rather an enhanced min-plus convolution computation, which works particularly well when the arrays $A$ and $B$ have few distinct entries. We precomputed groups of entries of the same value in $A$ and $B$. More details on the enhanced min-plus convolution can be found in Algorithm \ref{alg:enhanced_min_plus} and further details on the precise changes can be found in Section \ref{sec:convex_pruning} in the appendix.

\subsection{Further engineering}
Algorithm \ref{alg:add_approx} has been implemented mostly matching its description, with small changes, which we describe now. 
When using a  convolution algorithm within Algorithm \ref{alg:add_approx}, we also perform the needed reduction from bounded Pareto sum to montone bounded min-plus convolution and the transformation backwards.
The reduction closely follows the constructive proof of Theorem \ref{thm:reduction}, the backwards transformation can be achieved very simply with a standard 2-pointer approach.
At the  end of Algorithm \ref{alg:add_approx}, instead of computing the Pareto front of the set $Z$ by sorting, we do the following: For each newly inserted point, we look back and remove all points, which are dominated by this new point. 
This will work fast as the set $Z$ is almost a Pareto set, so few corrections need to be made. 

\section{Experimental Evaluation} 
In the following, we evaluate our approximate algorithm with the different choices for bounded Pareto sum computation sketched above. While all algorithms are described for the case that $P$ and $Q$ are of the same size $n$, they also all work for differing sizes.
Our  code is written in C++. It was compiled with gcc using -O3. In our implementation of the CDXZ algorithm, we use routines from  FLINT  (Fast Library for Number Theory) \cite{flint}.
All experiments were conducted on a single core of an AMD Threadripper 3970 with 256 GB of RAM. We set a time-out of one hour per run. Running times are always averaged over 10 runs. 
\bigskip\\
\textbf{Benchmarks.}
We use three types of benchmarks in our evaluation. Two are synthetic to allow us to control instance properties, while the third benchmark consists of real data.
\bigskip\\
\emph{Benchmark 1: Instances with bounded range.}
First, we use an instance generator that allows us to fix $n$ and $W$. Here, it is important to ensure $W \geq  n-1$ to guarantee that there are enough distinct natural numbers in the range to construct a Pareto set. We use $W=c \cdot n$ for $c \geq 1$. To efficiently create a strictly monotonically increasing sequence, we initialize an array with a counter $\gamma_i$ for each $i \in [W]$. Then, we increment a randomly selected counter in each of $n$ rounds. Afterwards, we sweep over this array, starting at position 0, and perform the following operations: If $\gamma_i > 1$, we set $\gamma_i=1$ and  $\gamma_{i+1} = \gamma_{i+1}+\gamma_i-1$. Here, entry 0 is assumed to be the successor of $\gamma_W$. We stop when all entries are 0 or 1 which happens after each entry has been considered at most twice. Then, for all $\gamma_i=1$, element $i$ is added to the sequence. This produces a valid sequence in time $\mathcal{O}(n+W)$. Pareto sets can then be constructed by using two such sequences, one sorted increasingly for the $x$-coordinates and the other one decreasingly for the corresponding $y$-coordinates.
\bigskip\\
\emph{Benchmark 2: Function-based instance generator.}
For any strictly monotonically decreasing function $f$ defined on $\mathbb{R}^+$, sampling $n$ points $(x, f(x))$ provides us with a Pareto set. We use two base functions: the first uses $f(x)=-x$ and the second $f(x)=\frac{c}{x}$ with a constant $c \in \mathbb{R}^+$. We  slightly perturb the points to have more input variability, while guaranteeing that pairwise non-dominance is upheld. We refer to the resulting instances as near-linear or near-curved. \begin{figure}
    \centering
    \includegraphics[width=0.45\linewidth]{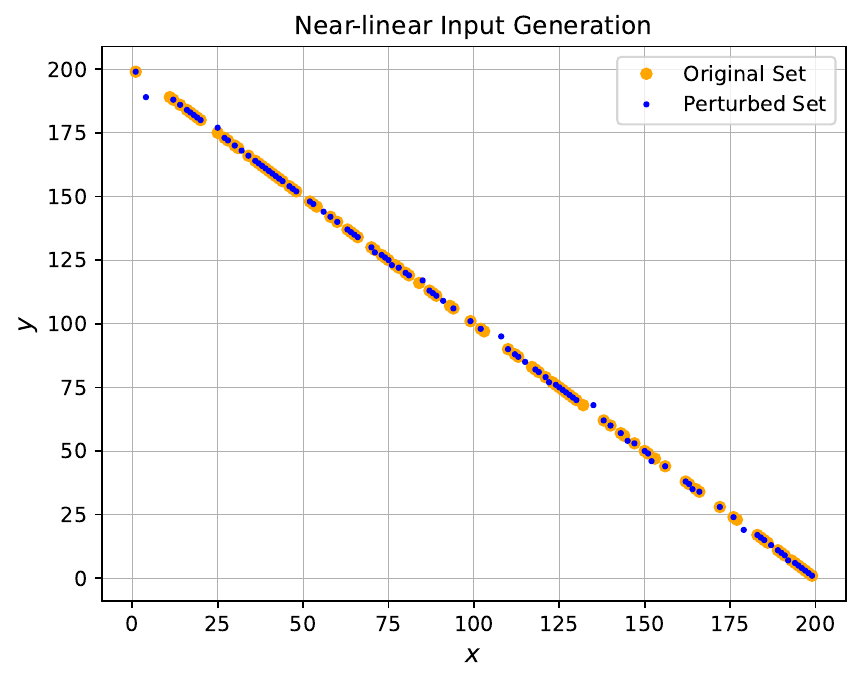} \hfill
    \includegraphics[width=0.45\linewidth]{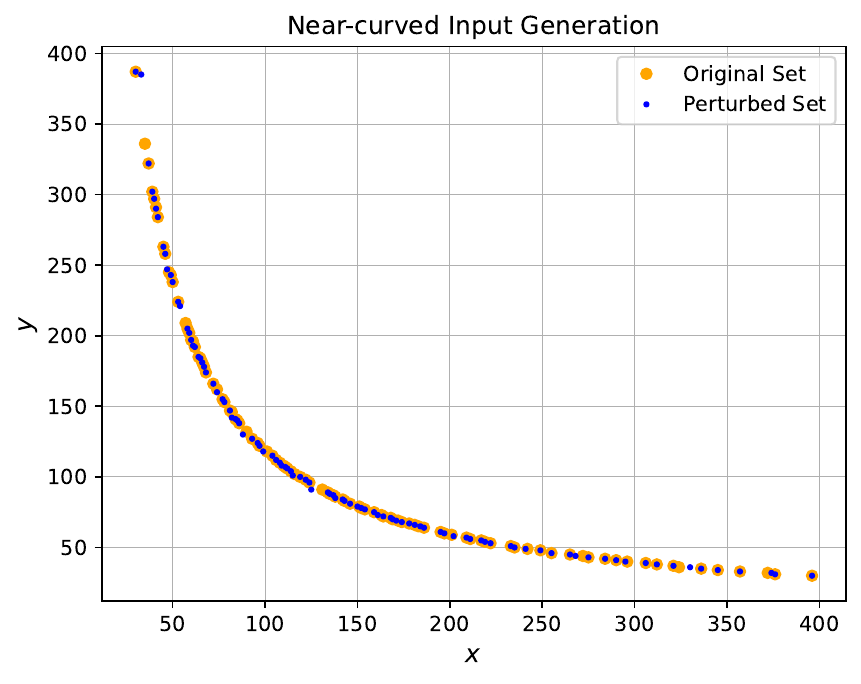}
    \caption{Two different versions of generating input sets with values in the range $[0,2n]$: Near-linear and near-curved input sets are based on the linear and curve generator, respectively, where we apply perturbation to a subset of the points.}
    \label{fig:inputVariants}
\end{figure}See Figure \ref{fig:inputVariants} for an illustration.
We note that by means of the  perturbation, the resulting point sets  are no longer convex.
\bigskip\\
\emph{Benchmark 3: Route planning instances.}
Finally, we also use real-world data sets for evaluation. The instances stem from bi-criteria route planning in line with \cite{FunkeHSST24LB}. Here, given a graph in which the edges are augmented with  two cost values (in our case distance in meters and positive height difference in centimeters), the goal is to compute all Pareto-optimal paths from a source node $s$ to a target node $t$.  Using bi-directional search to answer such queries, or constructing a data structure which accelerates query answering \cite{storandt2012route}, demands to combine and filter the set of Pareto-optimal paths from $s$ to some intermediate node $v$, and from $v$ to the target $t$. The cost pairs of these two path sets form our inputs $P$ and $Q$. The benchmark contains roughly 100000 instances based on randomly chosen $s,v$ and $t$ triples in a directed road network with one million nodes and two million edges. Note that the sizes of $P$ and $Q$ might differ significantly in each instance.
\bigskip\\
\textbf{Results for Exact Pareto Sum.} \label{sec:comparative_results}
For a general comparison of the different algorithms  for Bounded Pareto sum discussed in this paper, 
we first applied a scalability study on the range-bound instances with all point coordinates being integers, see Figure \ref{fig:timeAll}.
\begin{figure}
    \centering
    \includegraphics[width=0.8\linewidth]{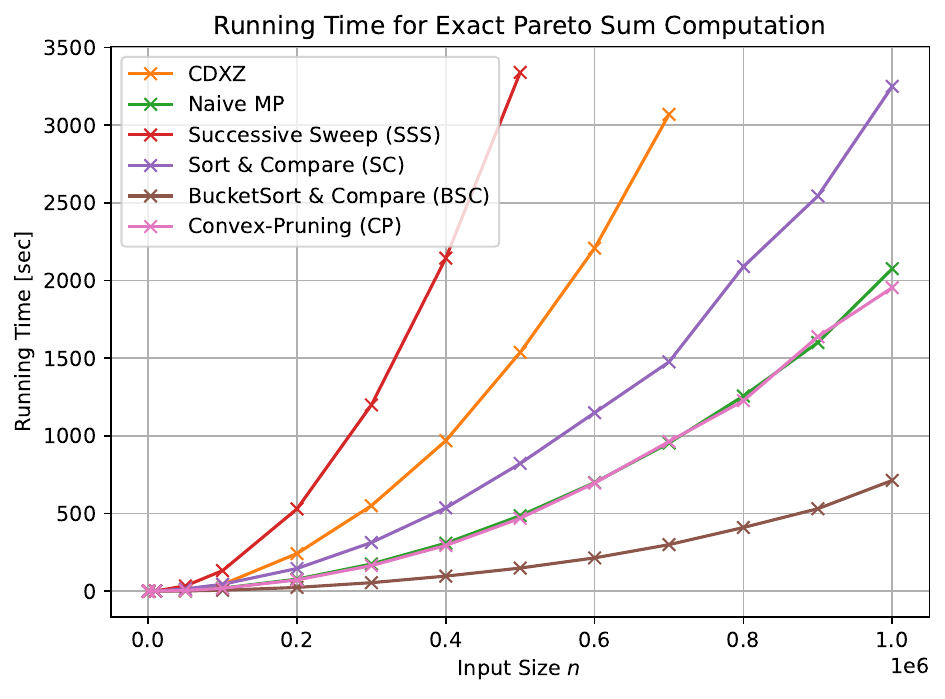}
    \caption{Running times for Pareto sum computation on generated inputs of size up to $n= 100,000$ with values in the range of $[0,2n]$ and a timeout of 1 hour.}
    \label{fig:timeAll}
\end{figure}
We observe huge performance differences. SSS is by far the slowest approach. Its theoretical performance is only subquadratic in case the output size is in $o(n)$. However, for the considered instances there are roughly $4n$ output points. The BSC algorithm performs best and also significantly outperforms the plain SC algorithm. Even more remarkably, also the min-plus convolution based algorithms, naive and CP, outperform SC. While CDXZ beats SSS, it is a bit slower than SC on this instance type. As described in Section \ref{sec:implement_det}, our implementation of the CDXZ algorithm  allows us to choose its  parameters. We conducted a series of tests on synthetic instances of different sizes  to determine a good parameter choice, see Figure \ref{fig:parameterChi} for an example. 
\begin{figure}
    \centering
    \includegraphics[width=0.8\linewidth]{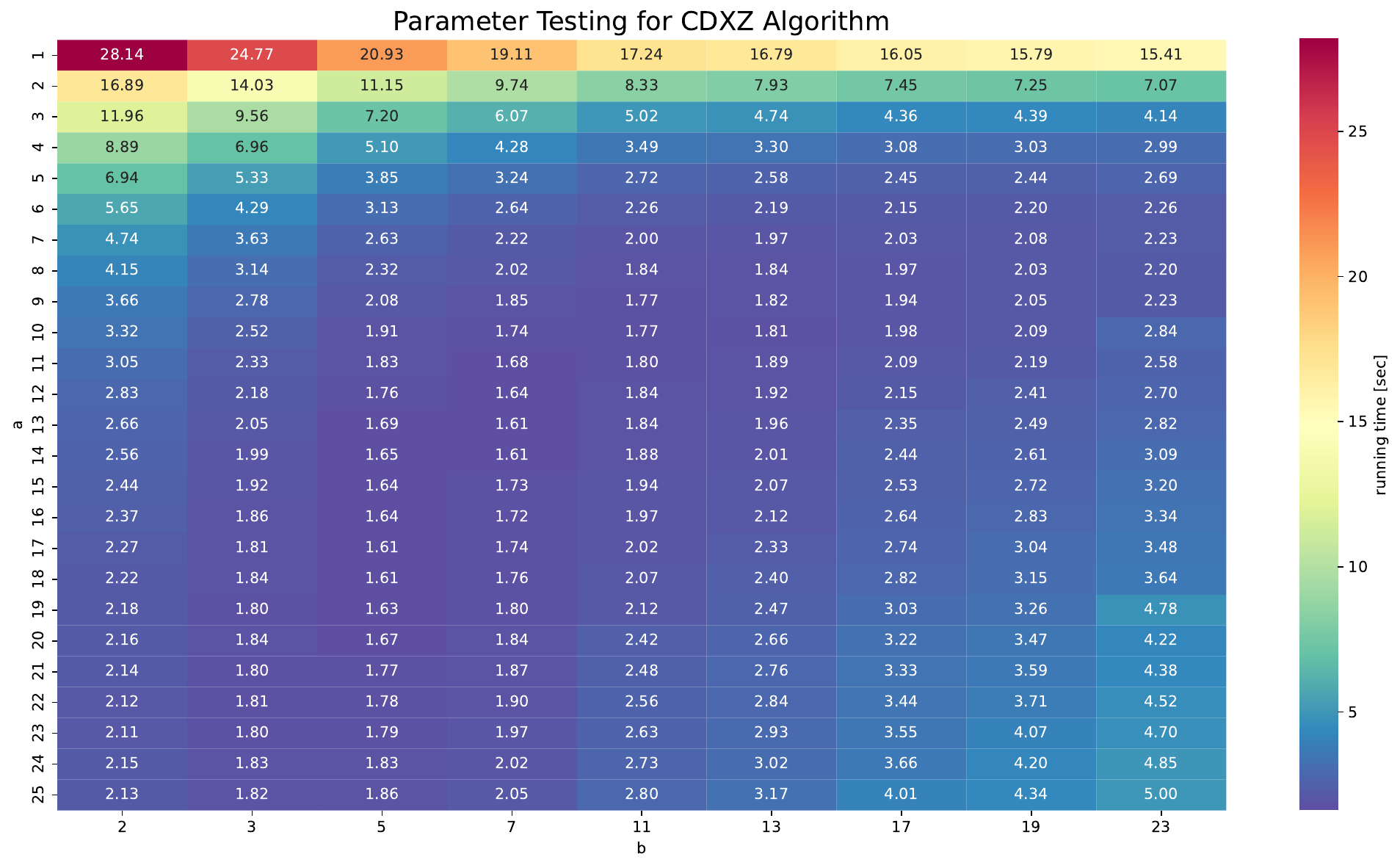}
    \caption{Parameter testing for the CDXZ algorithm on inputs of size $n=20000$ in the range $2n$.}
    \label{fig:parameterChi}
\end{figure}
We observe that the parameter values heavily impact the performance. Across all tested instances, scaling factors of $25$ and $p=2$ were among the top configurations. We therefore use those values in all subsequent experiments. We remark, though, that additional fine-tuning or choices adaptive to $n$ could potentially further improve the performance.

However, our main interest is in computing approximate   solutions. In the following, we  investigate how the algorithms perform when used as a plug-in in our approximation framework. In the remainder of the experiments, we focus on the BSC algorithm as well as the convolution based algorithms, as they outperform the baseline algorithms.
\bigskip\\
\textbf{Results for Approximate Pareto Sum.}
Next, we present comparative running time results of our approximation algorithm with different inner methods of Pareto sum computation applied to the scaled down inputs.
For fair comparison, we use the non-witness reporting variants for all of them.
Figures \ref{fig:approxRangeGen}, \ref{fig:approxNearLinear}, and \ref{fig:approxNearCurve} depict the results for range-bounded, near-linear and near-curve instances, respectively, using a value range of $2n$ and a scaling factor of $t=10$. Interestingly, for each input type, another algorithm performs best. For range-bounded instances, BSC is the fastest with CDXZ being second, while on near-linear instances CDXZ outperforms the other two methods. On near-curved instances, however, CP is clearly the best, beating the others by at least an order of magnitude. 
\begin{figure}
    \centering
    \includegraphics[width=0.7\linewidth]{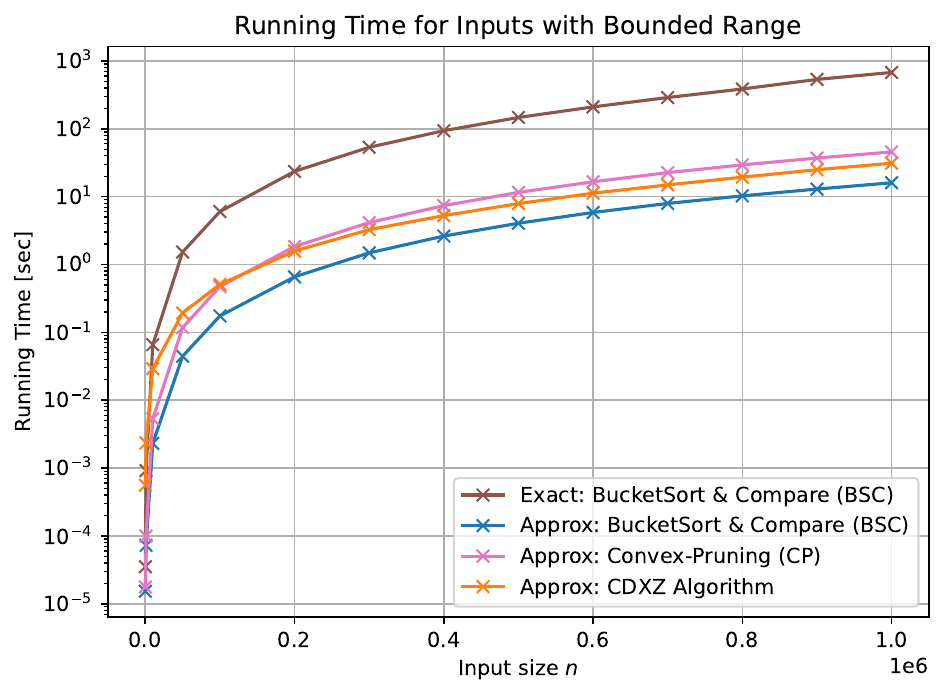}
    \caption{ Running time of the approximation variants on inputs generated with bounded range of $2n$ using a scaling factor of $t=10$. }
    \label{fig:approxRangeGen}
\end{figure}
\begin{figure}
    \centering
    \includegraphics[width=0.6\linewidth]{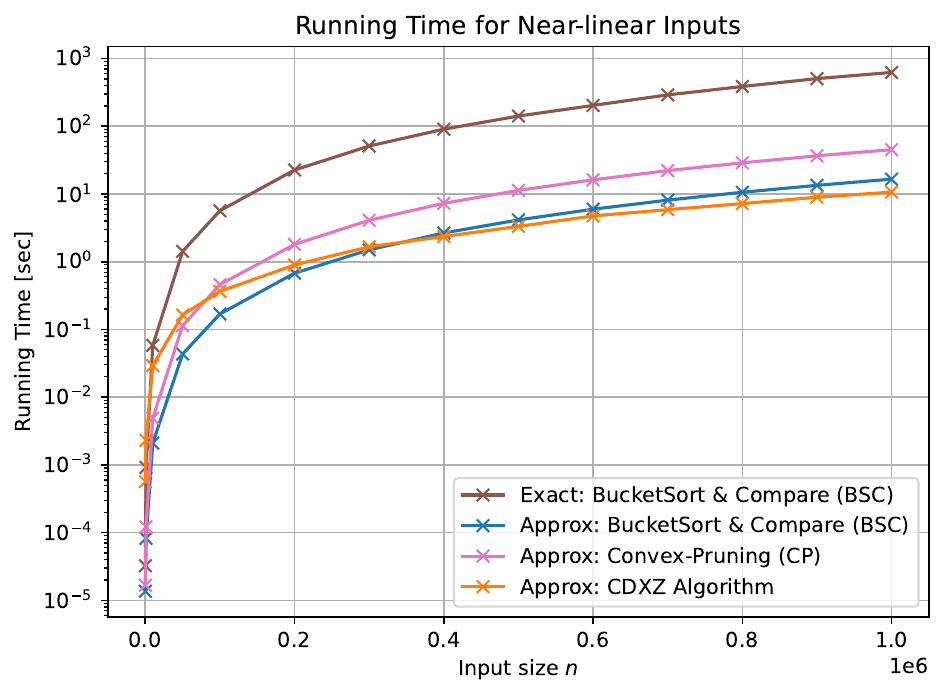}
    \caption{Running time of the approximation variants with a scaling factor of $t=10$ on near-linear inputs generated in the range of up to $2n$.}
    \label{fig:approxNearLinear}
\end{figure}
\begin{figure}
    \centering
    \includegraphics[width=0.6\linewidth]{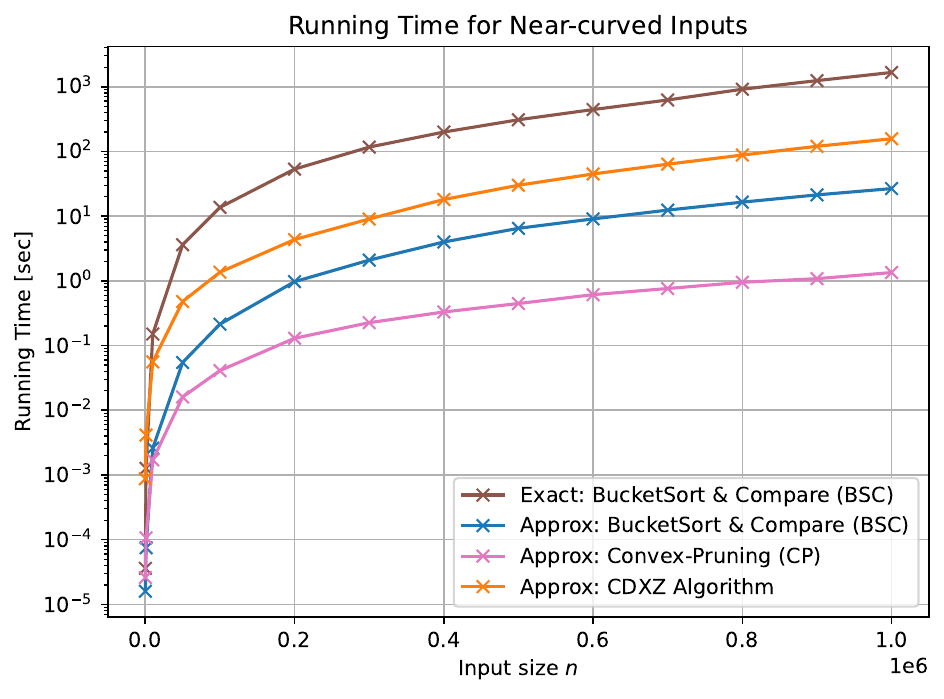}
    \caption{Running time of the approximation variants with a scaling factor of $t=10$ on near-curved generated inputs in the range of up to $2n$.}
    \label{fig:approxNearCurve}
\end{figure}
The performance of CP relies on its ability to prune pairs, which on near-curved instances 
applies to up to 95\% of total point pairs, see Figure \ref{fig:deltaNear}.
\begin{figure}
    \centering
    \includegraphics[width=0.48\linewidth]{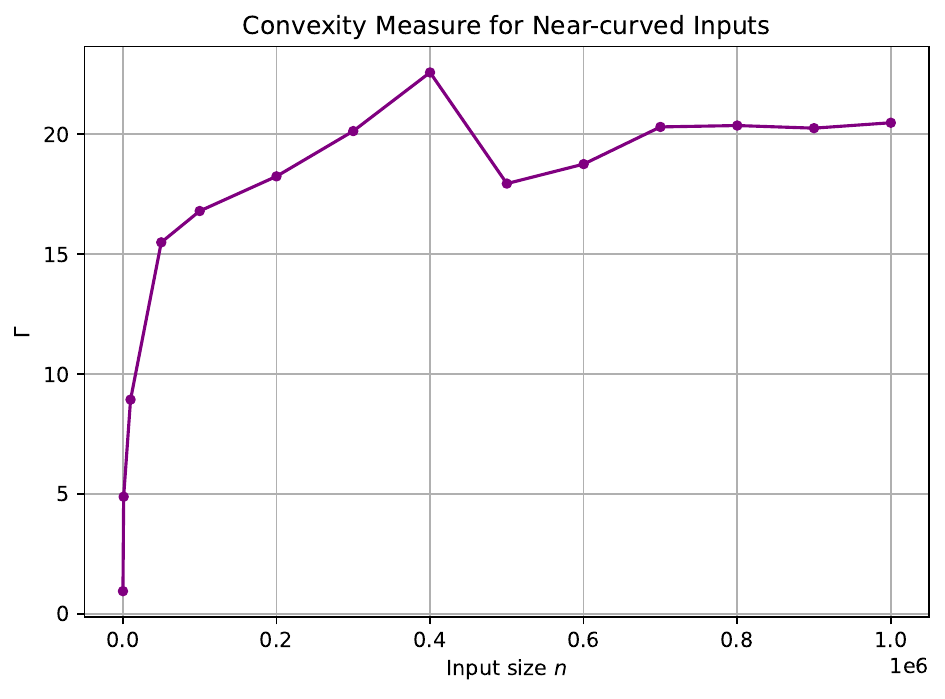} \hfill
    \includegraphics[width=0.48\linewidth]{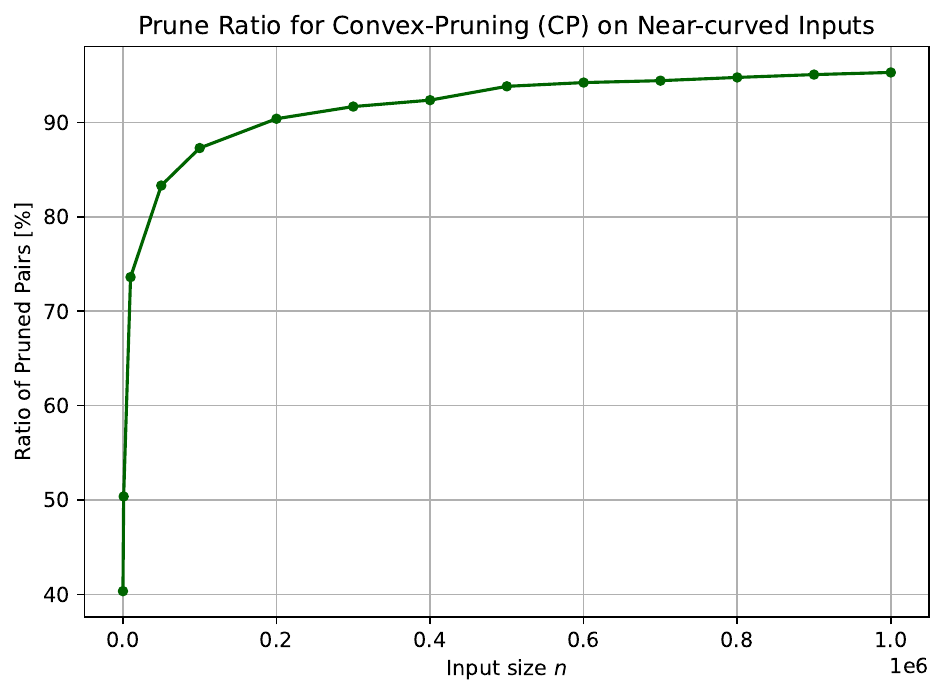}
    \caption{Convexity measure for near-curved inputs (left) and ratios of pruned pairs by the CP algorithm with respect to all available pairs in the index space (right).}
    \label{fig:deltaNear}
\end{figure}

Finally, we consider our third benchmark set, which consists of bi-criteria route planning
instances of varying size.
Figure \ref{fig:approxRealScaling} shows the running time and approximation quality of our algorithms for different scaling factors. We observe that BSC performs best with CP being a close second, especially for larger values of $t$. 
They both improve on exact computation by over an order of magnitude and also stay well below the theoretical additive approximation bound of $2t$.
\begin{figure}
    \centering
    \includegraphics[width=0.7\linewidth]{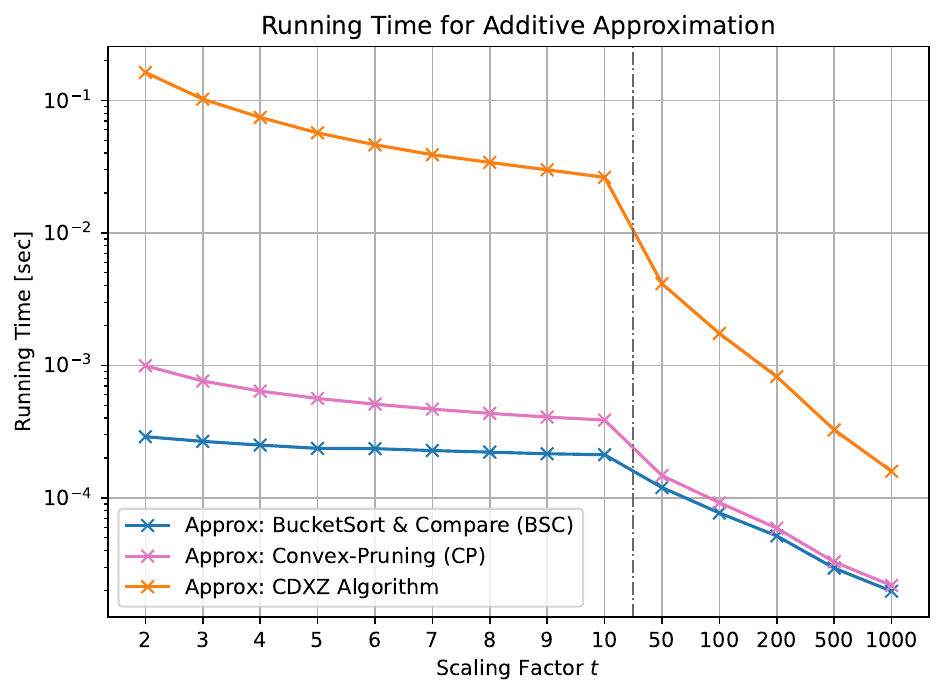}
    \includegraphics[width=0.7\linewidth]{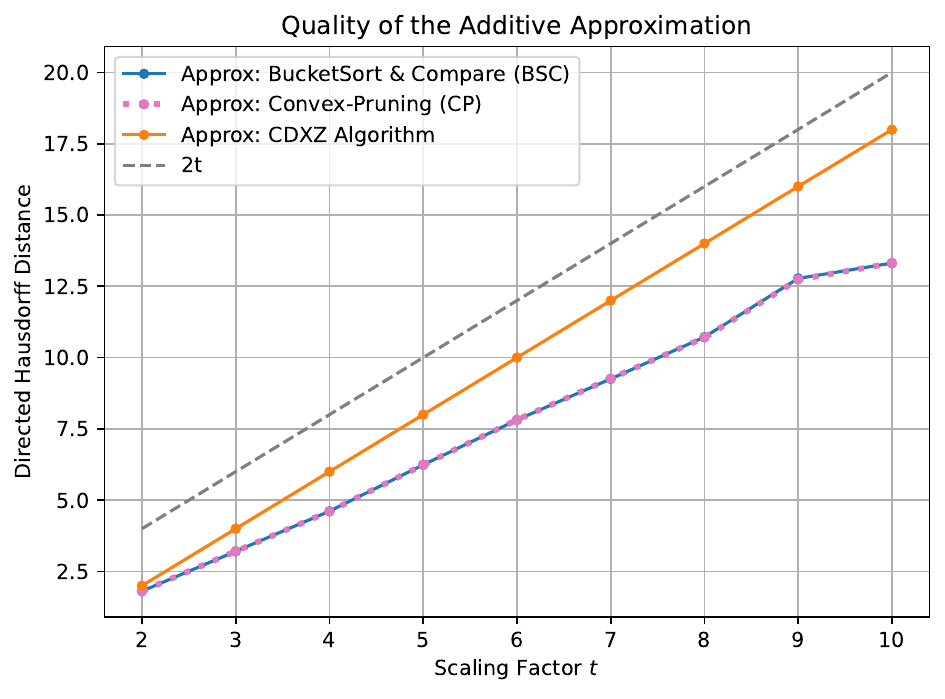}
    \caption{Running time of approximations on real benchmark for varying scaling factors $t=2, \dots, 1000$ (top) and solution quality for $t=2, \dots, 10$ (bottom). This shows the difference of the quality of approximation with (BSC,CP) and without witness reporting(CDXZ).}
    \label{fig:approxRealScaling}
\end{figure}
Examples of approximate solutions can be found in Figures \ref{fig:exampleApprox}.
\begin{figure}
    \centering
    \includegraphics[width=0.48\linewidth]{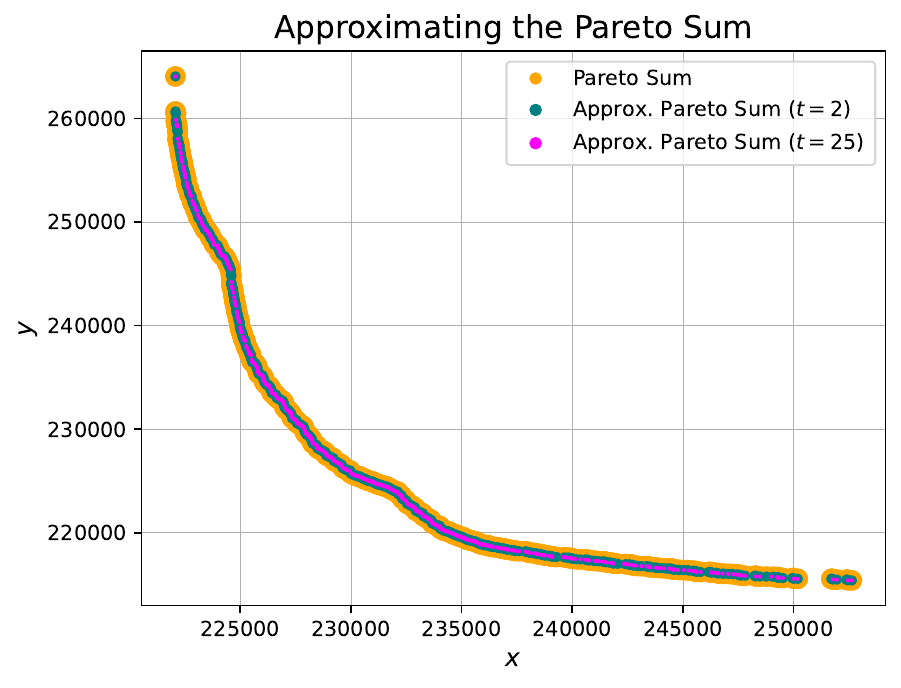} \hfill
    \includegraphics[width=0.48\linewidth]{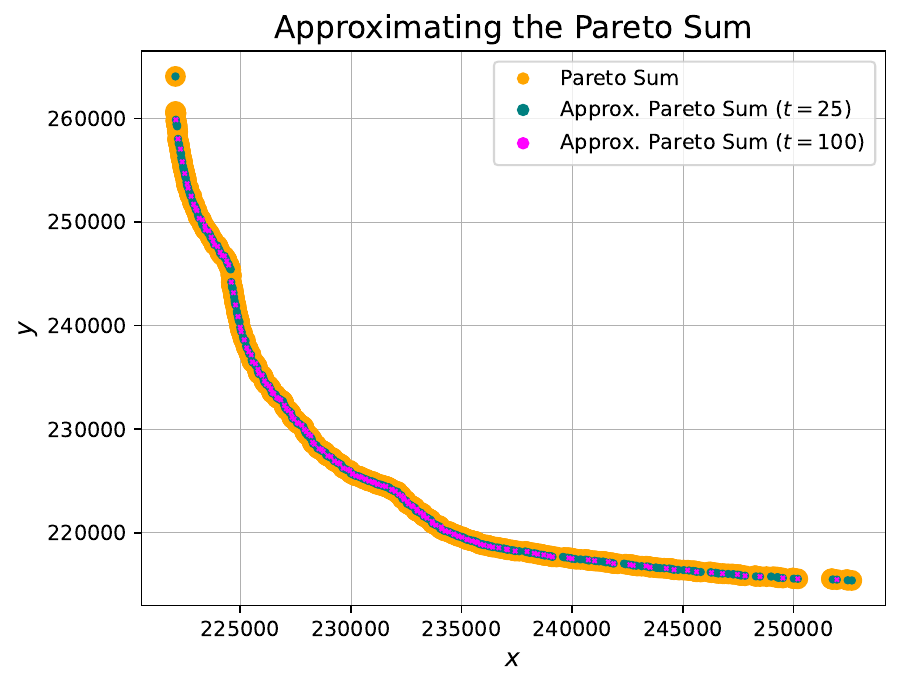}
    \caption{Example for approximate Pareto sum on real benchmark data. The approximative solution for $t=2$ results in a quality of $\Delta = 2$ (left) whereas for $t=25$ the quality is $\Delta = 38$ and for $t=100$ we have $\Delta = 153$ (right).}
    \label{fig:exampleApprox}
\end{figure}
One can see that even when choosing a larger $t$, the  Pareto sum is still well represented by the computed solution. Indeed, larger values of $t$ are quite reasonable in our application. For example, a route that is up to 100 meters longer than another one is almost indistinguishable to a user if the total route length is anyway in the orders of kilometers.

Furthermore, the user usually does not want to be presented with hundreds of routing options, but a concise and diverse set of reasonable options. This is achievable with our approach, as we can not only reduce the running time with larger $t$ but also the output size, see  
Figures \ref{fig:approxVaryT} and \ref{fig:approxOutputSize}.
Clearly, our algorithm provides good trade-offs between running time and solution quality, which allows one to choose the best $t$ depending on the application.

\begin{figure}[h!]
    \centering
    \includegraphics[width=0.5\linewidth]{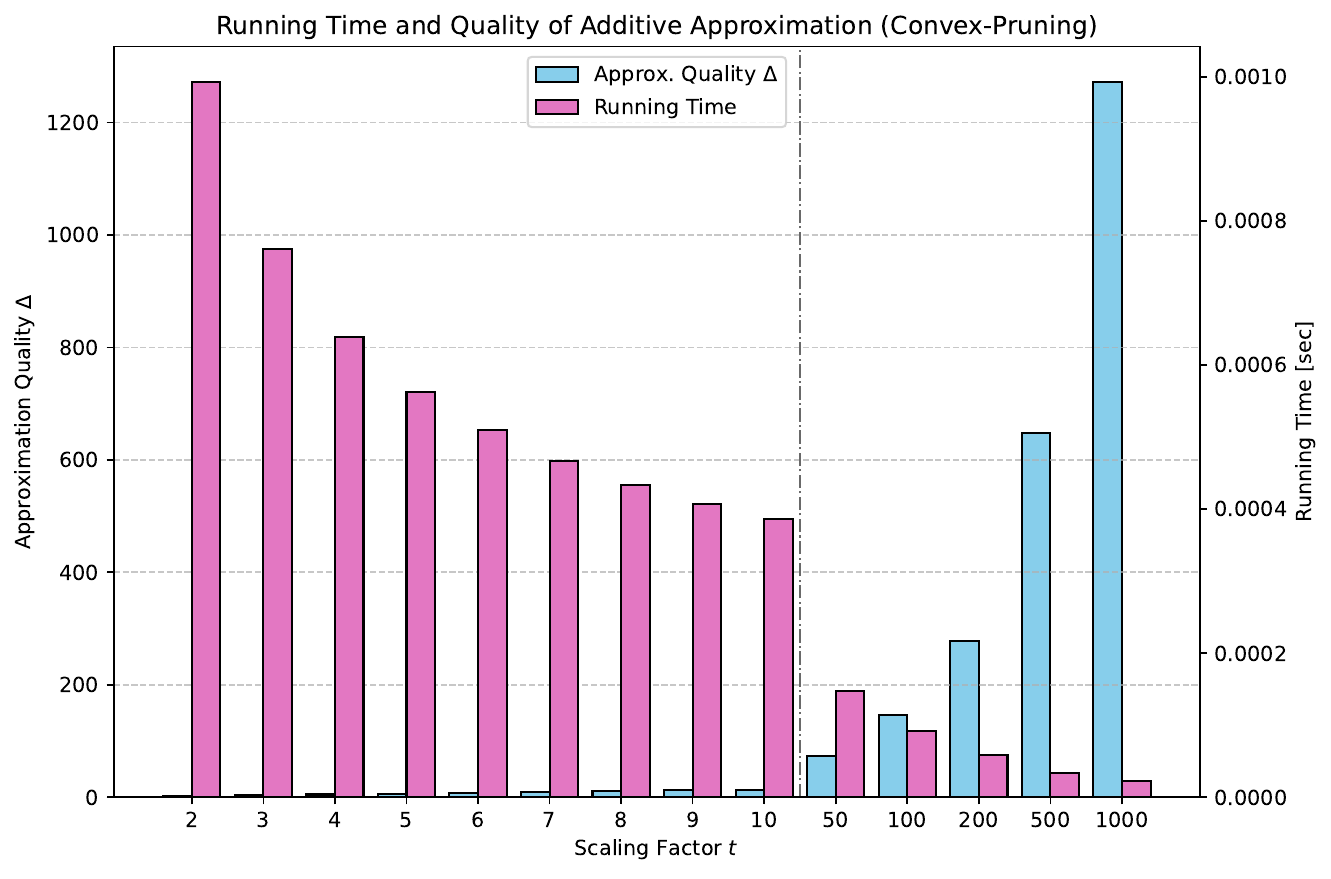}
    \caption{Approximation quality $\Delta$ and running time of the approximation using the CP algorithm on real benchmarks with varying scaling factor $t = 2, \dots, 1000$.}    
    \label{fig:approxVaryT}
\end{figure}

\begin{figure}[h!]
    \centering
    \includegraphics[width=0.5\linewidth]{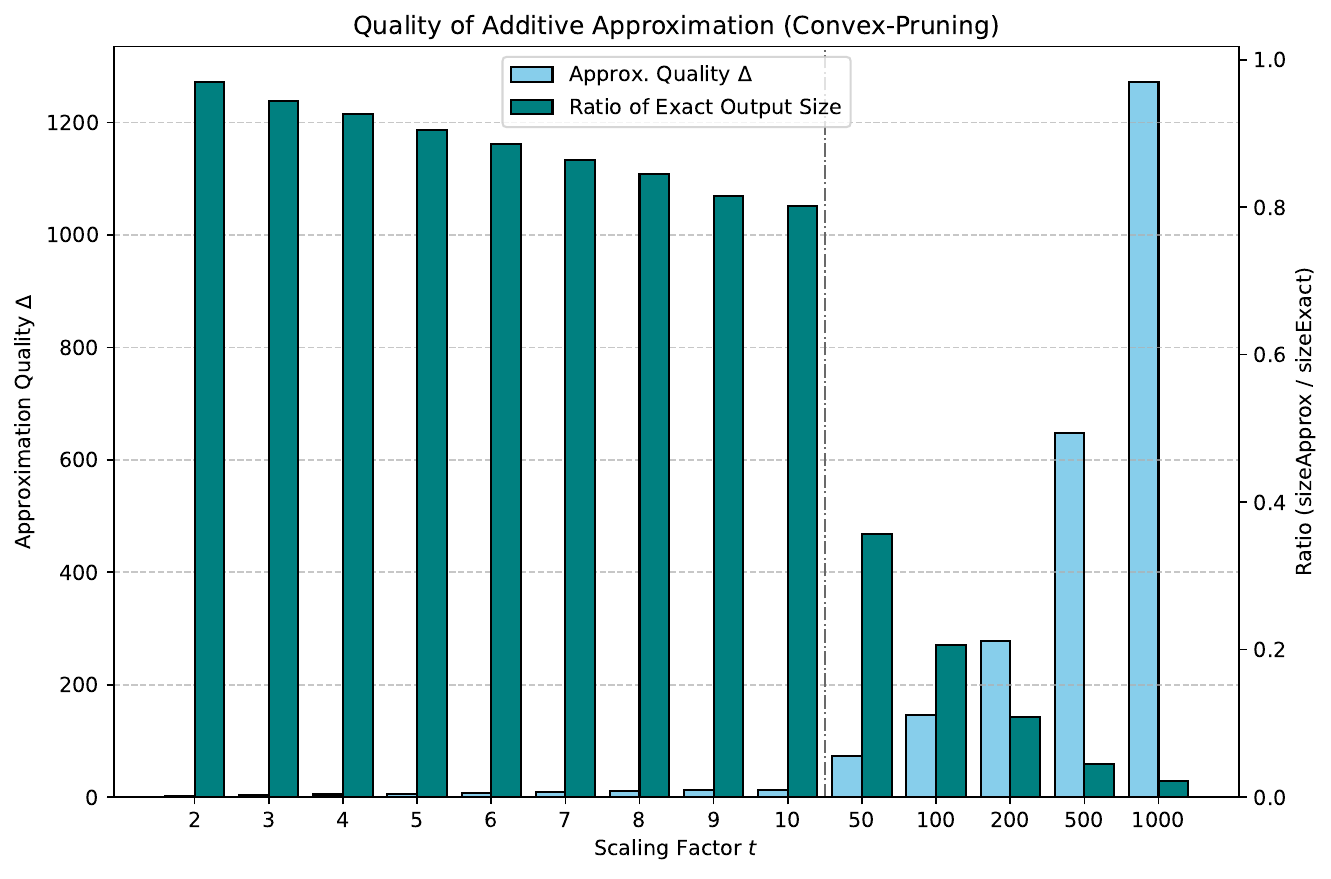}
    \caption{Quality of the approximation using the CP algorithm on real benchmarks for varying scaling factors $t=2, \dots, 1000$. We compare the ratio of the output size of the approximation and the exact algorithm to the approximation quality $\Delta$.}    
    \label{fig:approxOutputSize}
\end{figure}
\section{Future Work \& Discussion}
We have established that fine-grained reductions are not only a means to obtain theoretical intractability results, but also a tool to efficiently solve problems via reductions to other problems for which fast algorithms are implemented and engineered.
To further foster (bounded monotone) min-plus convolution as a core primitive, it would be interesting to know whether the $\tilde{O}(n^{1.5})$ algorithm \cite[Section 4.2]{ChiDX022}, admits an efficient implementation. The main drawback right now is that it needs $9$ calls to the monotone bounded min-plus function, and that the constant $b\in [-4,7]$ is much larger than in the algorithm we implemented.  It seems that significant simplifications are needed beforehand. We believe that the convex pruning algorithm can benefit from practical sparse convolution algorithms, especially on near-linear inputs. Is it possible to engineer fast sparse convolution algorithms in practice? 
Finally, can we leverage our min-plus convolution results to obtain practical improvements for knapsack \cite{BringmannC22,BringmannD024} or the tree sparsity problem \cite{MuchaW019,BackursIS17}? 
\bibliography{lipics-v2021-sample-article}

\appendix
\section{Retracing the steps of the $\tilde{O}(n^{1.6})$-time bounded monotone min-plus convolution algorithm~\cite{ChiDX022}} \label{sec:Chi_Algorithm}
The main idea of the algorithm of Chi, Duan, Xie and Zhang \cite[Section 4.1]{ChiDX022} (in the following called CDXZ algorithm) can be described in two main phases. In the first phase, we compute an approximation of the min-plus convolution of the arrays $A$ and $B$. In particular they compute the min-plus convolution of scaled-down versions of $A$ and $B$, which we denote as $A'$ and $B'$. The values are scaled-down by a factor $n^\alpha$. The benefit of scaling down is that many values in the original arrays $A$ and $B$, will collapse to a same scaled down value. The min-plus convolution of the scaled down version $A',B'$ can now be computed faster, as we will see in Section \ref{sec:Approx_computation}. Let $C'$ denote the min-plus convolution of $A'$ and $B'$.
After computing the min-plus convolution $C'$, scaling up again will not suffice. Scaling the result array $C'$ up, introduces a certain error in each entry $C'[k]$.
In the second phase, we attempt to correct this error. As a main tool, we make use of FFT for fast polynomial multiplication. Lemma \ref{lem:approx}, introduces an interesting connection between the scaled down min-plus convolution and the original min-plus convolution. Every solution $(i,k-i)$ in the original min-plus convolution i.e., fulfilling $A[i]+B[k-i]=C[k]$, will almost correspond to a solution in the scaled-down version. More precisely there exists a $b \in \{0,1\}$ such that  $A'[i] +B'[k-i]= C'[k]+b$. In particular, given a solution $(i,k-i)$ in the scaled-down version, we can now decide if $(i,k-i)$ can potentially be a solution in the original version.
This insight leads to a first idea. By creating polynomials 
\[P(x,y,z):=\sum_{i \in [n-1]} x^{A[i] -n^\alpha A'[i]}y^{A'[i]}z^{i}\] and \[Q(x,y,z):= \sum_{i \in [n-1]}x^{B[i] - n^\alpha B'[i]}y^{B'[i]}z^{i},\] and computing the polynomial multiplication $T(x,y,z):=P\cdot Q$, by looking at the monomials in $T$, we can infer at each index $k=i+j$ (inferred by the degree of $z$), the error introduced by scaling down (inferred by the degree of $x$), and whether $i$ and $j$ are good indices i.e., $(i,j)$ correspond to solutions in the original min-plus convolution problem (inferred by the degree of $y$).
Sadly, this first idea will not work. Calculating the polynomial multiplication of $P$ and $Q$ will cost too much time, and the polynomials $P$ and $Q$ are too large. To reduce the size of the polynomials, we make use of hashing\footnote{The same procedure has been recently used to solve $3$-SUM in preprocessed universes using FFT \cite{Kasliwal0S25}.} modulo a prime $p$ . Instead of noting down the scaled-down values in the polynomials, they first compute this value modulo $p$. Formally they look at the polynomials,
$P(x,y,z):=\sum_{i\in [n-1]} x^{A[i] -n^\alpha A'[i]}y^{A'[i] \mod p}z^{i}$ and $Q(x,y,z) := \sum_{i\in [n-1]} x^{B[i] - n^\alpha B'[i] \mod p}y^{B'[i]}z^{i}$. 
While this again introduces the new problem of pseudo-witnesses i.e., pairs $(i,k-i)$, which satisfy $A'[i] +B'[k-i]  \equiv C'[k] +b \mod p$ but not $A'[i] +B'[k-i] \neq C'[k] +b$, it turns out that we can efficiently identify those and remove them.
When retracing the steps of \cite{ChiDX022} we emphasize the changes we made. 
\subsection{Computing the approximation} \label{sec:Approx_computation}
In order to compute the approximation, we first define the scaled-down versions of the arrays. For the value $n^{\alpha}$, where $\alpha \in (0,1)$ will be defined later, we set $A'[i]:=\left \lfloor \frac{A[i]}{n^\alpha} \right \rfloor$ and $B'[i]:= \left \lfloor \frac{B[i]}{n^\alpha} \right \rfloor $ for every $i \in [n-1].$ As the values in $A$ and $B$ were bounded by $\mathcal{O}(n)$, we can divide $A'$ and $B'$ into $\mathcal{O}(n^{1-\alpha})$ many groups of the same value. Formally, we separate the index space $[n-1]:=[0,a_1-1] \cup [a_1,a_2-1] \dots \cup [a_{m-1},n-1] $, where $a_0:=0, a_m=n$  and all values inside a group in the array $A'$ are equal, that is, for each $j \in[m-1]$ we have $A'[a_j] = \dots =A'[a_{j+1}-1]$. Similarly, for the array $B'$, we separate the index space $[n-1]:= [0,b_1-1] \cup \dots \cup [b_{\ell-1},n-1]$, where $b_0:=0$ and $b_\ell=n$ with the property that all values in $B'$ inside the intervals are equal, that is, for each $j \in [\ell-1]$ we have $B'[b_j]=B'[b_j+1]= \dots = B'[b_{j+1}-1]$.

To compute the min-plus convolution of $A'$ and $B'$, the authors in \cite{ChiDX022} introduce a segment tree with interval updates, where for each $i \in[m-1]$ and $j \in [\ell-1]$, they compute the value $A'[a_i] +B'[b_j]$ and update the minimal value on the interval $[a_i+b_j, a_{i+1}+b_{j+1}-2]$.
This costs time $\mathcal{O}(n^{2-2\alpha} \log(n))$.

We observe that we can remove the segment tree entirely, and thereby the logarithmic factor, by simply making use of the enhanced min-plus convolution Algorithm \ref{alg:enhanced_min_plus}. This will lead to a better runtime of $\mathcal{O}(n^{2-2\alpha}).$

\begin{algorithm}
\caption{Enhanced Min-Plus Convolution $(A,B).$}
\label{alg:enhanced_min_plus}
\begin{algorithmic}
\STATE{$C:=[\infty, \dots, \infty ]$}
\STATE{Divide $A$ into intervals $[0=:a_0, a_1-1], \dots, [a_{m-1},a_m:=n-1]$ and $B$ into intervals $[0=:b_0, b_1-1], \dots, [b_{\ell-1}, b_\ell:=n-1]$ of the same value. }
\FOR{$i \in [m-1]$ }
\FOR{$j \in [\ell-1]$}
\STATE{$C[a_i+b_j] \gets \min \{C[a_i+b_j], A[a_i] + B[b_j] \}$}
\ENDFOR
\ENDFOR

\FOR{$i \in [1,2n-2]$}
\STATE{$C[i] \gets \min \{C[i], C[i-1] \}$ }
\ENDFOR

\RETURN $C$
\end{algorithmic}
\end{algorithm}

We first show a simple observation, which will be beneficial to show the correctness of  the enhanced min-plus convolution algorithms.
\begin{lemma}
If $C$ is the resulting min-plus convolution of monotonically decreasing arrays $A$ and $B$, then $C$ is also monotonically decreasing. \label{lem: decreasing}
\end{lemma}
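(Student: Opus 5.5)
The plan is to show directly that $C[k] \ge C[k+1]$ for every $k \in [2n-3]$, by shifting a minimizing index pair for $C[k]$ one step to the right in one of the two arrays. By definition, $C[k] = \min_{i} (A[i] + B[k-i])$, where the minimum ranges over indices $i$ with $0 \le i \le n-1$ and $0 \le k-i \le n-1$. Fix such a $k$ and choose a witness $i^*$ attaining $C[k] = A[i^*] + B[k-i^*]$.

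From $i^*$ I would construct a feasible pair for the index $k+1$ whose value is at most $C[k]$. There are two cases.

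\begin{enumerate}
\item If $k - i^* < n-1$, take the pair $(i^*, k+1-i^*)$. It is feasible for $k+1$, and since $B$ is monotonically decreasing, $B[k+1-i^*] \le B[k-i^*]$. Hence
\[
C[k+1] \le A[i^*] + B[k+1-i^*] \le A[i^*] + B[k-i^*] = C[k].
\]
\item Otherwise $k - i^* = n-1$. Since $k \le 2n-3$, this forces $i^* = k-n+1 \le n-2$, so $i^*+1$ is a valid index. Take the pair $(i^*+1, n-1)$. It is feasible for $k+1$, and monotonicity of $A$ gives $A[i^*+1] \le A[i^*]$, so the same chain of inequalities applies.
\end{enumerate}

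In both cases $C[k+1] \le C[k]$. Chaining this inequality over all $k$ gives monotonicity of $C$ over its whole range $[2n-2]$.

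There is no real obstacle here. The only care needed is the boundary bookkeeping in the second case, namely verifying that at least one of the two indices can be incremented while staying within $[n-1]$. The case split handles this, using $k+1 \le 2n-2$. The same argument works verbatim if ``monotonically decreasing'' is read strictly or non-strictly, because only non-strict inequalities are used.
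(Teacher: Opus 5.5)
Your proposal is correct and matches the paper's proof: the paper also takes a witness $(i,k-i)$ for $C[k]$ and notes that either $(i,k-i+1)$ or $(i+1,k-i)$ is a valid index pair for $k+1$, then uses monotonicity of $B$ or $A$ to get $C[k+1]\le C[k]$. The only difference is that you spell out why one of the two shifts stays in range (via $k\le 2n-3$), which the paper simply asserts.
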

\begin{proof}
Consider an index $k \in [2n-3]$, then there exists an index pair $(i,(k-i)) \in [n-1] \times [n-1]$, with $C[k] =A[i] +B[k-i]$. We will show that $C[k+1] \leq C[k]$. We give a potential witness for the index $k+1$, which is already not greater than $C[k]$. Either the index pair $(i,(k-i)+1) \in [n-1] \times [n-1]$ or $(i+1,(k-i)) \in [n-1] \times [n-1]$. Assume the first case (the second case is analogous), then by monotonicity of $B$, we have $B[k-i] \geq B[(k-i)+1]$, and thereby $C[k+1] \leq A[i] + B[(k-i)+1] \leq A[i]+B[k-i] \leq C[k].$
\end{proof}
We now proceed with the correctness of Algorithm \ref{alg:enhanced_min_plus}.
\begin{lemma}
Algorithm \ref{alg:enhanced_min_plus} computes the correct monotone min-plus convolution of $A$ and $B$.
\end{lemma}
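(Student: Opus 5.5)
We prove the two inequalities $C_{\mathrm{alg}}[k]\ge C[k]$ and $C_{\mathrm{alg}}[k]\le C[k]$ for every $k\in[2n-2]$. Here $C_{\mathrm{alg}}$ is the array returned by Algorithm~\ref{alg:enhanced_min_plus}, and $C$ is the true min-plus convolution $C[k]=\min_{i+j=k}(A[i]+B[j])$.

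\emph{Upper bound on the true value, $C_{\mathrm{alg}}[k]\ge C[k]$.} We first describe the array after the double loop, before the prefix-minimum pass. Every finite entry $C_{\mathrm{alg}}[a_i+b_j]=A[a_i]+B[b_j]$ is the value of an actual index pair summing to $a_i+b_j$. Hence each such entry is at least $C[a_i+b_j]$. After the prefix-minimum pass, $C_{\mathrm{alg}}[k]$ equals some pre-pass entry $C_{\mathrm{alg}}[k']$ with $k'\le k$, or it is still $\infty$. Therefore
\[
C_{\mathrm{alg}}[k]\ \ge\ C[k']\ \ge\ C[k].
\]
The last step uses Lemma~\ref{lem: decreasing}: $C$ is monotonically decreasing because $A$ and $B$ are.

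\emph{Lower bound, $C_{\mathrm{alg}}[k]\le C[k]$.} Fix an optimal pair $(i,k-i)$ with $C[k]=A[i]+B[k-i]$. The index $i$ lies in some interval $[a_r,a_{r+1}-1]$ of $A$, and $k-i$ lies in some interval $[b_s,b_{s+1}-1]$ of $B$. All entries inside one interval are equal, so $A[a_r]=A[i]$ and $B[b_s]=B[k-i]$. The double loop visits the pair $(r,s)$. It therefore ensures that the pre-pass entry at position $a_r+b_s$ is at most $A[i]+B[k-i]=C[k]$.

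Moreover $a_r+b_s\le i+(k-i)=k$. The prefix-minimum pass then carries this value forward to position $k$, which gives $C_{\mathrm{alg}}[k]\le C[k]$. This also shows that no entry stays $\infty$.

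\emph{Main obstacle.} The only delicate point is the first direction. Propagating a value $A[a_r]+B[b_s]$ to a larger index $k$ must never produce something smaller than the true $C[k]$. This is exactly where the monotonicity of $C$ from Lemma~\ref{lem: decreasing} is needed. We should also check that the interval left endpoints are valid indices, so that every candidate written is an actual pair sum, and that $a_r+b_s\le 2n-2$ always holds. Both follow directly from the construction of the intervals.
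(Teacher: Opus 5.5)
Your proof is correct and follows the same approach as the paper. Take an optimal pair, move to the left endpoints of its constant-value intervals (same value, index sum $a_r+b_s\le k$), let the prefix-minimum pass carry that value to $k$, and use Lemma~\ref{lem: decreasing} to rule out anything smaller being propagated. Your write-up is cleaner on one point: you invoke monotonicity of the true convolution exactly where it is needed, namely for $C_{\mathrm{alg}}[k]\ge C[k]$. The paper dismisses that direction as ``clearly'' and instead cites Lemma~\ref{lem: decreasing} at a step where equality already follows from the intervals being constant.
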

\begin{proof}
After the division of $A$ and $B$ into intervals, we can assume for each $i \in [m-1 ]$ that $A[a_i]=A[a_i+1]=\dots =A[a_{i+1}-1]$ and for each $j \in [\ell-1]$ that $B[b_j]=B[b_j+1]=\dots = B[b_{j+1}-1]$.

Let $k \in [2n-2]$ be an arbitrary index. We show that the value of $C[k]$ is computed correctly. Denote by $c^*$ the optimal value at index $k$. Clearly $C[k] \geq c^*$.
Let $i \in [n-1]$ be a witness for $k$ i.e. $A[i]+B[k-i]=c^*$. Then $i \in [a_m,a_{m+1}-1]$, and $k-i \in [b_o, b_{o+1}-1]$. We first argue that $C[a_m+b_o]=c^*$. Clearly, $A[a_m]+B[b_o] \leq c^*$ by the computation of the minimum. By Lemma \ref{lem: decreasing}, $C$ is monotonically decreasing, leading to $A[a_m]+B[b_0] \geq c^*$.  Finally, in the last for loop, the propagation of the minimum leads to the fact that $C[i+j] \leq c^*$ as $a_m + b_o \leq i +j \leq a_{m+1}-1 + b_{o+1}-1.$ 
\end{proof}
Dividing $A$ and $B$ into intervals of same value can be done in time $\mathcal{O}(n)$ as they are monotonically decreasing.
In total, Algorithm \ref{alg:enhanced_min_plus} runs in time $\mathcal{O}(m\ell)$, where $m$ and $\ell$ denote the number of groups with same entries in $A$ and $B$ respectively. In the case of the arrays $A'$ and $B'$, we have at most $n^{1-\alpha}$ many groups, resulting in a total runtime of $O(n^{2-2\alpha}).$
After computing the approximation, we can show the following relationship between the scaled-down min-plus convolution and the original min-plus convolution problem. The bound we show in the following will be a bit tighter than the bound calculated by \cite{ChiDX022}.
\begin{lemma}
\label{lem:approx}
If $A[i] + B[j] = C[i+j]$ then $A'[i] +B'[j] = C'[(i+j)] +b $ with $b \in \{0,1\}. $
\end{lemma}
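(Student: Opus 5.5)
Write $N := n^\alpha$, so that $A'[i]=\lfloor A[i]/N\rfloor$ and $B'[j]=\lfloor B[j]/N\rfloor$. The plan is to sandwich $A'[i]+B'[j]$ between $C'[i+j]$ and $C'[i+j]+1$. Both sides follow from the elementary floor inequalities
\[
\lfloor x\rfloor+\lfloor y\rfloor \le \lfloor x+y\rfloor \le \lfloor x\rfloor+\lfloor y\rfloor+1,
\]
applied with $x = A[\cdot]/N$ and $y = B[\cdot]/N$.

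\emph{Lower bound $b\ge 0$.} This holds for any pair $(i,j)$, not only optimal ones. By definition $C'[i+j]=\min_{i'+j'=i+j}(A'[i']+B'[j'])$, and $(i,j)$ is one admissible pair in that minimum, so $A'[i]+B'[j]\ge C'[i+j]$.

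\emph{Upper bound $b\le 1$.} Here the hypothesis $A[i]+B[j]=C[i+j]$ is used. Let $(i',j')$ attain $C'[i+j]$, i.e., $i'+j'=i+j$ and $A'[i']+B'[j']=C'[i+j]$. Optimality of $(i,j)$ in the original instance gives $A[i]+B[j]\le A[i']+B[j']$. The chain of inequalities is then
\[
A'[i]+B'[j]\le \left\lfloor \tfrac{A[i]+B[j]}{N}\right\rfloor \le \left\lfloor \tfrac{A[i']+B[j']}{N}\right\rfloor \le A'[i']+B'[j']+1 = C'[i+j]+1 .
\]
The first step is the left floor inequality, the second is monotonicity of the floor, and the third is the right floor inequality. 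Combining with the lower bound gives $b\in\{0,1\}$.

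The argument is short, and the main point is routing through $\lfloor (A[i]+B[j])/N\rfloor$ so that only a single unit of slack is lost. Comparing $A'[i]+B'[j]$ with $A'[i']+B'[j']$ termwise would lose up to $2$ units. Going through the floor of the sum loses at most $1$ in each direction, and only the upward direction costs anything. This is why the bound is tighter than the range $\{0,1,2\}$ in \cite{ChiDX022}.

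Monotonicity of $A$ and $B$ is not needed for this step. The lemma holds for arbitrary nonnegative arrays, provided $C'$ is the exact min-plus convolution of $A',B'$ as computed by Algorithm~\ref{alg:enhanced_min_plus}.
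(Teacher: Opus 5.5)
Your proof is correct and is essentially the paper's argument: both compare against a minimizer $(k,l)$ of $C'[i+j]$, use the optimality of $(i,j)$ for $C$, and obtain the improvement from $2$ to $1$ because the total rounding loss is strictly less than $2$. The paper gets this from the strict inequality $n^\alpha A'[k]+n^\alpha B'[l] > A[k]+B[l]-2n^\alpha$ together with integrality, while you get it from $\lfloor x+y\rfloor \le \lfloor x\rfloor+\lfloor y\rfloor+1$; you also state the lower bound $b\ge 0$ explicitly, which the paper leaves implicit.
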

\begin{proof}
We show this by the following calculation. Assume $(i+j)  = (k+l) $ and $C'[(i+j)] = A'[k] + B'[l].$ Then :
\begin{align}
n^{\alpha} \cdot C'[i+j ]  &= n^{\alpha}  \cdot C'[k+l ] \\
                                        & = n^{\alpha}  \cdot  \left( A'[k]  + B'[l] \right)\\
                                        & = n^{\alpha} \cdot  A'[k]  + n^{\alpha}  \cdot  B'[l] \\
                                        & >  A[k] + B[l] - 2 n^{\alpha} \label{eq:greater}  \\
                                        & \geq C[i+j] -2 n^{\alpha}  \\
                                        & = A[i] +B[j] - 2n^{\alpha}  \\
                                        & \geq  n^{\alpha}  \cdot A'[i] +  n^{\alpha}  \cdot B'[j] -2n^{\alpha} 
\end{align}
Thus, by dividing  by $n^{\alpha} $, we obtain $A'[i] +B'[j] - C'[i+j] < 2. $
\end{proof}

The authors in \cite{ChiDX022} have bounded this constant $b \in \{0,1,2\}$, but by observing the strict inequality at Equation \ref{eq:greater}, we can infer that even $b \in \{0,1\}$ holds.
\subsection{Error Correction}\label{sec:error_correction} This section entirely follows \cite{ChiDX022}, there were no major simplifications made by us.
Let $p$ be a prime number, with $n^\beta\leq p\leq 2n^\beta$, where $\beta \in (0,1)$ will be chosen later.
We compute two polynomials \[P(x,y,z):=\sum_{i\in [n-1]}x^{A[i] -n^\alpha A'[i]}y^{A'[i] \mod p}z^{i},Q(x,y,z) := \sum_{i\in [n-1]}x^{B[i] - n^\alpha B'[i] }y^{B'[i] \mod p}z^{i},\]
and denote the polynomial multiplication by $T(x,y,z):=P \cdot Q.$ The polynomials $P$ and $Q$ are bounded by the length $\mathcal{O}(n^{\alpha} \cdot n^\beta \cdot n)=\mathcal{O}(n^{\alpha + \beta +1 })$ and the multiplication costs $\mathcal{O}(n^{\alpha + \beta +1 } \log(n))$ time using FFT.
Furthermore, we denote the set of pseudo-witnesses for $b \in \{0,1\}$ as 
\[T_b:=\{(i,k-i)  \mid  A'[i]+B'[k-i] \equiv C'[k] \mod p  \land A'[i]+B'[k-i] \neq C'[k]]\}\]
\begin{lemma}[\cite{ChiDX022}]
The set $T_b$ has expected size $\mathcal{O}(n^{2-\beta})$ and can be computed in time $\mathcal{O}(|T_b|+n^{2-2\alpha} \log(n))$.
\label{lem:Tb_size}
\end{lemma}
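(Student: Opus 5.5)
The plan is to treat the two claims separately: the expected size bound comes from a probabilistic argument over a uniformly random prime $p\in[n^\beta,2n^\beta]$, and the running time from enumerating $T_b$ block by block, using the same grouping of $A'$ and $B'$ into intervals of equal value as in Algorithm~\ref{alg:enhanced_min_plus}. Throughout, I read the definition of $T_b$ with the shift $b$ made explicit. That is, $(i,k-i)\in T_b$ iff $A'[i]+B'[k-i]\equiv C'[k]+b \pmod p$ and $A'[i]+B'[k-i]\neq C'[k]+b$.

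\emph{Expected size.} Fix any index pair $(i,j)$ with $k=i+j$. Put $D:=A'[i]+B'[j]-C'[k]-b$. Since all scaled-down values lie in $[0,O(n^{1-\alpha})]$, we have $|D|=O(n^{1-\alpha})\le O(n)$. The pair lies in $T_b$ only if $D\neq 0$ and $p\mid D$. A nonzero integer of absolute value $O(n)$ has at most $O(1/\beta)=O(1)$ prime divisors that are at least $n^\beta$. By the prime number theorem, the interval $[n^\beta,2n^\beta]$ contains $\Theta(n^\beta/\log n)$ primes. Hence $\Pr[p\mid D]=O(\log n/n^\beta)$. Summing over all $O(n^2)$ pairs and applying linearity of expectation gives $\mathbb{E}|T_b|=O(n^{2-\beta}\log n)$. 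This matches the claimed $O(n^{2-\beta})$ up to the logarithmic factor, which I would absorb in the $\tilde O$ sense used in the runtime analysis; for an $n$-independent choice of $\beta$ this is the same accounting as in \cite{ChiDX022}. I expect this accounting of the $\log n$ to be the only delicate point of this part.

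\emph{Computing $T_b$.} First preprocess $C'$. By Lemma~\ref{lem: decreasing}, $C'$ is monotonically decreasing, so it splits into maximal blocks of constant value, each block a contiguous range of indices $k$. For every residue $r\in[p-1]$, store the list of blocks whose value is $\equiv r \pmod p$, sorted by position; this takes $O(n+p)$ time.

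Then iterate over all pairs consisting of a group $[a_s,a_{s+1}-1]$ of $A'$ with value $a$ and a group $[b_t,b_{t+1}-1]$ of $B'$ with value $c$. There are $O(n^{2-2\alpha})$ such pairs. Every index pair $(i,j)$ from these two groups has sum $k$ in the interval $K_{s,t}=[a_s+b_t,\,a_{s+1}+b_{t+1}-2]$. Conversely, for every $k\in K_{s,t}$ the set of valid $i$ is a nonempty interval that can be computed in $O(1)$. Such a pair is a pseudo-witness exactly when $C'[k]$ lies in a block of value $v\equiv a+c-b \pmod p$ with $v\neq a+c-b$.

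For each group pair, the procedure is:
\begin{enumerate}
\item Binary-search the residue list of $r=(a+c-b)\bmod p$ for the blocks meeting $K_{s,t}$.
\item Skip the at most one block whose value equals $a+c-b$. Because $C'$ is monotone, that block sits at a single position in the sorted list, so it is found and removed in the same search.
\item Enumerate every remaining $k$ in those blocks together with its interval of valid $i$.
\end{enumerate}
Each enumerated $k$ produces at least one element of $T_b$, so the enumeration is charged to the output. The cost per group pair is therefore $O(\log n)$ plus its share of $|T_b|$. Summing over all group pairs gives $O(|T_b|+n^{2-2\alpha}\log n)$ in total.

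The main obstacle is the output-sensitivity of the enumeration: we must never touch a $k$ that contributes no pair, and never touch a genuine witness. This is handled by the two facts above. Monotonicity of $C'$ makes both the residue class and the excluded value contiguous in $k$. The interval structure of the groups makes every $k\in K_{s,t}$ realizable by at least one index pair.
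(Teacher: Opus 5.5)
Your proposal is correct and follows the paper's proof (itself taken from CDXZ). The size bound uses a random prime $p\in[n^\beta,2n^\beta]$, the $O(1/\beta)$ bound on large prime divisors of a nonzero difference, and linearity of expectation. The construction works per pair of equal-value groups of $A'$ and $B'$: it performs a logarithmic-time search for the indices $k\in[a_s+b_t,a_{s+1}+b_{t+1}-2]$ with $C'[k]+b\equiv A'[i]+B'[j] \pmod p$ but $C'[k]+b\neq A'[i]+B'[j]$, and then enumerates the interval of valid $i$; your residue-indexed lists of constant blocks of $C'$ simply replace the paper's search trees.

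The extra $\log n$ in your size bound is not a defect of your argument. The paper's ``probability $O(1/\beta)$'' is garbled, and prime density only gives $\Pr[p\mid D]=O(\log n/(\beta n^\beta))$, so the argument yields $O(n^{2-\beta}\log n)=\tilde O(n^{2-\beta})$. That is the form in which the bound enters the paper's final $O(n^{2-\beta}\log n)$ runtime term.
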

\begin{proof}
Make use of the separation of $A'$ and $B'$ into intervals with the same value. When fixing two intervals $[a_s,a_s-1]$ and $[b_t, b_{t+1}-1]$, the value $A'[i] +B'[j]=\Delta$ is fixed for all $(i,j) \in [a_s,a_s-1] \times [b_t, b_{t+1}]$. The aim is to compute all $(i,j) \in T_b$. After fixing two intervals search for all indices $k \in [a_s+b_t, a_{s-1}+b_{t-1}-2] $ with the property that $C'[k]+b \equiv \Delta \mod p$ and $C'[k]+b \neq \Delta$,
and simply enumerate all $(i,k-i)$ satisfying $ \max\{a_s,k+1-b_{t+1}\} \leq i \leq \min \{ a_{s+1}-1, k-b_t\}$. The indices $k$ can be searched using binary search trees.
This results in a runtime of $\mathcal{O}(n^{2-2\alpha} \log n+|T_b|).$
For any pair $(i,k-i)$ such that $A'[i] +B'[i] \neq C'[k]+b $ as $p$ is chosen uniformly random in the range $[n^\beta,n^{2\beta}]$, it holds that the probability that $A'[i]+B'[i] -C'[k]-b$ is divisible by $p$, is bounded by $\mathcal{O}(1 /\beta).$ By linearity of expectation $E(|T_b|) \leq \mathcal{O}(n^{2-\beta}).$
\end{proof}
Compute for each index $k \in [2n-2]$, and $b \in\{0,1\}$ the polynomial \[R_{p,b}[k](x):= \sum_{(i,k-i) \in T_b} x^{A[i]-n^\alpha A'[i] + B[k-i] - n^\alpha B'[k-i]},\]
and the polynomial
\begin{align*}
C_{p,b}[k](x) := \sum_{} \lambda x^c, &\text{ where }  \lambda x^cy^{C'[k]+b \mod p} z^k  
                                  \text{ is a monomial in } T(x,y,z). 
\end{align*}
Finally, by computing the value  \begin{equation}\label{eq:sbk}
    s_{b}[k]:= \min \{c: \lambda x^c \text{ is a monomial in } C_{p,b}[k]-R_{p,b}[k] \}
\end{equation} for each $b \in \{0,1\}$ and $k \in [2n-2],$
we can reconstruct the value of $C[k]$ as  \[C[k]= \min_{b \in \{0,1\}} \left \{n^\alpha (C'[k]+b) +s_{b}[k]\right \}.\]
The correctness follows by the same calculation as performed by \cite{ChiDX022}, but intuitively follows from the fact that in $s_b[k]$ we subtract from the total number of witnesses the number of pseudowitnesses. Although \cite{ChiDX022} have shown correctness for the case of monotonically increasing arrays $A$ and $B$, the proof is indifferent of this fact, and works the same for monotonically decreasing arrays $A$ and $B$.
The final runtime will be $\mathcal{O}(n^{2-2\alpha}) + \mathcal{O}(n^{2-\beta} \log{(n)})+\mathcal{O}(n^{1+\alpha+\beta} \log(n))$. For the choice of $\alpha=0.2$ and $\beta =0.4$, the runtime will amount to $\mathcal{O}(n^{1.6}\log{n}).$

\section{Convex-Pruning Algorithm \cite{BringmannC23}} \label{sec:convex_pruning}
We summarize the main idea of the algorithm of \cite{BringmannC23}, for the details we refer to their highly readable paper.
The algorithm of Bringmann and Cassis \cite{BringmannC23} works as follows:
A function $f':[n] \to \mathbb{Z}$ is a $\Gamma$-near convex approximation of a function $f:[n] \to \mathbb{Z}$, if $f'$ is convex and $f'(i) \leq f(i) \leq f'(i)+ \Gamma$ holds for each $i\in[n-1]$ i.e. $f'$ additively approximates the function $f$ by a factor of $\Gamma$. Let $A$ and $B$, be the input arrays, we want to compute the min-plus convolution of $A$ and $B$.

We first construct a convex approximation $A'$, where $A'[i] \in \mathbb{Q}$, of the array $A$. The way we do this is by computing the lower convex hull (by a Graham scan for instance), of the points $A= \{(i, A[i]): i \in [n-1] \}$. We perform a linear interpolation between all the points in the convex hull to get a valid point for each index $i \in [n-1]$. The same is performed for the array B to obtain a convex approximation $B'$. 
It is simple to see that this is the best point-wise convex approximation of the points in $A$ and $B$. Moreover, we get values $\Gamma_A$ and $\Gamma_B$, which we can compute by computing $ \Gamma_A:=\max_{i \in [n-1]} A[i]-A'[i]$. Similarly for the array $B$, we can compute $\Gamma_B$. We set $\Gamma= \max\{1,\Gamma_A, \Gamma_B \}.$

Compute the min plus convolution of the convex arrays $A'$ and $B'$ and store it in an array $C'$. This can be done in linear time by known methods such as the SMAWK algorithm by the authors in \cite{Aggarwal86}, which also computes  witnesses for $C'$ i.e. for each $k \in [2n-2]$, a witness $i \in [n-1]$ such that $A'[i] + B'[k-i] = C'[k].$

We now concentrate on finding the min-plus convolution of the original arrays $A$ and $B$, it turns out that the convex approximations can help us in pruning the search space of all possible indices $(i,j) \in [n-1] \times [n-1].$

For this, the \emph{relevant} region $R_{2\Gamma}:= \{(i,k-i): A'[i] + B'[k-i] \leq C'[i+j] + 2\Gamma\}$ is introduced.
Let us now show the key lemma for the convex pruning algorithm.
\begin{lemma}[\cite{BringmannC23}]
If $(i,j) \not \in R_{2\Gamma}$ then $A[i] + B[j] > C[i+j].$
\end{lemma}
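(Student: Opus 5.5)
The plan is to use the sandwich inequalities $A'[i]\le A[i]\le A'[i]+\Gamma$ and $B'[j]\le B[j]\le B'[j]+\Gamma$ to convert a lower bound on $A'[i]+B'[j]$ into a lower bound on $A[i]+B[j]$, and to upper-bound $C[i+j]$ through $C'[i+j]$. We fix $(i,j)\notin R_{2\Gamma}$ and write $k=i+j$. By the definition of $R_{2\Gamma}$ (reading $C'[i+j]$ as $C'[k]$), this means
\[ A'[i]+B'[j] > C'[k]+2\Gamma. \]
Since $A'\le A$ and $B'\le B$ pointwise, we immediately get
\[ A[i]+B[j] \ge A'[i]+B'[j] > C'[k]+2\Gamma. \]

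It then suffices to show $C[k]\le C'[k]+2\Gamma$. Let $i^*$ be the witness for $C'[k]$ returned by SMAWK, so that $A'[i^*]+B'[k-i^*]=C'[k]$ and $(i^*,k-i^*)$ is a valid index pair in $[n-1]\times[n-1]$. Using the upper bounds $A[i^*]\le A'[i^*]+\Gamma_A\le A'[i^*]+\Gamma$, and likewise for $B$, we obtain
\[ C[k]\le A[i^*]+B[k-i^*]\le A'[i^*]+B'[k-i^*]+2\Gamma = C'[k]+2\Gamma. \]
Chaining this with the previous inequality gives $A[i]+B[j]>C'[k]+2\Gamma\ge C[k]$, which is the claim.

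The only delicate point is bookkeeping rather than mathematics. We must confirm that the minimum defining $C'[k]$ ranges over the same index set as the minimum defining $C[k]$, so that the witness $i^*$ is admissible in the original convolution. This holds because $A'$ and $B'$ are defined on the same indices $[n-1]$ as $A$ and $B$. We also need $\Gamma\ge\Gamma_A,\Gamma_B$, which holds by the definition of $\Gamma$. Rational values of $A'$ and $B'$ cause no issue, since only inequalities are used.
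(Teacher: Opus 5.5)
Your proof is correct and follows the paper's argument essentially verbatim: lower-bound $A[i]+B[j]$ by $A'[i]+B'[j] > C'[i+j]+2\Gamma$, then pass through a minimizer $(x,y)$ of $C'[i+j]$ and use $A[x]\le A'[x]+\Gamma$, $B[y]\le B'[y]+\Gamma$ to reach $C[i+j]$. Your reading of the typo $C'[i+j]$ as $C'[k]$ in the definition of $R_{2\Gamma}$, and your check that the witness $i^*$ is an admissible index pair, are both appropriate.
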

\begin{proof}
Assume $(i,j) \not \in R_{2\Gamma}$, and that $C'[i+j] = A'[x] + B'[y]$ with $x+y=i+j$, then
\begin{align*}
A[i]+ B[j] \geq A'[i] + B'[j] > C'[i+j]+ 2 \Gamma &= A'[x] + \Gamma + B'[y] + \Gamma \\ 
&\geq A[x]+B[y] \geq C[i+j].
\end{align*}
\end{proof}
Thus, all pairs $(i,j) \in [n-1]^2$, which are not in the relevant region, do not need to be checked in order to compute the min-plus convolution of $A$ and $B$.  We will now see how one can use this idea to prune larger segments $(I,J) \subseteq [n-1]^2$.

First, we visualize the index space $ [n-1] \times [n-1]$ by a square, with the point $(0,0)$ being in the lower left corner and $(n,n)$ in the upper right corner; see Figure \ref{fig:near_convex}.
As a final definition, we need that of a monotone connected path.
We call a set $P\subseteq [n-1]\times [n-1]$ a connected monotone path\footnote{In Figure \ref{fig:near_convex} a monotone path crosses each diagonal $i_k+j_k=k$ precisely once.} if for every $k \in [2n-2]$, we have precisely one $(i_k,j_k)$ such that $i_k+j_k=k$ and that if $(i_k,j_k) \in P$ then either $(i_k+1,j_k) \in P$ or $(i_k,j_k+1) \in P.$
Bringmann and Cassis are able to show that the relevant region $R_{2\Gamma}$ is structured. The region $R_{2\Delta}$ is bounded between two monotone connected paths; see Figure \ref{fig:near_convex}.

Lastly, we can decide if a point $(i,j)$ is above or below the relevant region, by simply checking if we are outside the relevant region, and use the witness path for the convex approximations as a reference point. It even suffices to check one corner for a square $[i_A , i_B] \times [j_A, j_B]$; see Figure \ref{fig:near_convex}.

After computing the convex approximations the value $\Gamma$ and the witness path for the convex approximations, we can perform a call to Algorithm \ref{alg:convex_pruning} (which was also described\footnote{We remark that the pseudocode of Bringmann and Cassis has minor typos.} in \cite{BringmannC23}), with parameters $I=J=[n-1]$, the arrays $A,B$ for which we want to compute the min-plus convolution and an array $C=[\infty, \dots, \infty]$ of length $2n-2$, which we will improve throughout the algorithm. In the end $C$ will be the min-plus convolution of the arrays $A$ and $B$.

\begin{algorithm}
\caption{Convex-Pruning Algorithm to compute the min-plus convolution of $A$ and $B$ \cite{BringmannC23}.} 
\label{alg:convex_pruning}
\begin{algorithmic}
\STATE{\textbf{convex pruning}$(I=[i_A,i_B], J=[j_A,j_B],A,B, C):$}
\STATE{\quad \textbf{if }$(i_A,j_B) \not \in R_{2\Gamma} $ and $(i_A,j_B)$ \text{is above witness path} \textbf{then return} }
\STATE{\quad \textbf{elif }$(i_B,j_A) \not \in R_{2\Gamma} $ and $(i_B,j_A)$ \text{is below witness path} \textbf{then return} }
\STATE{\quad \textbf{elif} $(i_A,j_B) \in R_{2\Gamma}$ and $(i_B,j_A) \in R_{2\Gamma}$  \textbf{then}}
\STATE{\quad \quad $ \tilde{C}[i_A+j_A, \dots, i_B+j_B]\gets \text{min-plus convolution}(A[i_A,\dots, i_B],B[j_A, \dots, j_B ])$}
\STATE{\quad \quad \textbf{for $k \in [i_A+j_A, i_B+j_B]$ do}}
\STATE{\quad \quad \quad $C[k] \gets \min(C[k], \tilde{C}[k])$}
\STATE{\quad \quad \textbf{end for}}
\STATE{\quad \textbf{else} split $I$ and $J$ into two halves $I_1,I_2,J_1,J_2$, and recursively compute convex pruning on all 4 combinations}
\end{algorithmic}
\end{algorithm}

The theoretical (and practical) runtime of the algorithm heavily depends on which min-plus convolution, we plug in in the computation of $\tilde{C}$ in Algorithm \ref{alg:convex_pruning}.
Bringmann and Cassis observed that a sparse convolution is well suited in this case, as the size of the sumset $\{(i,A[i]) \mid i \in [i_A,i_B]\} + \{(j,B[j]) \mid j \in [j_A,j_B] \}$ is small. Using sparse convolutions in the convex pruning algorithm will result in Bringmann and Cassis delta-near convex min-plus convolution algorithm and run in time $\tilde{O}(\Gamma n).$ Due to the lack of practical and efficient sparse-convolution algorithms, we instead inserted the fastest min-plus convolution variant we had, which was the enhanced min-plus convolution algorithm described in Algorithm \ref{alg:enhanced_min_plus}. Notice that this is a degree of freedom and one can also plug in other min-plus convolution computations in there, such as the CDXZ algorithm. Plugging in the enhanced min-plus convolution algorithm in the computation of $\tilde{C}$ in Algorithm \ref{alg:convex_pruning} leads to a runtime of $\Theta(W^2)$, where $W$ denotes the length of the arrays $A$ and $B$.
Nevertheless, there is still potential for the algorithm to work well in practice, due to the pruning. Plugging the enhanced min-plus convolution algorithm in the convex pruning algorithm and then into our Pareto Sum framework leads to a runtime of $O(n+W^2).$
We remark that the algorithm also works for differing sizes of $A$ and $B$, for the runtime analysis and correctness proof, we refer to \cite{BringmannC23}.
\begin{figure}[h!]  % or [t], [b], [H], etc.
    \centering
\begin{tikzpicture}[scale=0.5]

% ================= LEFT FIGURE ===================

% Boundary square
\draw[thick] (0,0) rectangle (6,6);

% First path (blue)
\draw[blue, very thick]
    (0,0) -- (1,0) -- (1,1) -- (2,1) -- (3,1) -- (3,2) --
    (4,2) -- (4,3) -- (5,3) -- (5,4) -- (6,4) -- (6,6);

% Second path (red)
\draw[red, very thick]
    (0,0) -- (0,1) -- (1,1) -- (1,2) -- (2,2) -- (2,3) --
    (3,3) -- (3,4) -- (4,4) -- (4,5) -- (5,5) -- (5,6) -- (6,6);

% Shaded region
\begin{scope}
    \clip (0,0)
        -- (1,0) -- (1,1) -- (2,1) -- (3,1) -- (3,2)
        -- (4,2) -- (4,3) -- (5,3) -- (5,4) -- (6,4) -- (6,6)
        -- (5,6) -- (5,5) -- (4,5) -- (4,4) -- (3,4)
        -- (3,3) -- (2,3) -- (2,2) -- (1,2) -- (1,1)
        -- (0,1) -- (0,0) -- cycle;
    \fill[blue!15] (0,0) rectangle (6,6);
\end{scope}

% Labels
\node at (3,2.7) {$R_{2\Gamma}$};
\node at (-0.3,-0.3) {$(0,0)$};
\node at (6.4,6.4) {$(n,n)$};

% ================= RIGHT FIGURE ===================

\begin{scope}[shift={(8,0)}]

% Boundary square
\draw[thick] (0,0) rectangle (6,6);

% First path (blue)
\draw[blue, very thick]
    (0,0) -- (1,0) -- (1,1) -- (2,1) -- (3,1) -- (3,2) --
    (4,2) -- (4,3) -- (5,3) -- (5,4) -- (6,4) -- (6,6);

% Second path (red)
\draw[red, very thick]
    (0,0) -- (0,1) -- (1,1) -- (1,2) -- (2,2) -- (2,3) --
    (3,3) -- (3,4) -- (4,4) -- (4,5) -- (5,5) -- (5,6) -- (6,6);

% Shaded region
\begin{scope}
    \clip (0,0)
        -- (1,0) -- (1,1) -- (2,1) -- (3,1) -- (3,2)
        -- (4,2) -- (4,3) -- (5,3) -- (5,4) -- (6,4) -- (6,6)
        -- (5,6) -- (5,5) -- (4,5) -- (4,4) -- (3,4)
        -- (3,3) -- (2,3) -- (2,2) -- (1,2) -- (1,1)
        -- (0,1) -- (0,0) -- cycle;
    \fill[blue!15] (0,0) rectangle (6,6);
\end{scope}

% Upper small square
\draw[black, thick, fill=white]
    (0.5,4.3) rectangle (1.5,5.3);

% Highlight lower-right corner of upper square
\fill (1.5,4.3) circle (3pt);

% Lower small square
\draw[black, thick, fill=white]
    (4.2,0.2) rectangle (5.2,1.2);

% Highlight upper-left corner of lower square
\fill (4.2,1.2) circle (3pt);

% Labels
\node at (3,2.7) {$R_{2\Gamma}$};
\node at (-0.3,-0.3) {$(0,0)$};
\node at (6.4,6.4) {$(n,n)$};

\end{scope}

\end{tikzpicture}
    \caption{To the left, we see that the relevant region $R_{2\Gamma}$ is bounded by the blue and red monotone connected paths. To the right, we see two possible squares, defined by $[i_A, i_B] \times [j_A ,j_B]$. Both squares are in the non-relevant region, and can thus be ignored entirely for the min-plus convolution computation. It suffices to check only one point due to the monotone connected property of the paths. Namely, for the upper square it suffices to check the lower right corner $(i_A,j_B)$ and for the lower square it suffices to check the upper left corner $(i_B,j_A)$. We know in which part of the region we are, by checking if we are below or above the witness path, which always is contained in the relevant region. If a square is relevant it suffices to check the points defined by the upper left and lower right corner for relevance in the algorithm.  }
    \label{fig:near_convex}  % optional
\end{figure}
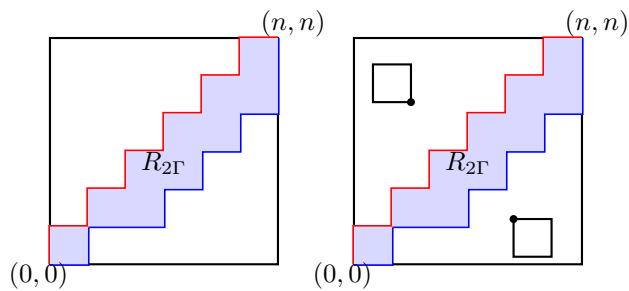

\section{Bucketsort \& Compare}\label{sec:bucketsort_and_compare}
In the bounded integer case, we can improve the sort and compare algorithm from \cite{FunkeHSST24LB} in a very simple manner. The sorting occurs in a typical bucketsort fashion, afterwards it remains to create the pareto set. The final algorithm will be very simple as sketched in Algorithm \ref{alg:bucketsort_and_compare}. Note that the algorithm assumes that the $x$-coordinates are bounded by $W$, and that we already translated the $x$-coordinate points of $P$ and $Q$ such that the smallest $x$-coordinate is 0, we can remedy this easily, by translating back the points in $S$ afterwards.
\begin{algorithm}[h!]
\caption{Bucketsort and compare$(P,Q)$}
\label{alg:bucketsort_and_compare}
\begin{algorithmic}
\STATE{$C:=[\infty, \dots, \infty ]$ }
\FOR{$p \in P$}
\FOR{$ q \in Q$}
\STATE{$C[p.x + q.x] \gets \min \{C[p.x+q.x], p.y+q.y \}$}
\ENDFOR
\ENDFOR
\STATE{$S \gets \emptyset, \ell \gets \infty $}
\FOR{$i = 0$ to $2W$}
\IF{$C[i]< \ell$}
\STATE{$S \gets S \cup \{(i,C[i])\}$}
\STATE{$\ell \gets C[i]$}
\ENDIF
\ENDFOR
\RETURN $S$
\end{algorithmic}
\end{algorithm}

The running time of the algorithm is clearly in $\mathcal{O}(n^2 + W)$. It is simple to extend the algorithm above with witness reporting.
\section{Exact Pareto sum algorithms \cite{FunkeHSST24LB}}  \label{sec:direct_algorithms}
Based on the experiments conducted in \cite{FunkeHSST24LB}, the fastest algorithms for Pareto sum computation across a large variety of inputs are the Sort \& Compare algorithm (SC) and the Successive Sweep Search (SSS) algorithm.

Both algorithms start by sorting the input sets $P$ and $Q$ lexicographically, and then operate on the matrix $M$ where $M_{i,j} = P_i + Q_j$. In the SC algorithm, a min-priority queue (PQ) is initialized with the elements $M_{1,1}$ to $M_{1,n}$. Whenever an element $M_{i,j}$ extracted from the  PQ, it is added to the solution set $S$ if it is not dominated by the point last inserted in $S$. In any case, its column successor in $M$, that is, $M_{i,j+1}$ is inserted next into the PQ unless it is also dominated by the latest element added to $S$. The algorithm was shown to run in time $\mathcal{O}(n^2 \log n)$ using $\mathcal{O}(n)$ space. The SSS algorithm uses a range-min oracle to detect the elements in $S$ one-by-one. $M_{1,1}$ is always part of $S$. Let now $x_{\min}$ and $y_{\max}$ denote the coordinates of the last point added to $S$.
The algorithm starts at $M_{n,1}$, that is, the last entry of the first column. Whenever it enters a new column $j$, it uses upwards linear search in that column until it  reaches an entry $M_{i,j}$ where either $M_{i,j}.x > x_{\min} $ and $M_{i-1,j}.x \leq x_{\min}$ or   where $M_{i,j}.y < y_{\max} $ and $M_{i-1,j}.y \geq y_{\max}$. It then switches to the next column. Among the elements  at which such a switch occurred, it selects the smallest one and adds it to $S$. The process is repeated until the oracle does not find a new point. 
As each call to the range-min oracle can be implemented to take time in $\mathcal{O}(n)$, the algorithms runs in time $\mathcal{O}(nk +n \log n)$ and uses again $\mathcal{O}(n)$ space.
Note that for bounded Pareto sum, we always have $k \in \mathcal{O}(W)$. 

\section{Witness Reporting}\label{sec: witness_reporting}
In this section, we extend the CDXZ algorithm for reporting witnesses. Formally, we want for each $C[k]$, where $k \in [2n-2]$, an index $i$ such that $C[k] = A[i] + B[k-i]$.  Witness reporting has been mostly studied for boolean matrix multiplication \cite{AlonN96, Seidel92} but the techniques transfer to the boolean convolution setting (as is the case in the monotone bounded min-plus convolution algorithm). 
In Algorithm \ref{alg:witness_report}, we introduce a witness finding algorithm for CDXZ using techniques of Alon and Naor  \cite{AlonN96,Seidel92} for boolean matrix multiplication witness finding.

The algorithm is based on the following simple intuition. Assume, we want to compute witness for the usual polynomial multiplication of $C(x) := A(x) \cdot B(x)$, where $A(x)$ and $B(x)$ have coefficients in $\{0,1\}$.
If the coefficient of some monomial $x^k$ in the result polynomial $C(x)$ is $1$, then we can recover its unique witness as follows: define a slight variation $A'$ of $A$ by replacing any monomial $x^{i}$ by $ix^{i}$. In the product polynomial $C'(x) = A'(x) \cdot B(x)$, we can read off a witness $i$ (such that $x^{i}$ is a nonzero monomial in $A$ and $x^{k-i}$ is a nonzero monomial in $B$) as the coefficient of $x^{k}$ in $C'(x)$.

Alon and Naor's approach now works in stages to also find witnesses for monomials $x^{k}$ with many witnesses: each stages gradually sifts down the number of witnesses by (randomly) ignoring/deleting some elements in $B(x)$, and then performing the observation to compute the unique indices.

Our adaption of the CDXZ algorithm will in fact find witnesses (if they exist) for the values $n^\alpha(C'[k] +b) +s_b[k]$ for both values of $b \in \{0,1\}.$ One can then solve the original problem, by looking up which $b$ was the true minimizer of $C[k]$, and choosing that witness.
We crucially make use of the following observation.
\begin{observation}[Witness count] \label{obs: fft_witness}
Let $p$ be a fixed prime number, and $b\in \{0,1\}$.
Assume that $S \subseteq [n-1]$, and we have the polynomials
\[P(x,y,z):= \sum_{i \in [n-1]}  \left( x^{A[i] -n^\alpha A'[i]}y^{A'[i] \bmod p}z^{i} \right), \] 
and the polynomial
\[Q(x,y,z) := \sum_{i\in [n-1]} x^{B[i] - n^\alpha B'[i]}y^{B'[i] \bmod p}z^{i}.\]
After calculating the following univariate polynomials for each index $k$ 
\[C[k](x) := \sum_{} \lambda x^c, \text{ where }  \lambda x^cy^{C'[k]+b \bmod p} z^k  
                                  \text{ is a monomial in } P(x,y,z) \cdot S(x,y,z).\]
and taking the polynomials representing the pseudowitnesses 
\[R[k](x):= \sum_{(i,k-i) \in T_b} \left(x^{A[i]-n^\alpha A'[i] + B[k-i] - n^\alpha B'[k-i]} \right).\]
We can observe that 
\[Z[k] \gets \lambda \text{ where } \lambda \text{ is the coefficient of } x^{s_b[k]} \text{ in }C[k](x) - R_S[k](x),\]
computes the number of true witnesses for the index $k$ (as we subtract the number of pseudowitnesses from the number of total witnesses).

Moreover, if $Z[k] = 1$, then we can compute a unique witness by reperforming the above calculation, but where we multiply each monomial $x^{i}$ in $P(x,y,z)$ and each monomial $c_i x^i$ in $R[k](x)$ by its witness index $i$.
\end{observation}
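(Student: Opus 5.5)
The plan is to interpret everything through the restriction map. For a subset $S\subseteq[n-1]$, let $S(x,y,z)$ denote the restriction of $Q(x,y,z)$ to the monomials with $z$-exponent in $S$; for $S=[n-1]$ this recovers $Q$. Correspondingly, $R_S[k](x)$ is the sum over pseudo-witnesses $(i,k-i)\in T_b$ with $k-i\in S$. By definition of the product, every pair $(i,k-i)$ with $k-i\in S$ contributes exactly one monomial to $P\cdot S$. That monomial is
\[x^{A[i]-n^\alpha A'[i]+B[k-i]-n^\alpha B'[k-i]}\,y^{(A'[i]+B'[k-i])\bmod p}\,z^{k}.\]
The first step is therefore to write $C[k](x)$ as the generating polynomial, in the error exponent, of all pairs $(i,k-i)$ with $k-i\in S$ and $A'[i]+B'[k-i]\equiv C'[k]+b \pmod p$.

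Next, split these pairs into two classes. The first class is the genuine pairs, those with $A'[i]+B'[k-i]=C'[k]+b$ exactly. The second class is the pseudo-witnesses in $T_b$ (reading $T_b$ with the shift $+b$, as in its use in Lemma~\ref{lem:Tb_size}). Subtracting $R_S[k]$ therefore removes exactly the pseudo-witness contributions, because $R_S[k]$ uses the same exponents. It follows that $C[k]-R_S[k]$ is the generating polynomial of the genuine pairs alone, and its coefficients are nonnegative integers.

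For a genuine pair, the total value is
\[A[i]+B[k-i]=n^\alpha(C'[k]+b)+e,\]
where $e$ is the $x$-exponent. The minimum exponent with nonzero coefficient is $s_b[k]$, as defined in \eqref{eq:sbk}. Hence the coefficient of $x^{s_b[k]}$ counts exactly the pairs achieving the value $n^\alpha(C'[k]+b)+s_b[k]$, which is the minimum over genuine pairs for this $b$. These are the true witnesses. Here I need the point that no cancellation occurs. This holds because pseudo-witnesses are subtracted exactly once each and all remaining coefficients are counts.

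For the second claim, take the weighted versions in which each monomial of $P$ with $z$-exponent $i$ gets coefficient $i$, and each pseudo-witness term of $R_S[k]$ gets coefficient $i$. The same bookkeeping then gives, as the coefficient of $x^{s_b[k]}$, the sum of $i$ over the true witnesses. When $Z[k]=1$, this sum is the unique witness index itself.

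The main obstacle is the bookkeeping that the pseudo-witnesses subtracted are exactly the non-genuine terms. This is delicate in two ways. First, the congruence must use the shift $+b$, so I would define $T_b$ carefully with that shift. Second, the exponents in $R_S$ must match those in the product monomially, and this holds because both are computed from the same pair $(i,k-i)$.
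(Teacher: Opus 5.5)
Your proposal is correct and follows the same route as the paper. The paper justifies the observation only by the remark that subtracting the pseudo-witness polynomial $R_S[k]$ from the total count leaves the true witnesses, plus the index-weighting trick (Alon--Naor/Seidel) for recovering a unique witness; you spell that bookkeeping out. One small precision: for a proper subset $S$ the smallest exponent of $C[k]-R_S[k]$ may exceed $s_b[k]$, so ``the minimum exponent is $s_b[k]$'' only holds for $S=[n-1]$. Still, because $s_b[k]$ is fixed from the full computation and the restricted genuine-pair polynomial is a coefficientwise sub-sum of the full one, the coefficient of $x^{s_b[k]}$ counts exactly the class-$b$ witnesses with $k-i\in S$. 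Also, like the paper, you implicitly combine the $y$-exponents modulo $p$ (i.e.\ multiply modulo $y^p-1$) when you write the product monomial with $y^{(A'[i]+B'[k-i]) \bmod p}$.
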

We will try to gradually sift down the value of true witnesses $Z[k]$ by dropping some elements in the polynomial $Q(x,y,z)$.
Thus, throughout the algorithm, the polynomial $Q(x,y,z)$ and $R_S[k](x)$ will be sifted via a set $S$, by defining 
\[Q_S(x,y,z) := \sum_{i\in S} x^{B[i] - n^\alpha B'[i]}y^{B'[i] \bmod p}z^{i},\] and 
 \[R_S[k](x):= \sum_{(i,k-i) \in T_b, k-i \in S}  x^{A[i]-n^\alpha A'[i] + B[k-i] - n^\alpha B'[k-i]} ,\]where initially $S =[n-1].$
 In the case that at some point throughout the algorithm, we have that $Z[k] = 1$ (which is dependent on $S$), then we can find a witness by recomputing $Z[k]$, but with the changed polynomials
\[P'(x,y,z):= \sum_{i \in [n-1]}  i \cdot \left( x^{A[i] -n^\alpha A'[i]}y^{A'[i] \mod p}z^{i} \right), \] 
and 
\[R'_S[k](x):= \sum_{(i,k-i) \in T_b, k-i \in S} i \cdot \left(x^{A[i]-n^\alpha A'[i] + B[k-i] - n^\alpha B'[k-i]} \right).\] 
In the following, let $c := \lceil \log \log n + 9 \rceil $, $ \alpha := \frac{8}{2^c}$ and $t = \lceil 1+ 3\log_{4/3}(n) \rceil.$
\begin{algorithm}
\caption{Witness reporting for monotone bounded min-plus convolution.}
\label{alg:witness_report}
\begin{algorithmic}
\STATE{Let $p$ be a fixed prime number chosen as in Section \ref{sec:error_correction}.}
\STATE{Compute the bounded monotone min-plus convolution of $A$ and $B$ as described before.}
\FOR{$b \in \{0,1\}$}
\STATE{Let $s_b$ be the resulting array as defined in Equation \ref{eq:sbk}}
\WHILE{there exists an index $k$ with no witness}
\STATE{Let $L$ be the set of indices $k \in [2n-2]$ with no known witness.}
\STATE{$S_1 \gets \{0, \dots n-1\}$}
\STATE{$T(x,y,z) \gets P(x,y,z) \cdot  Q_{S_1}(x,y,z) $}
\FOR{$k \in [2n-2]$}
\STATE{$C_{1}[k](x) := \sum_{} \lambda x^c, \text{ where }  \lambda x^cy^{C'[k]+b \mod p} z^k  
                                  \text{ is a monomial in } T(x,y,z).$ }
\ENDFOR
\FOR{$k \in[2n-2]$}
\STATE{$Z_{1}[k] \gets \lambda \text{ where } \lambda \text{ is the coefficient of } x^{s_b[k]} \text{ in }C_1[k](x) - R_{S_{1}}[k](x)$}
\ENDFOR
\FOR{$ i = 1, \dots, t$} 
\STATE{Let $L'$ denote set of all $k \in [2n-2]$ where $Z_i$ has a non-zero coefficient at most $c$.}
\STATE{Find witnesses for all indices $k$ in $L'$ and mark those in $L$}
\STATE{$S_{i+1} \gets$ \emph{good} set}
\STATE{$T_{i+1}(x,y,z) \gets P(x,y,z) \cdot  Q_{S_{i+1}}(x,y,z) $}
\FOR{$k \in [2n-2]$}
\STATE{$C_{i+1}[k] := \sum_{} \lambda x^c, \text{ where }  \lambda x^cy^{C'[k]+b \mod p} z^k  
                                   \text{ is a monomial in } T(x,y,z).$ }
\ENDFOR
\FOR{$k \in[2n-2]$}
\STATE{$Z_{i+1}[k] \gets \lambda \text{ where } \lambda \text{ is the coefficient of } x^{s_b[k]} \text{ in }C_{i+1}[k](x) - R_{S_{i+1}}[k](x)$}
\ENDFOR

\ENDFOR
\ENDWHILE
\ENDFOR
\RETURN witnesses
\end{algorithmic}
\end{algorithm}
We will closely follow the correctness proof from Alon and Naor \cite{AlonN96}.
We call $S_{i+1} \subseteq S_i$ a good set if the following conditions hold:
\begin{itemize}
    \item The sum of the non-zero entries $Z_{i+1}$ is at most $3/4$-th of the entries in $Z_{i}$ (This leads to the fact that the algorithm terminates in at most $1+3\log_{4/3}(n)$ many steps).
    \item The fraction of entries in $L$ that are $0$ in $Z^{i+1}$ but were not in $L'$ is at most $\alpha.$
\end{itemize}
We find a good set by removing each element in $S_{i+1}$ with probability 1/2, repeating this process until a good set is found. After $O(\log(n))$ iterations, with high probability, we find a good set. Below, we show that such a set exists with a non-zero probability.
Below, we also show how to find witnesses for all indices $k$ in $L'$.
\begin{claim}\label{claim:good}
The probability that the sum of the entries in $Z_{i+1}$ is at most $3/4$-th of the sum of the entries in $Z_i$ is at least $1/3$.
\end{claim}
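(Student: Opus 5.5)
We follow Alon and Naor's counting argument: compute the expected total witness count after sifting, then apply Markov's inequality.

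Write $\Sigma_i := \sum_k Z_i[k]$. By Observation~\ref{obs: fft_witness}, $Z_i[k]$ is exactly the number of true witnesses $(j,k-j)$ with $k-j\in S_i$, since pseudo-witnesses are cancelled by $R_{S_i}[k]$. Hence $\Sigma_i$ counts pairs $(j,\ell)$ with $\ell\in S_i$ that are true witnesses for some index $k=j+\ell$. Note that $\Sigma_i$ is a sum of nonnegative integers.

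\emph{Expectation.} The set $S_{i+1}$ keeps each element of $S_i$ independently with probability $1/2$. A pair $(j,\ell)$ counted in $\Sigma_i$ survives in $\Sigma_{i+1}$ exactly when $\ell\in S_{i+1}$. By linearity of expectation, $\mathbb{E}[\Sigma_{i+1}] = \Sigma_i/2$.

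\emph{Markov's inequality.} Since $\Sigma_{i+1}\ge 0$,
\[
\Pr\bigl[\Sigma_{i+1} > \tfrac34 \Sigma_i\bigr] \le \frac{\Sigma_i/2}{(3/4)\Sigma_i} = \frac23 .
\]
Therefore $\Sigma_{i+1}\le \tfrac34\Sigma_i$ holds with probability at least $1/3$. If $\Sigma_i=0$, the claim holds trivially.

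The main obstacle is not the probability calculation but the interpretation of ``entries'' in the claim. We must check that the $Z_i[k]$ are true counts, with no cancellation from pseudo-witnesses, so that they are nonnegative and decompose over $\ell\in S_i$. This requires restricting $R_{S}[k]$ to the same sifted set $S$ as $Q_S$, which is exactly how $R_S$ is defined. It also requires reading the coefficient at the fixed exponent $s_b[k]$, so that only genuine minimizing pairs are counted.

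If the claim is meant only over indices in $L$, or over entries not yet resolved, the same argument applies after restricting the sum to those indices. Pairs contributing to such a restricted sum still survive independently with probability $1/2$, so the expectation and Markov steps are unchanged.
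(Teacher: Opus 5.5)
Your proof is correct and follows the same route as the paper. Each counted true-witness pair survives exactly when its second index stays in the sifted set, so the expected total halves, and Markov's inequality with threshold $(3/4)\Sigma_i$ bounds the failure probability by $2/3$; your extra checks (that $Z_i[k]$ is a genuine nonnegative count because $R_S$ is restricted to the same $S$ as $Q_S$, and the separate handling of $\Sigma_i=0$) make explicit what the paper's two-line argument leaves implicit.
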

\begin{proof}
Every entry $i \in S$ is removed with probability $1/2$.  Thus the expected number of true positives also halves. Applying a Markov bound concludes the claim.
\end{proof}

\begin{claim}
The probability that a fixed entry in $Z_i$ which is at least $c$ drops to $0$ in step $i+1$ is at most $1/2^{c}$. 
\end{claim}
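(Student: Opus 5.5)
The claim is that an index $k$ with at least $c$ true witnesses before the sifting step ends up with $Z_{i+1}[k]=0$ with probability at most $1/2^{c}$. I will reduce this to one observation: with the sifted polynomials, $Z_{i+1}[k]$ equals the number of true witnesses $(j,k-j)$ whose second index survives, i.e.\ $k-j\in S_{i+1}$.

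\textbf{Step 1: interpret the count.} By Observation~\ref{obs: fft_witness}, applied with $Q_{S_{i+1}}$ and $R_{S_{i+1}}[k]$ in place of $Q$ and $R$, the coefficient of $x^{s_b[k]}$ in $C_{i+1}[k](x)-R_{S_{i+1}}[k](x)$ has the following meaning. It is the number of pairs $(j,k-j)$ with $k-j\in S_{i+1}$ that give exponent $s_b[k]$ and $y$-degree $C'[k]+b \bmod p$, minus those that are pseudo-witnesses. Both $Q_{S}$ and $R_S$ restrict the second index to $S$ in the same way, so the subtraction cancels exactly the surviving pseudo-witnesses. What remains is the number of true witnesses $(j,k-j)$ of $k$ (for the fixed $b$) with $k-j\in S_{i+1}$.

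\textbf{Step 2: distinct second indices.} The true witnesses of a fixed $k$ are determined by $j$, so their second indices $k-j$ are pairwise distinct. Hence the entry $Z_{i+1}[k]$ drops to $0$ exactly when every one of these second indices is removed when $S_{i+1}$ is formed from $S_i$. This uses $Z_i[k]$ for the same $k$, with the current witnesses all lying in $S_i$.

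\textbf{Step 3: independence.} Each element of $S_i$ is deleted independently with probability $1/2$. Write $m=Z_i[k]\ge c$ for the number of current witnesses. The probability that all $m$ distinct surviving second indices are deleted is therefore $2^{-m}\le 2^{-c}$.

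\textbf{Main obstacle.} The delicate point is the accounting in Step 1. We must check that the pseudo-witness correction $R_{S_{i+1}}[k]$ is sifted by exactly the same set as $Q_{S_{i+1}}$, so that no pseudo-witness contributions leak in. We must also check that the pairs counted at exponent $s_b[k]$ are precisely the true witnesses, so that the coefficient is a plain count of surviving witnesses.

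The claim is about a single sampling round. The analysis therefore ignores that good sets are found by resampling. It also ignores that conditioning on goodness could change the bound; that effect is handled later by the union and Markov argument, as in Alon and Naor.
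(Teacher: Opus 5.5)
Your proposal is correct and follows the same argument as the paper. An entry with at least $c$ true witnesses can reach $0$ only if every one of those witnesses is sifted out, each independently with probability $1/2$, which gives probability at most $2^{-c}$. Your Step 2 observation, that the second indices $k-j$ are pairwise distinct for fixed $k$, justifies the independence that the paper merely assumes as a ``worst case''. Your Step 1 makes explicit why $Z_{i+1}[k]$ counts exactly the surviving true witnesses: $Q_{S}$ and $R_S[k]$ are sifted by the same set $S$.
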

\begin{proof}
We assume the worse case that we have independence on the indices and that we only pick true witnesses.
We need to have removed at least $c$ elements from $S$ for the number of true witnesses to go down to 0, which happens with probability $1/2$ each.  This leads to a probability of at most $1/2^{c}$
\end{proof}

\begin{claim}\label{claim:bad}
The probability that more than a fraction of $ \alpha$, in $Z_i$ which had value at least $c$ drop down to $0$ in $Z_{i+1}$ in step $i+1$, is at most $2^{-c} \alpha^{-1} = 1/8.$
\end{claim}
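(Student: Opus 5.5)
This is a direct Markov-inequality argument on top of the previous claim. Let $H$ be the set of indices $k$ whose entry in $Z_i$ has value at least $c$, and let $D\subseteq H$ be the (random) set of those indices whose entry drops to $0$ in $Z_{i+1}$. The event to bound is $|D| > \alpha |H|$.

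First, I write $|D| = \sum_{k\in H} \mathbf{1}[k\in D]$. The previous claim gives $\Pr[k\in D]\le 2^{-c}$ for each fixed $k \in H$. Linearity of expectation then yields $\mathbb{E}[|D|]\le 2^{-c}|H|$. This step needs no independence between the indices, which is useful because different $k$ share the sampled set $S_{i+1}$.

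Second, I apply Markov's inequality to the nonnegative variable $|D|$:
\[
\Pr\bigl[|D| > \alpha |H|\bigr] \le \frac{\mathbb{E}[|D|]}{\alpha |H|} \le \frac{2^{-c}}{\alpha}.
\]
If $H=\emptyset$, the event is empty and the bound holds trivially.

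Third, I substitute the parameter choice $\alpha = 8/2^c$, which gives $2^{-c}\alpha^{-1} = 1/8$, as claimed.

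The only delicate point is whether the previous claim's per-index bound really holds for every $k$ with value at least $c$. That bound relies on the fact that at least $c$ witnesses must all be removed, each independently with probability $1/2$. The argument above needs nothing beyond that per-index bound, so I would simply invoke it.

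For context on how this is used: combining this bound with Claim~\ref{claim:good} via a union bound, a random halving is good with probability at least $1/3-1/8>0$. This justifies finding a good set after $O(\log n)$ trials with high probability.
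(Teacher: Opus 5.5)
Your proof is correct and is the paper's argument: the paper simply says the claim follows from the per-index bound $2^{-c}$ of the previous claim together with a Markov bound. You spell out the linearity-of-expectation step $\mathbb{E}[|D|]\le 2^{-c}|H|$, which needs no independence across indices, and the substitution $\alpha=8/2^c$ that the paper leaves implicit.
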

\begin{proof}
This follows by the probability above and applying a Markov bound.
\end{proof}

\begin{lemma}
During the algorithm $S_{i+1}$ is good with probability at least 1/6.
\end{lemma}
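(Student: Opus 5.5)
The plan is to combine Claim~\ref{claim:good} and Claim~\ref{claim:bad} with a union bound. Fix the current stage $i$ and the current set $S_i$. The random subset $S_{i+1}\subseteq S_i$ is obtained by keeping each element independently with probability $1/2$. Define two events:
\begin{itemize}
\item $E_1$: the sum of the entries of $Z_{i+1}$ is at most $3/4$ of the sum of the entries of $Z_i$;
\item $E_2$: at most an $\alpha$-fraction of the entries in $L$ that are not in $L'$ (that is, entries of $Z_i$ with value larger than $c$) drop to $0$ in $Z_{i+1}$.
\end{itemize}
By the definition of a good set, $S_{i+1}$ is good exactly when $E_1\cap E_2$ holds.

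Claim~\ref{claim:good} gives $\Pr[E_1]\ge 1/3$. Claim~\ref{claim:bad} gives $\Pr[\overline{E_2}]\le 2^{-c}\alpha^{-1}=1/8$, using $\alpha=8/2^c$. The union bound then yields
\[
\Pr[E_1\cap E_2]\;\ge\;\Pr[E_1]-\Pr[\overline{E_2}]\;\ge\;\tfrac13-\tfrac18\;=\;\tfrac{5}{24}\;\ge\;\tfrac16 .
\]
No independence between $E_1$ and $E_2$ is needed, which matters because both events depend on the same random choices.

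The main obstacle is making sure the two claims are stated for the same quantities used in the definition of ``good''. Two points need checking.

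First, the normalisation in $E_2$. The fraction in $E_2$ is taken over the entries of $L$ with value exceeding $c$, which are exactly those eligible under the per-entry bound $2^{-c}$ in the preceding claim. Markov's inequality applied to the number of such entries that drop to $0$ gives the $1/8$ bound. If Claim~\ref{claim:bad} were instead normalised by all of $L$, I would note that this only weakens the event $\overline{E_2}$, so the bound still applies.

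Second, the constant in Claim~\ref{claim:good}. Its Markov argument gives $\Pr[\text{sum}>\tfrac34\cdot\text{old}]\le \tfrac{1/2}{3/4}=\tfrac23$, hence $\Pr[E_1]\ge 1/3$. I would verify that the expectation of the new sum really is at most half the old one. Each true witness $(i,k-i)$ survives exactly when $k-i\in S_{i+1}$, which happens with probability $1/2$. This holds for $Z$ counting true witnesses by Observation~\ref{obs: fft_witness}, since the pseudo-witnesses are removed consistently through $R_{S}$.

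Given these checks, the lemma follows. As a consequence, repeating the sampling $O(\log n)$ times finds a good set with high probability.
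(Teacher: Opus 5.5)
Your argument is correct and is the same as the paper's: both combine Claim~\ref{claim:good} and Claim~\ref{claim:bad} through $\Pr[E_1\cap E_2]\ge\Pr[E_1]-\Pr[\overline{E_2}]\ge \frac13-\frac18>\frac16$, which needs no independence between the two events. Your extra checks on the Markov constant and on how the $\alpha$-fraction is normalised only spell out what the paper's one-line proof leaves implicit. Markov gives the $1/8$ bound under either normalisation, since the expected number of dropped entries is at most $2^{-c}|L\setminus L'|\le 2^{-c}|L|$.
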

\begin{proof}
We take the probability of the good event in Claim \ref{claim:good} and the bad event in Claim \ref{claim:bad}, and subtract that is $1/3- 1/8 > 1/6.$
\end{proof}

\begin{lemma}\label{lem: pos_prob}
For each index $k$ in the entry $L'$ there is a positive probability that $Z_{i+1}[k]=1.$
\end{lemma}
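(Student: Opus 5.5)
We follow the corresponding step of Alon and Naor~\cite{AlonN96}. Fix an index $k \in L'$ and let $w := Z_i[k]$, so $1 \le w \le c$. By Observation~\ref{obs: fft_witness}, $w$ counts the true witnesses $(i_1,k-i_1),\dots,(i_w,k-i_w)$ of index $k$ whose $B$-index $k-i_r$ survives in the current set $S_i$. The pseudo-witnesses $T_b$ are cancelled by subtracting $R_{S_i}[k]$. Both $C_{i+1}[k]$ and $R_{S_{i+1}}[k]$ are obtained from $Q_{S_{i+1}}$ and $R_{S_{i+1}}$ by keeping exactly the terms whose $B$-index lies in $S_{i+1}$. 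Hence $Z_{i+1}[k]$ equals the number of true witnesses $(i_r,k-i_r)$ with $k-i_r \in S_{i+1}$. This is the first thing to verify: the restriction to $S$ acts on both polynomials consistently, so no pseudo-witness term survives in the difference.

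Next we fix one witness, say $j_1 := k-i_1$. The $B$-indices $j_r := k-i_r$ are pairwise distinct, since distinct witnesses have distinct $i_r$ for the same $k$. Each element of $S_i$ is kept in $S_{i+1}$ independently with probability $1/2$. So the event ``$j_1$ is kept and $j_2,\dots,j_w$ are removed'' has probability exactly $2^{-w} \ge 2^{-c} > 0$. On this event $Z_{i+1}[k]=1$, which already proves positivity. Since $c=\lceil \log\log n+9\rceil$, this probability is moreover at least $1/(2^{10}\log n)$, which is useful quantitatively later: repeating the sampling $O(\log^2 n)$ times finds a unique witness for every $k\in L'$ with high probability.

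The main obstacle is that $S_{i+1}$ is not an arbitrary random subset but is conditioned to be \emph{good}. Positivity must therefore survive this conditioning. We plan to handle this in one of two ways. The first is to decouple the sampling: for $k\in L'$, generate separate independent random subsets only for the purpose of isolating witnesses (as Alon and Naor do when they treat indices with at most $c$ witnesses). The second is to argue directly. A good set occurs with probability at least $1/6$. The specific isolating event has probability $2^{-w}$, and when it is intersected with the goodness conditions it keeps positive probability, because the goodness conditions only constrain aggregate sums and the choice on the $w\le c$ coordinates $j_1,\dots,j_w$ can be made while the remaining coordinates are still drawn freely. For the direct argument, we would first try to bound $\Pr[\text{good}\mid \text{isolating event}]$ from below by repeating the Markov arguments of Claims~\ref{claim:good} and~\ref{claim:bad} conditionally. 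If that does not work, we fall back on the decoupled sampling, for which the bound $2^{-w}>0$ suffices immediately.
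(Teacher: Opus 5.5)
Your argument is essentially the paper's: the event $Z_{i+1}[k]=1$ depends only on the at most $c$ independent keep/remove coins of the witnesses' $B$-indices. The paper only sharpens your bound $2^{-w}\ge 2^{-c}$ to $w\cdot 2^{-w}\ge c/2^{c}$, by letting any one of the $w$ witnesses be the unique survivor. The paper silently treats $S_{i+1}$ as an unconditioned random halving, so your decoupled-sampling reading is the one that matches its proof. Your sketched direct route through conditioning on goodness is not justified as written and is not needed: keeping a single $B$-index can add up to $n$ to $\sum_k Z_{i+1}[k]$, so the unconditional Markov bounds do not carry over directly.
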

\begin{proof}
This event depends on at most $c$ variables and is at least $c/2^c = \Theta(\frac{\log \log(n)}{ \log(n)})$.
\end{proof}

\begin{lemma}
The algorithm is sound i.e. if the algorithm reports a witness, it is a correct one.
\end{lemma}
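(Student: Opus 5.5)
The plan is to show that any index $i$ the algorithm returns for $k$ really satisfies $A[i]+B[k-i]=C[k]$. A witness is only ever reported for $k\in L'$ at a stage where $Z_j[k]=1$. In that case $i$ is read off as the coefficient of $x^{s_b[k]}$ in the weighted difference $C'_j[k](x)-R'_{S_j}[k](x)$, computed from $P'$ and $R'_{S_j}$. So it suffices to prove two things. First, $Z_j[k]$ counts exactly the pairs $(i,k-i)$ with $k-i\in S_j$ that are true witnesses for the value $n^\alpha(C'[k]+b)+s_b[k]$. Second, when that count is $1$, the weighted coefficient equals the index of the unique such pair.

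\textbf{Step 1: decompose the coefficient.} Expand the monomials of $T_j=P\cdot Q_{S_j}$ whose $y$-exponent matches $C'[k]+b \bmod p$ and whose $z$-exponent is $k$. By construction, the coefficient of $x^{s_b[k]}$ in $C_j[k](x)$ equals the number of pairs $(i,k-i)$ with $k-i\in S_j$ that satisfy
\[
A'[i]+B'[k-i]\equiv C'[k]+b \pmod p
\quad\text{and}\quad
A[i]-n^\alpha A'[i]+B[k-i]-n^\alpha B'[k-i]=s_b[k].
\]
Split these pairs by whether the congruence holds as an equality. The pairs where it does not are exactly the pseudo-witnesses in $T_b$ with $k-i\in S_j$. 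These contribute precisely the coefficient of $x^{s_b[k]}$ in $R_{S_j}[k](x)$, because $R_{S_j}$ is restricted to the same sifted set $S_j$. After subtracting, only pairs with $A'[i]+B'[k-i]=C'[k]+b$ remain. For these, adding $n^\alpha$ times the equality to the $x$-exponent condition gives $A[i]+B[k-i]=n^\alpha(C'[k]+b)+s_b[k]$. When $b$ is the minimizing choice, this value equals $C[k]$ by the reconstruction formula of Section~\ref{sec:error_correction}.

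\textbf{Step 2: identify the index.} Multiplying each monomial of $P$ by its index $i$ multiplies each pair's contribution in $C'_j[k]$ by $i$. Multiplying $R$'s monomials the same way does the same for each pseudo-witness. The same cancellation therefore leaves $\sum i$ over the true witnesses. When $Z_j[k]=1$, this sum is exactly the unique witness index. As a safeguard, I would also verify $A[i]+B[k-i]=C[k]$ directly in $O(1)$ time before reporting, since this makes soundness unconditional.

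\textbf{Main obstacle.} The delicate point is that $Z_j[k]$ counts with signs, and pseudo-witnesses must cancel exactly. This requires $R_{S_j}$ to range over exactly $T_b$ intersected with $\{k-i\in S_j\}$, matching the sifting applied to $Q$, together with the fact that all coefficients are nonnegative counts. I would check this carefully, noting that $T_b$ is defined relative to $C'[k]+b$ and not $C'[k]$. I would also handle the case where the chosen $b$ is not the minimizer by reporting only for the minimizing $b$.
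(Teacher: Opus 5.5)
Your proposal is correct and follows the paper's argument. $Z_j[k]$ is a nonnegative count in which the pseudo-witnesses from $T_b$, restricted to the same sifted set $S_j$, cancel; once it equals $1$, the index-weighted polynomials of Observation~\ref{obs: fft_witness} read off the unique pair, and the minimizing $b$ is chosen at the end. You simply make explicit the cancellation that the paper only asserts.

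One correction to your framing: $Z_j[k]$ counts only the pairs with $A'[i]+B'[k-i]=C'[k]+b$ and remainder exactly $s_b[k]$, not all witnesses of the value $n^\alpha(C'[k]+b)+s_b[k]$, since a witness can split as $C'[k]+b\pm 1$ with remainder $s_b[k]\mp n^\alpha$. Your Step~1 still proves exactly the inclusion that soundness needs, namely that every counted pair is a witness, so the argument is unaffected.
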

\begin{proof}
The invariant of the algorithm is that $Z_i[k]$ counts the number of true witnesses for index $k$ during round $i$, where in the $i$-th round we only keep the indices $j$, with $j \in S$ (as $C[k](x)$ counts all possible witnesses and $R_S[k](x)$ the pseudo-witnesses).
This count, will always be non-negative, and will reduce after each successive round monotonically. After the counts are not greater than $c$, by a polylogarithmic number of repetitions, by Lemma \ref{lem: pos_prob} we can find all witnesses for the indices which are in $L'$ by leveraging that if an iteration leads to the count reaching $1$, we can apply Observation \ref{obs: fft_witness}, to find a true witness. 
\end{proof}
\begin{lemma}
All witnesses are found with high probability after $O(\log(n))$ rounds of the algorithm.
\end{lemma}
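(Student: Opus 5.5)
The plan is to follow the Alon--Naor analysis: every index $k$ with a true witness must eventually have its count $Z_i[k]$ brought into the range $[1,c]$, at which point Lemma~\ref{lem: pos_prob} and Observation~\ref{obs: fft_witness} let us extract a witness. I would organize the argument per outer round (one pass of the \textbf{while} loop, for a fixed $b\in\{0,1\}$) and show that each round resolves a constant fraction of the still-unresolved indices in $L$ with constant probability. Repeating $O(\log n)$ rounds then empties $L$ with high probability, by a standard Chernoff or geometric-decay argument.

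\textbf{Step 1: termination of the inner loop.} Fix one outer round. By Claim~\ref{claim:good}, each sifting step produces a good set with probability at least $1/6$, so with $O(\log n)$ resamplings per step a good set is found with high probability. Goodness forces the total count $\sum_k Z_i[k]$, which is initially at most $n^2$, to shrink by a factor $3/4$ per step. Hence after $t=\lceil 1+3\log_{4/3} n\rceil$ steps every count has reached $0$, and a union bound over the $t$ steps keeps the failure probability polynomially small. Along the way, each index $k$ whose count is positive must at some step enter the range $[1,c]$, since counts are integers and can only leave $[c+1,\infty)$ by dropping into $[1,c]$ or jumping directly to $0$.

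\textbf{Step 2: bounding the lost indices.} An index is lost in step $i$ only if its count jumps from above $c$ straight to $0$. The second goodness condition caps this at an $\alpha$-fraction of $L$ per step, so over $t$ steps at most $t\alpha|L|$ indices are lost. With $\alpha=8/2^c$ and $c\ge\log\log n+9$, we get $t\alpha=O(\log n)\cdot 8/(2^9\log n)$, which is a constant below $1/2$. So at least half of $L$ passes through $L'$ at some step.

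\textbf{Step 3: extracting witnesses for $L'$.} For each $k\in L'$, I would repeat a random half-sifting of the current $S_i$ polylogarithmically many times. By Lemma~\ref{lem: pos_prob}, each repetition makes $Z[k]=1$ with probability $\Omega(1/\log n)$. Then $O(\log^2 n)$ repetitions succeed with high probability for all $k$ simultaneously by a union bound, and soundness gives correct witnesses.

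Combining the steps, each outer round shrinks $|L|$ by a factor at least $1/2$ with high probability, so $O(\log n)$ rounds suffice. The main obstacle is Step 2: making the constant $t\alpha<1$ rigorous given the exact choice of $c$ and the base of the logarithm, and checking that losses are counted relative to the current $L$ rather than the original one. I would first try bounding the losses relative to the current $L$ and, if needed, enlarge $c$ by an additive constant.
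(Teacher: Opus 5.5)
Your proposal is correct and follows the paper's Alon--Naor-style argument. At most an $\alpha$-fraction of $L$ is lost in each of the $t$ inner iterations, and $t\alpha<1/2$, so at least half of $L$ receives a witness per round and $O(\log n)$ rounds empty $L$. You fill in what the paper leaves implicit (termination from the initial total count $\le n^2$, and the $O(\log^2 n)$ isolation repetitions for $L'$), and the step you flag as the main obstacle goes through without enlarging $c$: with base-2 logarithms, $2^c\ge 512\log n$ gives $t\alpha\le (2+3\log_{4/3} n)/(64\log n)<1/2$.
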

\begin{proof}
By the definition of a good set $S$, at most a fraction $\alpha$ of the indices $k$ will drop to $0$ (i.e. will not find a witness) in each round of the innermost loop going from 1 to t. As in the next iteration the sum of the entries in $Z_{i+1}$ is at most $3/4$-th of the entries, the inner most loop terminates in at most $t=\lceil 1+ 3\log_{4/3}(n) \rceil$ iterations. Thus, we find witnesses for a fraction of $1-t\alpha > 1/2$ many indices $k$. Repeating $O(\log(n))$ rounds of the algorithm boosts the probability to find witnesses for all indices $k$. 
\end{proof}
\begin{lemma}
The algorithm runs in expected time $\tilde{O}(n^{1.6}).$
\end{lemma}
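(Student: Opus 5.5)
We bound the running time by counting the polynomial multiplications the witness-reporting procedure performs, and charging everything else to the costs already established for the CDXZ algorithm in Section~\ref{sec:Chi_Algorithm}.

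\textbf{Preprocessing.} Computing $C'$, the arrays $s_b$, and the set $T_b$ costs $\mathcal{O}(n^{2-2\alpha}) + \mathcal{O}(n^{2-\beta}\log n) + \mathcal{O}(n^{1+\alpha+\beta}\log n)$ in expectation. The expected bound on $|T_b|$ comes from Lemma~\ref{lem:Tb_size}. Here $\alpha,\beta$ denote the CDXZ parameters $0.2, 0.4$, not the $\alpha = 8/2^c$ of the witness section. This gives $\tilde{O}(n^{1.6})$ expected time.

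\textbf{One evaluation of $Z$ for a given set $S$.} We form $Q_S$ and compute $P\cdot Q_S$ by FFT on polynomials of size $\mathcal{O}(n^{1+\alpha+\beta})$, which costs $\mathcal{O}(n^{1.6}\log n)$. Extracting all $C[k](x)$ and reading the coefficient at $x^{s_b[k]}$ is linear in the product size. The pseudowitness polynomials $R_S[k]$ need no multiplication. We evaluate them only at the single exponent $s_b[k]$ by streaming over $T_b$: we keep the pairs with $k-i\in S$ and add $1$ (or $i$, for the index-weighted variant) whenever the exponent equals $s_b[k]$. This costs $\mathcal{O}(|T_b|)$ per evaluation, so one evaluation costs $\tilde{O}(n^{1.6})$ in expectation. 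The weighted variants $P'$ and $R'_S$ used in Observation~\ref{obs: fft_witness} have the same cost. Coefficients are bounded by $\mathrm{poly}(n)$, so word-RAM FFT arithmetic suffices.

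\textbf{Number of evaluations.} There are two values of $b$. The outer while-loop runs $\mathcal{O}(\log n)$ rounds with high probability, since each round resolves a constant fraction of the remaining indices. Each round has $t = \mathcal{O}(\log n)$ inner iterations. In each inner iteration, finding a good set takes an expected $\mathcal{O}(1)$ trials, because a sample is good with probability at least $1/6$. Each trial needs one evaluation of $Z_{i+1}$ to test the conditions.

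The step I expect to be the main obstacle is \textbf{finding witnesses for $L'$}. Each $k\in L'$ has at most $c=\mathcal{O}(\log\log n)$ true witnesses. By Lemma~\ref{lem: pos_prob}, a random halving of $S$ isolates exactly one witness with probability $\Omega(c/2^c)=\Omega(1/\mathrm{polylog}\, n)$. We repeat random subsamplings $\mathrm{polylog}(n)$ times, and each repetition costs two evaluations: a count, and an index-weighted read-off. All $k\in L'$ are handled simultaneously within the same polynomial products. A Chernoff/union bound then gives success for every $k$ with high probability using $\mathrm{polylog}(n)$ evaluations.

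Multiplying the per-evaluation cost by the $\mathrm{polylog}(n)$ total number of evaluations yields expected time $\tilde{O}(n^{1.6})$.
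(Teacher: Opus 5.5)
Your proposal is correct and follows essentially the same argument as the paper: each FFT-based evaluation of the counts $Z$ costs $\tilde{O}(n^{1.6})$, and only polylogarithmically many evaluations occur (two values of $b$, $O(\log n)$ outer rounds, $t=O(\log n)$ inner iterations, expected $O(1)$ good-set trials, and polylogarithmically many subsamplings to isolate witnesses for $L'$). Your version is more explicit than the paper's terse proof, for instance in charging $O(|T_b|)$ per evaluation of $R_S[k]$ at $x^{s_b[k]}$ and in separating the two uses of $\alpha$.
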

\begin{proof}
Every FFT call runs in time $O(n^{1.6}\log n )$, the inner while loop runs at most $O(\log n )$ many times, as the sum of the entries in $Z_i$ drops in every iteration by at least 3/4-th. Finding all witnesses for the elements in $L'$ and also searching for a good set $S_{i+1}$ requires $\tilde{O}(n^{1.6})$ time.  We need $O(\log n )$ many rounds of the algorithm to boost the probability to find witnesses for all indices $k$, as argued previously.
\end{proof}
We remark that our algorithm can be derandomized by using $c$-wise $\epsilon$-dependent distributions, by following precisely the exact approach as Alon and Naor, see \cite{AlonN96}[Section 2.3].
Finally, we also remark that it is plausible that this idea can be extended for reporting witnesses for the $\tilde{O}(n^{1.5})$-algorithm from \cite{ChiDX022}, as it follows the same style of counting true witnesses.

\section{Additive Pareto sum approximation without witnesses} \label{sec:without_witnesses}
We call a set $\tilde{S}$ a \emph{weak} $\Delta$-Pareto sum approximation for sets $P,Q$ if it fulfills the following 3 criteria
\begin{itemize}
\item for all $p \in P, q\in Q$, there exists $ \tilde{s} \in \tilde{S}$  with $\tilde{s}\leq p+q+(\Delta,\Delta),$
\item for all $ \tilde{s} \in \tilde{S}$ there  exist $p \in P$ and $q \in Q$ such that $p+q \leq \tilde{s} \leq p+q+(\Delta,\Delta),$
\item $\tilde{S}$ is a Pareto set.
\end{itemize}
Thus in comparison to the usual definition only the second condition is relaxed. We describe Algorithm \ref{alg:weak_add_approx}, which computes a \emph{weak} $2t$-Pareto sum approximation without needing witnesses.
\begin{algorithm}
\caption{WeakApproximateParetoSum($P,Q,t$)}
\label{alg:weak_add_approx}
\begin{algorithmic}
\STATE{$\tilde{P} \gets \left \{ \left(\lceil \frac{p.x}{t} \rceil , \lceil \frac{p.y}{t} \rceil \right)\mid p\in P \right\}$}
\STATE{$\tilde{Q} \gets \left \{ \left(\lceil \frac{q.x}{t} \rceil , \lceil \frac{q.y}{t} \rceil \right)\mid q\in Q \right\}$}
\STATE{$S' \gets PS(\tilde{P},\tilde{Q})$}
\RETURN $\tilde{S} \gets$ Pareto front of $\{ts' \mid s \in S' \}$
\end{algorithmic}
\end{algorithm}
\begin{lemma}
Algorithm \ref{alg:weak_add_approx}, with input $(P,Q,t)$ leads to a \emph{weak} Pareto sum approximation, set $\tilde{S}$, "i.e.", fulfills the following 3 conditions: 
\begin{itemize}
\item for all $p \in P, q\in Q$, there exists $ \tilde{s} \in \tilde{S}$  with $\tilde{s}\leq p+q+2(t,t)$
\item for all $ \tilde{s} \in \tilde{S}$ there  exist $p \in P$ and $q \in Q$ such that $p+q \leq \tilde{s} \leq p+q+2(t,t)$
\item $\tilde{S}$ is a Pareto set.
\end{itemize}
\end{lemma}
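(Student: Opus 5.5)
The plan is to work with the ceiling rounding $\tilde p=(\lceil p.x/t\rceil,\lceil p.y/t\rceil)$ and its two-sided bounds
\[ p \le t\tilde p \le p+(t,t) \qquad\text{for every real point } p. \]
Each of the three conditions then follows by chaining these bounds through one Pareto sum $S'=PS(\tilde P,\tilde Q)$. Here $S'$ is computed exactly from its defining properties: it dominates $\tilde P+\tilde Q$, it is contained in $\tilde P+\tilde Q$, and it is a Pareto set. No witnesses are needed. If $\tilde P$ or $\tilde Q$ has dominated points after rounding, we first take their Pareto fronts; this changes neither the Pareto sum nor the bounds below.

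\textbf{Condition 1.} Fix $p\in P$ and $q\in Q$. Since $S'$ dominates $\tilde P+\tilde Q$, there is some $s'\in S'$ with $s'\le \tilde p+\tilde q$. Multiplying by $t>0$ preserves the coordinatewise order, so
\[ ts'\le t\tilde p+t\tilde q\le p+q+2(t,t). \]
Passing from $\{ts'\mid s'\in S'\}$ to its Pareto front keeps this property. The argument is the same as in Lemma~\ref{lem:bounding_approx}: if $ts'$ is removed, it is because some remaining point $z$ satisfies $z\le ts'$, and then $z\le p+q+2(t,t)$ as well.

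\textbf{Condition 2.} Let $\tilde s\in\tilde S$. Then $\tilde s=ts'$ for some $s'\in S'\subseteq\tilde P+\tilde Q$, so $s'=\tilde p+\tilde q$ for some rounded points. Choose any preimages $p\in P$ and $q\in Q$ of $\tilde p$ and $\tilde q$. Applying the lower and upper rounding bounds to each gives
\[ p+q\le t\tilde p+t\tilde q=\tilde s\le p+q+2(t,t), \]
which is exactly the required sandwich. This is the step the ceiling is designed for, since it makes the scaled-up point sit above a true sum rather than below it. The only subtlety I expect is making sure the preimages exist. They do, because every point of $\tilde P$, including those kept after taking Pareto fronts, is by construction the rounding of some $p\in P$.

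\textbf{Condition 3.} This holds immediately, since $\tilde S$ is returned as a Pareto front.

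I do not anticipate a real obstacle; the main care needed is in the bookkeeping for Condition 1 after filtering, and for Condition 2 in tracking preimages through the optional Pareto-front step.
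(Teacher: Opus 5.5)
Your proposal is correct and follows the paper's argument essentially step for step. Both proofs use the ceiling bounds $p \le t\tilde p \le p+(t,t)$ and get Condition 1 from $S'$ dominating $\tilde P+\tilde Q$. Condition 2 comes from $S'\subseteq \tilde P+\tilde Q$ together with preimages of the rounded points, and Condition 3 from the final filtering. Your extra bookkeeping (that the final Pareto-front step preserves Condition 1, and that pre-filtering $\tilde P,\tilde Q$ is harmless) fills in details the paper leaves implicit. In fact $\{ts' \mid s'\in S'\}$ is already a Pareto set, so that last filter removes nothing.
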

\begin{proof}
\begin{itemize}
\item Let $p \in P$ and $q \in Q$, then there exists $s' \in S'$ with $s' \leq \tilde{p}+\tilde{q}$ (with $\tilde{p}=\left(\lceil \frac{p.x}{t} \rceil , \lceil \frac{p.y}{t} \rceil \right)$ and $ \tilde{q}=  \left(\lceil \frac{q.x}{t} \rceil , \lceil \frac{q.y}{t} \rceil \right)$) and 
$\tilde{s} =t s'.$ 
Then we have
\begin{align*}
\tilde{s}=ts' &\leq t\left(\lceil \frac{p.x}{t} \rceil , \lceil \frac{p.y}{t} \rceil \right) + t\left(\lceil \frac{q.x}{t} \rceil , \lceil \frac{q.y}{t} \rceil \right)\\
& \leq \left(t (\frac{p.x}{t} +1) , t(\frac{p.y}{t} +1) \right) + \left(t (\frac{q.x}{t} +1) , t(\frac{q.y}{t} +1) \right)\\
&\leq p+q +2(t,t).
\end{align*}
\item Let $\tilde{s} \in \tilde{S}$, then $\tilde{s}=ts'=t(\tilde{p} + \tilde{q})$ for some $ \tilde{p} \in \tilde{P}, \tilde{q} \in \tilde{Q}$.
Moreover $\tilde{p} = \left(\lceil \frac{p.x}{t} \rceil , \lceil \frac{p.y}{t} \rceil \right)$ and $\tilde{q} = \left(\lceil \frac{q.x}{t} \rceil , \lceil \frac{q.y}{t} \rceil \right) $ for some $q \in Q, p \in P$.
Observe that 
\begin{align*}
p+q &\leq \left(t\lceil \frac{p.x}{t} \rceil , t\lceil \frac{p.y}{t} \rceil \right) + \left(t\lceil \frac{q.x}{t} \rceil , t\lceil \frac{q.y}{t} \rceil \right)\\ 
& =t(\tilde{p}+ \tilde{q}) = \tilde{s}\\ 
&\leq \left(t (\frac{p.x}{t} +1) , t(\frac{p.y}{t} +1) \right) + \left(t (\frac{q.x}{t} +1) , t(\frac{q.y}{t} +1) \right)\\
& = p+q+2(t,t).
\end{align*}
\item Immediate by the last line of the algorithm.
\end{itemize}
\end{proof}
By making use of the $\tilde{O}(n^{1.5})$-algorithm of \cite{ChiDX022} and the above lemma, we get:
\begin{corollary}
Given sets $P,Q$ of $n$ points in $[0,W]^2$ and $\epsilon > 0$, we can compute a \emph{weak} $\epsilon W$-approximation of the Pareto sum in time $\ \tilde{\mathcal{O}} \left( n+ (1/\epsilon)^{1.5} \right)$. \label{thm: weak_epsW_runtime}
\end{corollary}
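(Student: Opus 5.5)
We would prove the corollary by running Algorithm~\ref{alg:weak_add_approx} with the scaling factor $t := \epsilon W/2$, so that the preceding lemma gives a weak $2t = \epsilon W$-approximation. What remains is a running-time analysis in which the exact Pareto sum $PS(\tilde{P},\tilde{Q})$ of the rounded instance is computed by the reduction of Theorem~\ref{thm:reduction} combined with the $\tilde{O}(n^{1.5})$-time algorithm of~\cite{ChiDX022}.

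\textbf{Rounding and deduplication.} Since all coordinates lie in $[0,W]$, every rounded point of $\tilde{P}$ and $\tilde{Q}$ has integer coordinates in $\{0,\dots,\lceil 2/\epsilon\rceil\}$. Computing $\tilde{P},\tilde{Q}$ takes $O(n)$ time. The rounded sets may contain duplicates or dominated points, but this does not change the Minkowski sum's skyline. We therefore replace $\tilde{P}$ and $\tilde{Q}$ by their Pareto fronts. Because the coordinates are bounded integers, this can be done in $O(n + 1/\epsilon)$ time by bucketing on the $x$-coordinate, keeping the minimum $y$ per bucket, and doing a prefix-minimum sweep. Afterwards each set has $O(1/\epsilon)$ points with coordinates in $[W']^2$, where $W' = O(1/\epsilon)$.

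\textbf{Computing the rounded Pareto sum.} Applying Theorem~\ref{thm:reduction} turns the bounded Pareto sum instance into a bounded monotone min-plus convolution of two arrays of length $W'$ with entries in $[W']$, in $O(W')$ time. The entries are at most $W' = O(\text{length})$, so the $\tilde{O}(n^{1.5})$ algorithm of~\cite{ChiDX022} applies and runs in $\tilde{O}(W'^{1.5}) = \tilde{O}((1/\epsilon)^{1.5})$ time. The back-transformation to $S'$ costs $O(W')$.

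The key point, and the reason the weak notion is used, is that Algorithm~\ref{alg:weak_add_approx} only outputs the points $ts'$. It never needs to know which $p,q$ produce a given $s'$. Hence no witnesses are required, unlike in Corollary~\ref{thm: epsW_runtime}, and we can use the faster $1.5$-exponent algorithm.

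\textbf{Output and final bound.} Since $|S'| = O(W')$, scaling by $t$ and taking the Pareto front costs $O(W'\log W')$ time, or $O(W')$ time because $S'$ is already sorted by $x$. The total running time is therefore $\tilde{O}(n + (1/\epsilon)^{1.5})$.

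The main obstacle is checking that the rounded instance really meets the preconditions of Theorem~\ref{thm:reduction} and the convolution algorithm. The instance must consist of Pareto sets, which is why we remove dominated points after rounding. The value bound must be linear in the array length: both the length and the values are $\Theta(1/\epsilon)$, and a constant-factor padding handles any mismatch. Finally, the reduction's monotonicity-enforcing step may introduce only non-strict decrease, so we need the algorithm of~\cite{ChiDX022} to handle non-strictly monotone arrays; it does, since that algorithm is stated for monotone arrays.
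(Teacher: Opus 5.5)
Your proposal is correct and follows the paper's route. The paper likewise runs Algorithm~\ref{alg:weak_add_approx} with $t=\epsilon W/2$ and computes the rounded Pareto sum via Theorem~\ref{thm:reduction} and the $\tilde{O}(n^{1.5})$ algorithm of \cite{ChiDX022}, which suffices because the weak notion needs no witnesses. Your extra bookkeeping (Pareto-front cleanup after rounding, the length/value bound, non-strict monotonicity) just spells out details the paper leaves implicit.
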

Furthermore, we remark that the lower bound Theorem \ref{thm: lower_bound} still holds for the weak Pareto sum approximation. This follows by the same proof and noticing that for any $\tilde{s}\in \tilde{S}$, Condition 2) establishes that there exist $p \in P, q\in Q$ such that $p+q \leq \tilde{s} \leq p+q+(\Delta,\Delta)$. By construction of $P,Q$ (since all coordinates are multiples of $2\Delta$), we can thus safely round down to the closest multiple of $2\Delta$ to obtain a corresponding better point $\tilde{s}'\le \tilde{s}$ that is a point in
 $P+Q$. Thus, we obtain a set $\tilde{S}'\subseteq P+Q$ satisfying the conditions of Definition~\ref{def:approx} as before, and the rest of the proof of Theorem~\ref{thm: lower_bound} can be used verbatim.
 
 %However by construction of the point sets $P,Q$ in the proof of Theorem~\ref{thm: lower_bound}, there is no point $p+q\in P+Q$ such that a $p'+q'$Implying $p+q \leq \tilde{s}\leq p+q$. Again it turns out that $\tilde{S}$ is the true Pareto sum of $P+Q$ and the proof follows.

\end{document}